\NewDocumentCommand{\MeijerG}{smmmm}
 {
  \IfBooleanTF{#1}
   {
    \vic_meijerg:nnnnnn { #2 } { #3 } { #4 } { #5 } { small } { }
   }
   {
    \vic_meijerg:nnnnnn { #2 } { #3 } { #4 } { #5 } { } { \; }
   }
 }
\newcommand{\RN}[1]{%
	\textup{\uppercase\expandafter{\romannumeral#1}}%
}
\def\bfs{\boldsymbol}
\def\pa{\partial}
\def\Re{ \mathrm{Re}}
\def\Im{ \mathrm{Im}}
\def\NN{\mathcal{N}}
\def\SS{\mathcal{S}}
\def\C{\mathbb{C}}
\def\N{\mathbb{N}}
\def\R{\mathbb{R}}
\newcommand{\Pf}{{\textup{Pf}}}
\newcommand{\erfc}{\operatorname{erfc}}
\newcommand{\erf}{\operatorname{erf}}
\newcommand{\Li}{\operatorname{Li}}
\newcommand{\sgn}{\operatorname{sgn}}
\newcommand{\Var}{\operatorname{Var}}
\newcommand{\Cov}{\operatorname{Cov}}
\newcommand{\im}{\operatorname{Im}}
\theoremstyle{plain}
\newtheorem*{thm*}{Theorem}
\newtheorem{thm}{Theorem}[section]
\newtheorem{lem}[thm]{Lemma}
\newtheorem{cor}[thm]{Corollary}
\newtheorem{prop}[thm]{Proposition}
\newtheorem*{prop*}{Proposition}
\newtheorem*{lem*}{Lemma}
\newtheorem{rem}[thm]{Remark}
\newtheorem{ex}[thm]{Example}
\newtheorem{conj}[thm]{Conjecture}
\newtheorem{defi}[thm]{Definition}%[section]
\theoremstyle{definition}
\newtheorem*{eg*}{Example}
\newtheorem*{egs*}{Examples}
\newtheorem*{Q*}{Question}
\theoremstyle{remark}
\newtheorem*{rmk*}{Remark}
\newtheorem*{rmks*}{Remarks}
\newcommand{\abs}[1]{\lvert#1\rvert}
\numberwithin{equation}{section}
\begin{document}
\title[Counting statistics of the real and symplectic Ginibre ensemble]{Universality in the number variance and counting statistics of the real and symplectic Ginibre ensemble
%Universal full counting statistics of planar symplectic and orthogonal ensembles
}
%%%%%%%%%%%%%%%%%%%%%%%%%%%%% author %%%%%%%%%%%%%%%%%%%%%%%%%%%%
\author{Gernot Akemann}
\address{Faculty of Physics, Bielefeld University, P.O. Box 100131, 33501 Bielefeld, Germany}
\email{akemann@physik.uni-bielefeld.de}

\author{Sung-Soo Byun}
\address{Department of Mathematical Sciences and Research Institute of Mathematics, Seoul National University, Seoul 151-747, Republic of Korea}
\email{sungsoobyun@snu.ac.kr}

\author{Markus Ebke}
\address{Department of Mathematics, Friedrich-Alexander-Universit\"at Erlangen-N\"urnberg, Cauerstrasse 11, 91058 Erlangen, Germany}
\email{markus.ebke@fau.de}

\author{Grégory Schehr}
\address{Sorbonne Université, Laboratoire de Physique Théorique et Hautes Energies, CNRS UMR 7589, 4 Place Jussieu, 75252 Paris Cedex 05, France}
\email{schehr@lpthe.jussieu.fr}

%%%%%%%%%%%%%%%%%%%%%%%%%%%%% author %%%%%%%%%%%%%%%%%%%%%%%%%%%%

%\thanks{}
%\subjclass[2020]{Primary 60B20; Secondary 33C45}

\date{\today}

%%%%%%%%%%%%%%%%%%%%%%Abstract%%%%%%%%%%%%%%%%%%%%%%%%%%%%%%%%%%%
\begin{abstract}
In this article, we compute and compare the statistics of the number of eigenvalues in a centred disc of radius $R$ in all three Ginibre ensembles.
We determine the mean and variance as functions of $R$ in the vicinity of the origin, where the real and symplectic ensembles exhibit respectively an additional attraction to or repulsion from the real axis, leading to different results.
In the large radius limit, all three ensembles coincide and display a universal bulk behaviour of $O(R^2)$ for the mean, and $O(R)$ for the variance.  
We present detailed conjectures for the bulk and edge scaling behaviours of the real Ginibre ensemble, having real and complex eigenvalues. 
For the symplectic ensemble we can go beyond the Gaussian case (corresponding to the Ginibre ensemble) and prove the universality of the full counting statistics both in the bulk and at the edge of the spectrum for rotationally invariant potentials, extending a recent work which considered the mean and the variance. 
This statistical behaviour coincides with the universality class of the complex Ginibre ensemble, which has been shown to be associated with the ground state of non-interacting fermions in a two-dimensional rotating harmonic trap. 
All our analytical results and conjectures are corroborated by numerical simulations. 
\end{abstract}
%%%%%%%%%%%%%%%%%%%%%%Abstract%%%%%%%%%%%%%%%%%%%%%%%%%%%%%%%%%%%

\maketitle
%\tableofcontents
%%%%%%%%%%%%%%%%%%%%%%%introduction%%%%%%%%%%%%%%%%%%%%%%%%%%%%%%%%%

\section{Introduction}
There is currently a wide interest in the statistics of point processes, with applications ranging from ecology \cite{law2009ecological}, finance \cite{bauwens2009modelling}, signal processing \cite{paiva2009reproducing}, statistical and condensed-matter physics \cite{torquato2003local}, quantum optics \cite{bardenet2022point} all the way to machine learning \cite{kulesza2012determinantal}. To characterize such point processes, a useful and common tool is the so called full counting statistics (FCS). If we denote by $\mathcal{S}$ the full system and by $\Omega \subset \mathcal{S}$ a sub-system of $\mathcal{S}$, the FCS addresses the fluctuations of the total number of points $\NN_{\Omega}$ inside the region $\Omega$. The simplest example is probably the Poisson point processes where, in the homogeneous case, the points are just uniformly and independently distributed over $\mathcal{S}$, with a finite density $\rho$~\cite{daley2003introduction}. In this case, $\NN_{\Omega}$ is simply a Poisson random variable of parameter $\rho |\Omega|$, where $|\Omega|$ denotes the size of the sub-domain $\Omega$. A remarkable property of the Poisson point processes  is that the mean and the variance of $\NN_{\Omega}$ do coincide, i.e., $\mathbb{E}[\NN_{\Omega}] = \Var(\NN_\Omega)$, indicating that the fluctuations are very large here. Hence the mean does not carry so much information about the random variable~$\NN_{\Omega}$. 

In fact, in many relevant systems in mathematics and physics the corresponding point processes are much more rigid, in the sense that $ \Var(\NN_\Omega)/\mathbb{E}[\NN_{\Omega}] \to 0 $ as $|\Omega| \to \infty$. Such systems are generically called \emph{hyperuniform} \cite{torquato2003local,scardicchio2009statistical} and they have drawn a lot of attention during the last twenty years \cite{torquato2018hyperuniform}.
%, see also \cite{leble2021two} and references therein. 
A prominent example of such hyperuniform systems is provided by determinantal point processes (DPP), which include for instance complex random matrix ensembles, non-interacting trapped fermions \cite{dean2016noninteracting} and many others, see e.g. \cite{johansson2005random,HKPV2006} and references therein. 
In the context of fermions, FCS turns out to be particularly interesting since, in many cases, the variance $\Var(\NN_\Omega)$ is proportional to the entanglement entropy between $\Omega$ and its complement $\overline{\Omega}$ in $\mathcal{S}$~\cite{calabrese2012exact,calabrese2011entanglement,calabrese2015random,song2011entanglement,klich2009quantum}. This property is all the more interesting since entanglement entropy is very difficult to measure experimentally, whereas FCS is much more accessible for fermionic systems. In particular the recently developed quantum Fermi microscopes \cite{parsons2015site,haller2015single,cheuk2015quantum} allow to take ``pictures'' of the positions of the fermions, in particular in two-dimensional systems where most of these experiments are carried out. In fact, it was recently realised that several non-interacting trapped fermionic systems, in one and two dimensions, are related to Gaussian random matrix ensembles with complex entries~\cite{dean2019noninteracting}. In particular, in two-dimensions, it was shown that the positions of fermions in a two-dimensional \emph{rotating} harmonic trap are in one-to-one correspondence with the eigenvalues of Gaussian random matrices belonging to the complex Ginibre ensemble \cite{lacroix2019rotating}. This physical situation is akin to the well known \emph{lowest Landau level} problem of electrons in a plane and in the presence of a perpendicular magnetic field \cite{garcia2002critical,cooper2012quantum,forrester1999exact}. This connection with the physics of the lowest Landau levels
has naturally motivated the study of the FCS in the complex Ginibre ensemble, denoted here as GinUE \cite{charles2020entanglement,lacroix2019rotating,akemann2022universality,charlier2021large,CL22,byun2022characteristic,shirai2006large,ameur2022disk,ameur2022exponential,fenzl2022precise} (for a recent review see \cite{byun2022progress}), as well as some natural extensions of it, including the higher Landau levels \cite{kulkarni2021multilayered,kulkarni2022density,smith2022counting,MR3340195}, related to the so-called poly-analytic Ginibre ensemble \cite{haimi2013polyanalytic}. We also refer to \cite{marino2014phase,marino2016number} and references therein for earlier work on the counting statistics of Hermitian random matrix ensembles and its applications to one-dimensional systems of trapped fermions.
For such models, analytical progress is possible thanks to the fact that the underlying point processes are DPPs, for which very powerful analytical tools are available, already for a finite number of points $N$, see \cite{johansson2005random,HKPV2006}.

From a mathematical perspective it is natural to ask how robust the predictions of the complex Ginibre ensemble are, i.e. the question of their universality. There are at least two directions how this question of universality can be addressed.
First, one can replace the Gaussian weight in the space of matrix elements with a more general potential, which leads to normal random matrix ensembles, see e.g. \cite[Section 5]{byun2022progress}. 
A second direction is to consider the ``cousins'' of the GinUE, with real (GinOE) or quaternion matrix elements (GinSE) with independent Gaussian distributions, without further symmetry constraints, cf. \cite{byun2023progress}. 
Both ensembles lead to Pfaffian point-processes instead, which are technically much harder to study than DPP. 

In this paper, we will consider both of these universality questions. Currently, it is not clear how to directly relate the ground state wave function of a Schr\"odinger equation with an anharmonic potential to a complex ensemble of random matrices with a non-Gaussian distribution in two dimensions. Nonetheless, there have been several examples in one dimension, see e.g. \cite{MR3800894, lacroix2017statistics,smith2021full} as well as the review \cite{dean2019noninteracting}, where such a connection could be established. Note also that anharmonic potentials have recently been studied in the context of the quantum Hall effect~\cite{oblak2023anisotropic}. Besides, it is not known if the Pfaffian point processes obtained for the GinOE and GinSE enjoy a direct quantum mechanical realisation. However, establishing a mathematically rigorous universality statement within all three ensembles, including normal matrices, gives rise to the expectation that the FCS might also be robust in more general quantum Hamiltonians. 

It has been shown previously that the local statistics agrees for all complex eigenvalue correlation functions in all three Ginibre ensembles, away from the real line, both in the bulk and at the edge of the support of the spectrum (see e.g. \cite{MR2530159,akemann2019universal}). Here, the support is determined by the circular law on a disc of radius $\sqrt{N}$. (In the latter case, we sometimes rescale the model so that the radius of the circular law is $1$.)
This does not imply though, that the FCS statistics also agrees, when considering a domain $\Omega$ of macroscopic extent. Due to the (partial) rotational invariance in these three ensembles, and due to the map to a rotating trap in two-dimensions, it is most natural to choose a centred disc $D_R$ of radius $R$ for the domain $\Omega$. 
Here, one has to distinguish three different regimes (see Figure~\ref{Fig_Scaling Regimes} for illustration): the \emph{origin} regime, when the radius $R= {O}(1)$\footnote{This regime was called deep bulk in \cite{lacroix2019rotating}.}, the \emph{bulk} regime when $1\ll R<\sqrt{N}$, and the \emph{edge} regime when $R\approx \sqrt{N}$ is in the vicinity of the edge of support. 
When taking the limit $R \gg 1$ contact can be made between the origin and bulk limit. 
\begin{figure}[t]
	\begin{subfigure}{0.32\textwidth}
		\begin{center}
			\includegraphics[width=\textwidth]{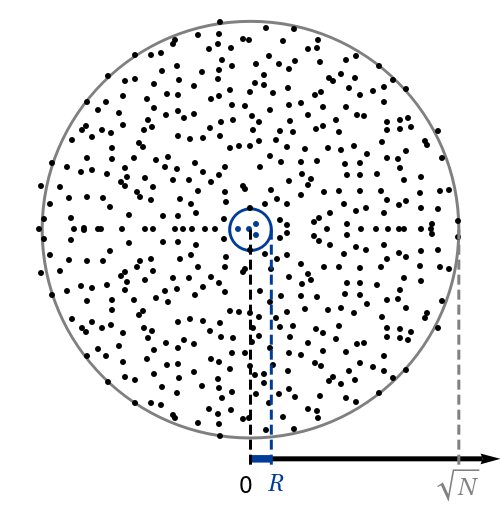}
		\end{center}
		\subcaption{Origin regime: $R = O(1)$}
	\end{subfigure}
	\begin{subfigure}{0.32\textwidth}
		\begin{center}
			\includegraphics[width=\textwidth]{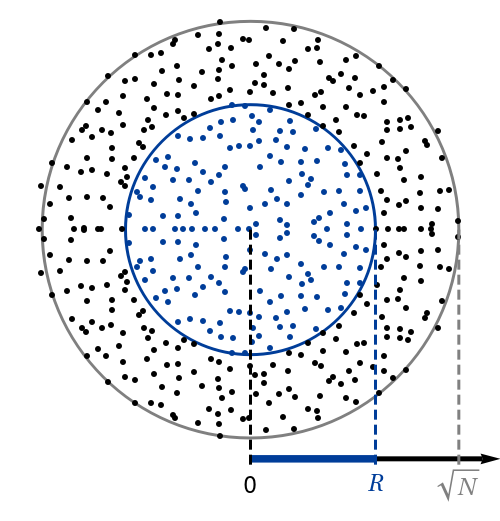}
		\end{center}
		\subcaption{Bulk regime: $R = O(\sqrt{N})$}
	\end{subfigure}
	\begin{subfigure}{0.32\textwidth}
		\begin{center}
			\includegraphics[width=\textwidth]{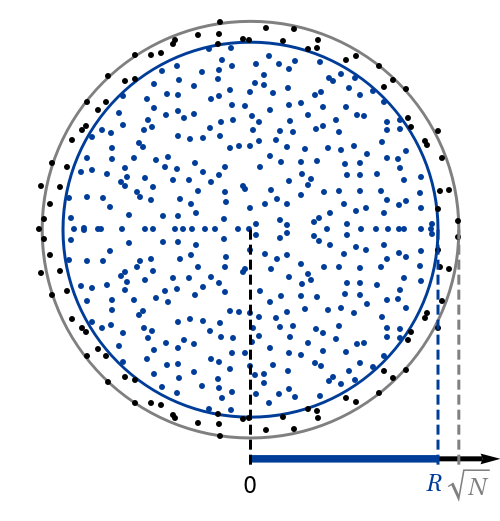}
		\end{center}
		\subcaption{Edge regime: $R = \sqrt{N} - O(1)$}
	\end{subfigure}

	\caption{Scaling regimes with points from the real Ginibre ensemble ($N=500$).} \label{Fig_Scaling Regimes}
\end{figure}

It is useful to recall the universality results that have been obtained so far for the Ginibre ensembles. In \cite{lacroix2019rotating}, the mean, the variance and all higher order cumulants were determined for the GinUE and its extension to rotational invariant potentials $W(|z|)$, the random normal matrix ensembles in the GinUE class. 
(See also \cite{ameur2023eigenvalues} for recent work on the case of the potential $W$ with a hard wall.)
In this case, the cumulants are given in terms of poly-logarithms for finite and infinite $N$. 
These results were extended to normal symplectic ensembles in the GinSE class for the mean and variance in \cite{akemann2022universality}. It was found that when measuring the variance in units of the mean density given by the Laplacian of the potential $\Delta W(R)$ at radius $R$, the linear growth of the variance in the bulk is universal, i.e., the GinUE and the normal GinSE coincide. The same was found for the universal scaling function of the variance at the edge of the spectrum. (Here, we use the convention $\Delta=\partial \bar{\partial}$, a quarter of the usual Laplacian.)
On the other hand, in the origin regime, the variance depends on the singularities (or zeros) of the potential $W$ and on the repulsion from the real axis present in the GinSE class. 

In this work we will address the third class, the GinOE, and complete the FCS for the GinSE. 
First, we analytically determine the mean and the variance in the GinOE in the origin regime. 
For the bulk and edge regime we formulate precise conjectures for the GinOE to be universal, and provide numerical evidence for these conjectures. 
We make contact between the origin and bulk regime by taking the large radius limit analytically. 
Because, in the GinOE, correlations between real, complex and both kind of eigenvalues have to be considered separately, the computation of the mean and the variance will occupy a substantial part of the paper. 
In the last part we explicitly determine all cumulants for finite and large-$N$ in the planar symplectic ensembles with rotationally invariant potentials $W$ in the GinSE class, and prove their universality in the bulk and edge scaling limit. Indeed, we show that they coincide with those found for the random normal matrix ensembles.

The remainder of the paper is organised as follows. 
In the next Section \ref{sec:main}, we briefly recall some basic properties of the three Ginibre ensembles, including their joint densities and correlation functions of complex eigenvalues (see Appendix \ref{App:A} for details for the GinOE). 
The following main results are presented there. In Subsection \ref{subsec:mean} the mean number of eigenvalues in a centred disc is given for finite~$N$, outside the limiting support, and in the large $N$ limit at the origin. 
This includes new results for the GinOE. Subsection \ref{subsec:var-inf} is devoted to the limiting variance. In the origin limit new expressions are given for the GinSE and GinUE. For the GinOE the main Theorem \ref{Thm_GinOE inf} states the limiting variance at the origin for asymptotically large radius. We also discuss the issue of universality for all three ensembles there and formulate conjectures for the behaviour of the  variance in the GinOE in the bulk, at the edge and outside the droplet. These conjectures are supported by numerics. 
Finally Subsection~\ref{subsec:FCS-SE} summarises our findings for the FCS in planar symplectic ensembles.
The proofs for the results from Subsection~\ref{subsec:mean} on the mean are given in Section \ref{sec:proofmean}. The results on the variance from Subsection~\ref{subsec:var-inf} are proved in Section \ref{sec:proofvar}, and the universal FCS for planar symplectic ensembles is given in Section~\ref{sec:proofFCS}. We conclude in Section~\ref{sec:conc} and list a number of open problems.

%%%%%%%%%%%%%%%%%%%%%%%%%%%%%%%%%%%%%%%%%%%%%%%%%%%%%%%%%%%%%
\section{Main results}\label{sec:main}

In order to present our main results we briefly recall the Ginibre ensembles together with the quantities we need. For the GinOE more details are presented in Appendix \ref{App:A}. In terms of matrices $G$, the Ginibre ensembles consists of random matrices that have $N^2$ independent, centred normal real, complex or quaternion matrix elements, $g_{jk}\in\mathcal{N}_F(0,(N\beta)^{-\frac12})$, without further symmetry constraint. Here the Dyson index $\beta=1,2,4$ labels the field $F=\mathbb{R,C,H}$, respectively. 
%In contrast to the classical ensembles of Hermitian random matrices, here $\beta$ does not label the power of the Vandermonde determinant in the joint eigenvalues density.

The joint probability distribution function (jpdf) of complex eigenvalues $z_1,\dots,z_N$ of the GinUE respectively GinSE, normalised by the partition function $\mathcal{Z}_N^{(2,4)}$, reads
\begin{align}
\mathcal{P}_N^{(2)}(z_1,\dots,z_N)&= \frac{1}{\mathcal{Z}_{N}^{(2)}}  \prod_{j>k=1}^{N} |z_j-z_k|^2 \prod_{l=1}^N e^{-N|z_l|^2},  
\label{jpdf2}
\\
\mathcal{P}_N^{(4)}(z_1,\dots,z_N) &= \frac{1}{\mathcal{Z}_{N}^{(4)}}  \prod_{j>k=1}^{N} |z_j-z_k|^2 |z_j-\overline{z}_k|^2  \prod_{l=1}^N |z_l-\overline{z}_l|^2 e^{-2N|z_l|^2}. \label{jpdf4}
\end{align}
For $\beta=4$ note that an $N \times N$ quaternion valued matrix $G$ has an infinite number of eigenvalues, however they decompose into $N$ equivalence classes of the form $[z_j] = \{q^{-1} z_j q \mid q \in \mathbb{H} \}$ for some $z_j \in \C$ and $j = 1, \dots, N$, see \cite{Mehta} for a discussion.
Furthermore it holds that $[z_j] \cap \C = \{z_j, \overline{z}_j\}$ and the $2N$ complex numbers $\{z_1, \overline{z}_1, \dots, z_N, \overline{z}_N\}$ are precisely the eigenvalues of the complex $2N \times 2N$ representation of $G$.
In \eqref{jpdf4} we give the jpdf for $N$ complex eigenvalues only, as the remaining $N$ are fixed by complex conjugation. In principle, we could restrict the jpdf to the upper-half plane $\C_+:=\{z \mid z\in\C, \Im(z)>0\}$, because the jpdf is invariant under complex conjugation, cf.~\cite{reviewKhoruzhenkoSommers}. 
However, in much of the literature \cite{Mehta,MR1928853,akemann2021scaling} the jpdf is given on the full complex plane $\C$, the reason being that it is not difficult to find the complex eigenvalue correlation functions in the lower half plane as well.

For the GinOE with $N\times N$ real matrices $G$, we have to distinguish $k$ real and $2l$ (non-real) complex eigenvalues, coming in conjugated pairs. Because of $N=k+2l$, $k$ is always of the same parity as $N$, otherwise the probability of finding $k$ such real eigenvalues is zero.  
The corresponding jpdfs for fixed $k$ and $l$ are not immediately needed here and are  presented in Appendix \ref{App:A} for completeness. They depend on $k$ real and on $l$ complex eigenvalues, with the remaining $l$ given by complex conjugation. In contrast to $\beta=4$ it is not straightforward to write the correlation functions in the entire complex plane, see however \cite{sommers2007symplectic}, due to sign-functions present in the jpdf. 
Therefore, as most authors, we restrict ourselves to the upper half plane (except for the density, see below).
Notice that all three Ginibre ensembles correspond to a two-dimensional Coulomb gas at inverse temperature $(k_B T)^{-1}=2$ \cite{forrester2010log,forrester2016analogies,Serfaty,Lewin22}. In the following, we will use the notation $\beta = 1, 2, 4$ to describe respectively the GinOE, the GinUE and the GinSE, but it does not have the interpretation of an inverse temperature. 
This is in contrast to the Hermitian random matrices, where the Gaussian Orthogonal/Unitary/Symplectic ensembles correspond to the Coulomb gas with respective $\beta$-values of 1, 2, and 4. (See also a recent work \cite{mezzadri2023matrix} for a non-Hermitian $\beta$-ensemble.)

\medskip 

The $k$-point correlation functions at finite-$N$, denoted by bold face, are defined as
\begin{equation}
\textbf{R}_{N,k}^{(\beta)}(z_1,\dots,z_k):=\frac{N!}{(N-k)!}\prod_{j=k+1}^N\int_\C \mathcal{P}_N^{(\beta)}(z_1,\dots,z_N) \,dA(z_j),
\label{RkDef}
\end{equation} 
for $\beta=2,4$, 
where $dA(z)=d^2z/\pi$ is the area measure. 
Notice that in part of the literature the flat measure without a factor of $\pi$ is used instead. 
For $\beta=1$ we refer to the discussion in Appendix \ref{App:A}.

As a consequence we obtain the following determinantal, (respectively Pfaffian), point processes:
\begin{itemize}
    \item \textbf{GinUE:} The $N$ complex eigenvalues in \eqref{jpdf2} form a 
     DPP with kernel $K_N$ \cite{ginibre1965statistical},
    %In the bulk scaling limit of the GinUE \cite{ginibre1965statistical}, we have
\begin{equation} \label{Rk beta2N}
\textbf{R}_{N,k}^{(2)}(z_1,\dots,z_k)= \det \Big[ K_N(z_j,z_l) \Big]_{j,l=1}^k, 
\end{equation}
where 
\begin{equation}
K_N(z,w)= e^{ -\frac{N}{2}(|z|^2+|w|^2)+Nz\Bar{w}}NQ(N,Nz\Bar{w})=  e^{ -\frac{N}{2}(|z|^2+|w|^2)+Nz\Bar{w}}N \sum_{j=0}^{N-1} \frac{( N z \bar{w} )^j}{ j! } .
\end{equation}
Here, 
\begin{equation}
Q(n+1,z)=\frac{\Gamma(n+1,z)}{\Gamma(n+1)}=e^{-z}e_n(z) 
\end{equation}
is the normalised incomplete Gamma-function, that relates to the incomplete exponential $e_n(z)=\sum_{j=0}^{n}\frac{z^j}{j!}$. 
\smallskip 
     \item \textbf{GinSE:} The $N$ complex eigenvalues in \eqref{jpdf4} form a Pfaffian point process \cite{Mehta,MR1928853}
\begin{equation} \label{Rk beta4N}
\textbf{R}_{N,k}^{(4)}(z_1,\cdots, z_k) =N^{k/2}\prod_{j=1}^{k} (\bar{z}_j-z_j) \, \Pf \Big[ e^{-N|z_j|^2-N|z_l|^2} 
\begin{pmatrix} 
\kappa_N(z_j,z_l) & \kappa_N(z_j,\bar{z}_l) 
\\
\kappa_N(\bar{z}_j,z_l) & \kappa_N(\bar{z}_j,\bar{z}_l) 
\end{pmatrix}  \Big]_{ j,l=1,\cdots k },
\end{equation}
where the $2\times2$ matrix valued kernel above contains the skew-kernel 
    \begin{equation} \label{kappaN}
    \kappa_N(z,w) := N\sum_{j=0}^{N-1}\sum_{l=0}^{j}N^{k+l+1}\frac{z^{2j+1}w^{2l}-w^{2j+1}z^{2l}}{(2k+1)!!\,(2l)!!}.
    \end{equation}
    \end{itemize}
    
For the GinOE we instead give the densities $\textbf{R}_{N,1,\C}^{(1)}(z)$ and $\textbf{R}_{N,1,\R}^{(1)}(x)$ for the non-real complex and real eigenvalues, respectively, as examples. They were first derived by \cite{edelman1997probability}, respectively \cite{MR1231689}.
\begin{itemize}
\item  \textbf{GinOE:} The 1-point densities of complex ($z=x+iy\in\C$) (respectively, real ($x\in\R$)) eigenvalues are
\begin{align}
\textbf{R}_{N,1,\C}^{(1)}(z) &= \sqrt{ 2N\pi} N|y| \erfc(\sqrt{2N}|y|) e^{2Ny^2} Q(N-1,N|z|^2), 
\label{RN1complex1}
\\
\textbf{R}_{N,1,\R}^{(1)}(x) &= \sqrt{\frac{N}{2\pi}} Q(N-1,Nx^2) +  \frac{ N^{N/2}   }{ 2^{N/2 } \Gamma(\frac{N}{2})  } e^{ -\frac{N}{2} x^2 } |x|^{N-1}  P\Big( \frac{N-1}{2}, \frac{Nx^2}{2} \Big),  
\label{RN1real1}
\end{align}
where 
$$
\erfc(z) = 1- \erf(z) = \frac{2}{ \sqrt{\pi}  } \int_z^\infty e^{-t^2}\,dt
$$
is the complementary error function and
\begin{equation}
P(n+1,z)=1-Q(n+1,z)=\frac{\gamma(n+1,z)}{\Gamma(n+1)} , \qquad \gamma(a,z)= \int_0^z t^{a-1} e^{-t}\,dt 
\end{equation}
is the second regularised incomplete Gamma-function. Notice that in \eqref{RN1complex1} we defined the density of complex eigenvalues in the full complex plane $\C$. 
In general, we will also encounter mixed correlation functions, $\textbf{R}_{N,k,\R,l,\C}^{(1)}$, of $k$ real and $l$ complex eigenvalues.
For these higher order correlation functions, which are needed for the computation of the variance, it is not so obvious where to put absolute values around the imaginary parts, see however \cite{sommers2007symplectic}.
For  the general structure of the Pfaffian point process on the upper half plane $\C_+$, and the definition of general correlation functions among real, complex, or both kind of eigenvalues,  we refer to Appendix~\ref{App:A}. 
%Because the complex eigenvalues and their correlations in the lower half-plane are completely fixed by complex conjugation, we will only give $\textbf{R}_{N,k,\C}^{(1)}$ in the upper half plane in general, following \cite{forrester2007eigenvalue}. For the expressions valid on $\C$ including absolute values in $y$ as in \eqref{RN1complex1} we refer to \cite{sommers2007symplectic}.
\end{itemize}

Notice that due to the definition \eqref{RkDef} for $k=1$, together with \eqref{RN1complex1} and \eqref{RN1real1}, the densities are normalised to $N$ on the full complex plane, rather than to unity,
\begin{equation}
\int_{\C} \textbf{R}_{N,1}^{(\beta=2,4)}(z) \,dA(z)=N=\int_{\C} \textbf{R}_{N,1,\C}^{(1)}(z) \,dA(z)+\int_{\R} \textbf{R}_{N,1,\R}^{(1)}(x) \,dx. 
\end{equation}
Furthermore, our normalisation is such that for $N\gg1$ the support of the (complex) densities (also called droplet) converges to the circular law with radius unity.

%%%%%%%%%%%%%%%%%%%%%%%%%%%%%%%%%%%%%%%%%%%%%%%%%%%%%
\subsection{Expected number in a centred disc in the Ginibre ensembles}\label{subsec:mean}

\subsubsection{Finite-\texorpdfstring{$N$}{N} Ginibre ensembles}\label{subsubsec:meanfinite}
 
We denote by $E_N^{(\beta)}(a)$ the expected number of eigenvalues in the centred disc $D_a$ of radius $a$. %Here $\beta=1,2,4$ is the Dyson index.
For the GinOE, we split this into the contribution of complex eigenvalues $E_{N,\C}^{(1)}(a)$, and  write $E_{N,\R}^{(1)}(a)$ for the expected number of real eigenvalues in the interval $(-a,a)$. % and similarly, let be the expected number of complex eigenvalues in the disc of radius $a$.
Note that the droplet has radius $a = 1$. In other words, we define 
\begin{equation} \label{EN a Def}
E_N^{(\beta=2,4)}(a) := \int_{D_a} \textbf{R}_{N,1}^{(\beta=2,4)}(z) \,dA(z), \qquad E_N^{(1)}(a) :=  E^{(1)}_{N, \C}(a)+E^{(1)}_{N,\R}(a),
\end{equation}
where
\begin{equation} \label{EN RC 1 a Def}
E^{(1)}_{N,\C}(a):= \int_{D_a} \textbf{R}_{N,1,\C}^{(1)}(z) \,dA(z), \qquad  E^{(1)}_{N,\R}(a) := \int_{-a}^a \textbf{R}_{N,1,\R}^{(1)}(x) \,dx . 
\end{equation}
We first obtain analytic expressions for the expected number of eigenvalues.
For this, recall that the (generalised) hypergeometric function ${}_pF_q$ is given by \cite[Chapter 16]{olver2010nist}
\begin{equation}\label{def hypergeometric}
{}_pF_q(  a_1,\dots,a_p; b_1,\dots,b_q ; z ) = \sum_{k=0}^\infty \frac{ (a_1)_k \dots (a_p)_k }{ (b_1)_k \dots (b_q)_k } \frac{z^k}{k!},
\end{equation}
where $(x)_k = x (x+1) \dots (x+k-1)$ is the Pochhammer symbol.

\begin{prop}[\textbf{Expected number of eigenvalues for finite-$N$}]  \label{Prop_EN GinOE}
For each $a \geq 0$ and $N\in\N$, we have the following: 
\begin{itemize}
    \item[\textup{(i)}] \textbf{\textup{GinUE:}} We have 
\begin{equation} \label{E_N cplx Ginibre}
\begin{split}
E_N^{(2)}(a) &=\sum_{k=0}^{N-1} P(k+1,Na^2)= Na^2+N(1-a^2) \,P(N,Na^2)-\frac{ (Na^2)^N\, e^{-Na^2} }{ (N-1)! }. 
\end{split}
\end{equation}
\item[\textup{(ii)}] \textbf{\textup{GinSE:}} We have 
\begin{equation}
\begin{split}  \label{EN symplectic Ginibre}
\quad E_N^{(4)}(a)  &= \sum_{k=0}^{N-1}  P(2k+2,2Na^2)  =\frac12 E_{2N}^{(2)}(a)- \frac{e^{-2Na^2} }{2} \sum_{k=0}^{N-1} \frac{(2Na^2)^{2k+1}   }{(2k+1)!} 
\\
&=\frac12  E_{2N}^{(2)}(a)- \frac{e^{-2Na^2} }{2} \Big( \sinh(2Na^2)- \frac{ (2Na^2)^{2N+1} }{(2N+1)!} \, {}_1F_2\Big( 1; N+1, N+\frac32 ;  N^2a^4 \Big) \Big).  
\end{split}
\end{equation}
\item[\textup{(iii)}] \textbf{\textup{GinOE:}} We have 
%\begin{equation}
%E_N^{(1)}(a) = E_{N,\C}^{(1)}(a)+ E_{N,\R}^{(1)}(a),
%\end{equation}
%where
\begin{equation}
\begin{split} \label{EN C 1 a GinOE}
E_{N,\C}^{(1)}(a) 
%&=  2 \int_0^{ \sqrt{N}a }  r \Big( 1-e^{2r^2} \erfc(\sqrt{2}r)  \Big) Q(N-1,r^2)\,dr
%\\
& =  \sum_{k=0}^{N-2} P(k+1,Na^2)  - 2 \int_0^{ \sqrt{N}a } r\, e^{2r^2} \erfc(\sqrt{2}r)   Q(N-1,r^2)\,dr
\end{split}
\end{equation}
and for even $N$,
\begin{align}
\begin{split}
E_{N,\R}^{(1)}(a) %&=\sqrt{ \frac{2}{\pi} } \int_{0}^{ \sqrt{N}a } \Big[ Q(N-1,x^2) + \frac{ 2^{N/2-3/2}  }{ \Gamma(N-1) } x^{N-1} e^{-x^2/2} \gamma\Big( \frac{N-1}{2}, \frac{x^2}{2} \Big) \Big]\,dx
%\\
&=\sqrt{ \frac{2}{\pi} } \frac{ a\sqrt{N} \,  \Gamma(N-1,Na^2)+\gamma(N-\tfrac12, Na^2) }{ (N-2)! } 
\\
&\quad +\frac{ 1   }{ (N-2)!!  }     \int_{0}^{ \sqrt{N}a }  e^{ -x^2/2 } x^{N-1} \erf\Big(  \frac{x}{ \sqrt{2} }\Big)\,dx
\\
&\quad - \frac{1}{ 2^{N/2} \, (N-2)!! }        \sum_{k=1}^{ N/2-1 } \frac{  (N+2k-3)!! }{ 2^{k-1/2} \, (2k-1)!! }   P\Big( \frac{N-1}{2}+k, Na^2\Big).  
\end{split}
\end{align}
\end{itemize}
\end{prop}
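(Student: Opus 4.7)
The proof is by direct evaluation of the integrals defining $E_N^{(\beta)}(a)$, using the explicit one--point densities recorded in Section \ref{sec:main}. For (i), rotational symmetry converts $E_N^{(2)}(a)$ to $\int_0^{Na^2}e^{-u}e_{N-1}(u)\,du$ via $u=Nr^2$, giving at once $\sum_{k=0}^{N-1}P(k+1,Na^2)$. The closed form then follows by Abel summation together with the step relations $P(k+1,x)-P(k,x)=-x^ke^{-x}/k!$ and $Q(N-1,x)=Q(N,x)-x^{N-1}e^{-x}/(N-1)!$. For (ii), the cleanest route uses the Kanzieper--Rider-type identity for the Gaussian GinSE: the squared moduli of the $N$ distinct complex eigenvalues in the upper half plane are distributed as independent $\textup{Gamma}(2j,1)/(2N)$, $j=1,\dots,N$, so $E_N^{(4)}(a)=\sum_{k=0}^{N-1}P(2k+2,2Na^2)$; alternatively one integrates the Pfaffian radial density from \eqref{Rk beta4N} directly, where only $j=l$ survives the angular integration. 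The connection to $\tfrac12 E_{2N}^{(2)}(a)$ is obtained by splitting the GinUE sum $\sum_{m=1}^{2N}P(m,2Na^2)$ by the parity of $m$ and applying $P(2j-1,x)=P(2j,x)+x^{2j-1}e^{-x}/(2j-1)!$; the ${}_1F_2$ representation comes from writing the tail of $\sinh(2Na^2)$ via $(2N+2m+1)!/(2N+1)!=4^{m}(N+1)_m(N+\tfrac32)_m$.

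For (iii), I treat the real and complex contributions separately. For $E_{N,\R}^{(1)}(a)$, exploit the evenness of $\textbf{R}_{N,1,\R}^{(1)}$ and split \eqref{RN1real1} into its two summands. In the first summand substitute $u=Nx^2$ and integrate by parts with $dv=u^{-1/2}du$; the boundary term produces $a\sqrt{N}\,\Gamma(N-1,Na^2)/(N-2)!$ and the bulk term produces $\gamma(N-\tfrac12,Na^2)/(N-2)!$, which together give the first line of the claim after multiplying by $\sqrt{2/\pi}$. In the second summand, substitute $u=\sqrt{N}x$ to eliminate the $N$-dependent prefactor, and for even $N$ apply the half--integer expansion
\begin{equation*}
P\bigl(n+\tfrac12,z\bigr)=\erf(\sqrt{z})-\sqrt{z}\,e^{-z}\sum_{k=0}^{n-1}\frac{z^{k}}{\Gamma(k+\tfrac32)}
\end{equation*}
with $n=N/2-1$ and $z=u^{2}/2$; the leading $\erf$-piece, using $2^{N/2}\Gamma(N/2)=2(N-2)!!$ for even $N$, produces the middle line of the formula, and each remaining term, after $v=u^{2}$, becomes an incomplete $\gamma((N+2k+1)/2,Na^2)$ that is re-expressed as $P((N-1)/2+k,Na^2)$ using $\Gamma(m+\tfrac12)=(2m-1)!!\sqrt{\pi}/2^{m}$.

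For the complex part $E_{N,\C}^{(1)}(a)$, polar coordinates $z=re^{i\theta}$ yield
\begin{equation*}
E_{N,\C}^{(1)}(a)=\frac{\sqrt{2N\pi}\,N}{\pi}\int_{0}^{a}r^{2}Q(N-1,Nr^{2})\,J_{N}(r)\,dr,\qquad J_{N}(r)=4\int_{0}^{\pi/2}\!\sin\phi\;G(r\sin\phi)\,d\phi,
\end{equation*}
where $G(w):=e^{2Nw^{2}}\erfc(\sqrt{2N}w)$. The substitution $\sin\phi=w/r$ recasts $J_{N}(r)$ as an Abel-type integral $J_{N}(r)=(4/r)\int_{0}^{r}wG(w)(r^{2}-w^{2})^{-1/2}\,dw$. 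The crucial step is the identity
\begin{equation*}
1-G(r)=\frac{2\sqrt{2N\pi}}{\pi}\int_{0}^{r}\frac{wG(w)}{\sqrt{r^{2}-w^{2}}}\,dw,
\end{equation*}
which I would establish by Laplace transforming both sides in the variable $v=r^{2}$: writing $f(v)=G(\sqrt{v})$, the standard formula $\mathcal{L}[e^{at}\erfc(\sqrt{at})](s)=1/(\sqrt{s}(\sqrt{s}+\sqrt{a}))$ gives $\mathcal{L}[f](s)=1/(\sqrt{s}(\sqrt{s}+\sqrt{2N}))$, while $\mathcal{L}[v^{-1/2}](s)=\sqrt{\pi/s}$, and a direct convolution check matches $\mathcal{L}[1-f](s)=\sqrt{2N}/(s(\sqrt{s}+\sqrt{2N}))$. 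Substituting this identity back yields the clean representation $E_{N,\C}^{(1)}(a)=2N\int_{0}^{a}r[1-G(r)]Q(N-1,Nr^{2})\,dr$; the ``$1$''-piece reduces to $\sum_{k=0}^{N-2}P(k+1,Na^{2})$ by the same $u=Nr^{2}$ calculation as in (i), and the ``$G$''-piece, after rescaling $r\mapsto r/\sqrt{N}$, becomes exactly the residual integral $-2\int_{0}^{\sqrt{N}a}re^{2r^{2}}\erfc(\sqrt{2}r)Q(N-1,r^{2})\,dr$ of the claim.

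The main obstacle is the Abel-type identity for the complex GinOE, which is what converts the two-dimensional angular integral into the advertised one-dimensional residual; everything else in the proof is a substitution, integration by parts, or recursion for the (regularised) incomplete Gamma function.
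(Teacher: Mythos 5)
Your proposal is correct. For parts (i) and (ii), which the paper attributes (up to rewriting) to \cite[Proposition 3.1]{akemann2022universality}, your reconstruction is routine and accurate: Kostlan-type radial decomposition or direct Pfaffian integration, Abel summation plus the step relation $P(k+1,x)-P(k,x)=-x^{k}e^{-x}/k!$ for the closed form in (i), parity split of $E_{2N}^{(2)}$ in (ii), and the resummation of the $\sinh$ tail into ${}_1F_2$ all check out. For the real-eigenvalue part of (iii) you reproduce the paper's Lemma~\ref{Prop_GinOE expected real} by the same devices (substitution $u=Nx^{2}$ and integration by parts for the first summand; half-integer expansion of $P((N-1)/2,\cdot)$ and Gamma/double-factorial bookkeeping for the second), just assembled in a slightly different order, and the coefficients match those in the statement.

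Where you genuinely depart from the paper is in the complex-eigenvalue part of (iii). The paper's Lemma~\ref{Prop_GinOE expected complex} obtains the radial-density identity
\begin{equation*}
\frac{1}{2\pi N}\int_{0}^{2\pi}\textbf{R}_{N,1,\C}^{(1)}\bigl(re^{i\theta}/\sqrt{N}\bigr)\,d\theta=\bigl(1-e^{2r^{2}}\erfc(\sqrt{2}r)\bigr)Q(N-1,r^{2})
\end{equation*}
by substituting $\erfc(\sqrt{2}y)=2\sqrt{2/\pi}\int_{-\infty}^{0}e^{-2(t-y)^{2}}\,dt$, expanding $e^{4tr\sin\theta}$ in powers, evaluating each $\theta$-integral as a Beta function, and resumming via the Gamma duplication formula. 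You instead recast the angular integral as an Abel fractional integral of $G(w)=e^{2Nw^{2}}\erfc(\sqrt{2N}w)$, establish the kernel identity
\begin{equation*}
1-G(r)=\frac{2\sqrt{2N\pi}}{\pi}\int_{0}^{r}\frac{wG(w)}{\sqrt{r^{2}-w^{2}}}\,dw
\end{equation*}
by Laplace transform in $v=r^{2}$ using $\mathcal{L}\bigl[e^{at}\erfc(\sqrt{at})\bigr](s)=\bigl(\sqrt{s}(\sqrt{s}+\sqrt{a})\bigr)^{-1}$ and the convolution theorem (both sides transform to $\sqrt{a}/\bigl(s(\sqrt{s}+\sqrt{a})\bigr)$, as is easily checked), and then plug back to obtain $E_{N,\C}^{(1)}(a)=2N\int_{0}^{a}r\bigl(1-G(r)\bigr)Q(N-1,Nr^{2})\,dr$, which rescales to the stated formula. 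Your route trades the paper's term-by-term resummation for a single operational-transform argument; it is arguably cleaner and makes the half-power singularity in the angular average visible, at the price of invoking a tabulated Laplace transform and uniqueness of the inverse rather than remaining with elementary series manipulations.
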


We emphasize that Proposition~\ref{Prop_EN GinOE} (i) and (ii) are given in \cite[Proposition 3.1]{akemann2022universality} in a slightly different form.
For $N$ odd the corresponding expressions in Proposition~\ref{Prop_EN GinOE} (iii) are given in Lemma~\ref{Prop_GinOE expected real} in the main text.

For $\beta=1$, the $N \to \infty$ leading order asymptotic behaviours are given by 
\begin{equation} \label{EN GinOE leading order}
E_{N,\C}^{(1)}(a) \sim N \min(a^2,1) \ ,\qquad 
E_{N,\R}^{(1)}(a) \sim \sqrt{ \frac{2N}{\pi} } \min(a,1). 
\end{equation}
This corresponds to the results in \cite{MR1231689}.
We also mention that the function 
\begin{equation}
\Big( 1-e^{2r^2} \erfc(\sqrt{2}r)  \Big) Q(N-1,r^2)\ ,\quad \mbox{with}\ r=|z|, 
\end{equation}
in the integrand of \eqref{EN C 1 a GinOE} corresponds to the radial density of the complex eigenvalues of the GinOE, where the angular integral has been performed.

As an application of Proposition~\ref{Prop_EN GinOE}, we obtain the following corollary. 

\begin{cor}[\textbf{Expected number of eigenvalues outside the limiting support}] \label{Cor_outside the disc number}
As $N\to \infty$, we have 
\begin{align}
N-E_N^{(\beta=1,2)}(1) &=  \frac{ \sqrt{N} }{ \sqrt{2\pi} }+o(\sqrt{N}),  \label{EN disc GinUE}
%+O( \frac{1}{\sqrt{N}} )
\\
 N- E_N^{(4)}(1)  &=  \frac{ \sqrt{N} }{ 2\sqrt{\pi} }+o(\sqrt{N}), \label{EN disc GinSE} %  \frac18 + O( \frac{1}{\sqrt{N}} ).
%\\
%&N- E_N^{(1)}(1)  =  \frac{ \sqrt{N} }{ \sqrt{2\pi} } + o( \sqrt{N} ). \label{EN disc GinOE} 
\end{align}
For $\beta=1$ we also have the more detailed asymptotics 
\begin{equation} \label{EN detailed}
E_{N,\C}^{(1)}(1)= N - \frac32 \sqrt{ \frac{2}{\pi} } \sqrt{N}+o(\sqrt{N}), \qquad  E_{N,\R}^{(1)}(1)=  \sqrt{ \frac{2}{\pi} } \sqrt{N}+o(\sqrt{N}).
\end{equation}
\end{cor}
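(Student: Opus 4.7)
The overall strategy is to insert $a=1$ into the closed-form expressions of Proposition~\ref{Prop_EN GinOE} and extract the $\sqrt{N}$ coefficient via Stirling's formula together with the standard asymptotics $P(N,N),Q(N-1,N)=\tfrac12+O(N^{-1/2})$ and the large-argument expansion $e^{2u}\erfc(\sqrt{2u})=(2\pi u)^{-1/2}+O(u^{-3/2})$. For the GinUE, \eqref{E_N cplx Ginibre} at $a=1$ collapses to $E_N^{(2)}(1)=N-N^N e^{-N}/(N-1)!$, and Stirling gives $N^N e^{-N}/(N-1)!=\sqrt{N/(2\pi)}(1+O(N^{-1}))$, which is \eqref{EN disc GinUE} for $\beta=2$. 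For the GinSE, the partial-sum form of \eqref{EN symplectic Ginibre} reads $E_N^{(4)}(1)=\tfrac12 E_{2N}^{(2)}(1)-\tfrac{e^{-2N}}{2}\sum_{k=0}^{N-1}(2N)^{2k+1}/(2k+1)!$. The first piece contributes $N-\tfrac12\sqrt{N/\pi}+o(\sqrt{N})$ by the GinUE calculation applied with $N\mapsto 2N$. The second piece equals $\tfrac12$ times the probability that $X\sim\mathrm{Poisson}(2N)$ is both odd and at most $2N-1$; by the central limit theorem together with the equidistribution of large-$\lambda$ Poisson mass modulo $2$ it converges to $1/8$, hence is $O(1)=o(\sqrt{N})$. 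This yields \eqref{EN disc GinSE}.

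For $E_{N,\R}^{(1)}(1)$ (even $N$; odd $N$ is identical via Lemma~\ref{Prop_GinOE expected real}) I would treat the three summands of Proposition~\ref{Prop_EN GinOE}(iii) separately. Writing $\Gamma(N-1,N)=(N-2)!\,Q(N-1,N)$ and $\gamma(N-1/2,N)=\Gamma(N-1/2)P(N-1/2,N)$, the leading fraction equals
\[
\sqrt{\tfrac{2}{\pi}}\Bigl(\sqrt{N}\,Q(N-1,N)+\tfrac{\Gamma(N-1/2)}{\Gamma(N-1)}P(N-1/2,N)\Bigr)=\sqrt{\tfrac{2}{\pi}}\sqrt{N}+o(\sqrt{N}),
\]
because $\Gamma(N-1/2)/\Gamma(N-1)\sim\sqrt{N}$. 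The integral term is $O(1)$: using $(N-2)!!=2^{(N-2)/2}\Gamma(N/2)$ and Laplace's method at the peak $x=\sqrt{N-1}$ (width $1/\sqrt{2}$), it tends to a constant once the factor $\erf(x/\sqrt 2)\to 1$ is inserted. The finite sum in the third term is $o(\sqrt{N})$ because a Stirling estimate shows the prefactor $(N+2k-3)!!/[\,2^{N/2+k-1/2}(N-2)!!(2k-1)!!\,]$ is exponentially small in $N$ uniformly in $k\in\{1,\dots,N/2-1\}$, while $P((N-1)/2+k,N)$ stays in $[0,1]$.

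For the complex part I would split \eqref{EN C 1 a GinOE} as $E_{N,\C}^{(1)}(1)=E_N^{(2)}(1)-P(N,N)-I$ with $I:=2\int_0^{\sqrt{N}}r\,e^{2r^2}\erfc(\sqrt{2}r)Q(N-1,r^2)\,dr$. Substituting $u=r^2$ and inserting the $\erfc$ expansion gives $I=\int_0^N Q(N-1,u)/\sqrt{2\pi u}\,du+O(1)$. Because $Q(N-1,u)$ interpolates from $1$ to $0$ on a window of width $\sqrt{N}$ around $u=N-1$, where $1/\sqrt{2\pi u}$ varies by only $O(N^{-1})$, one obtains $I=\int_0^{N-1}(2\pi u)^{-1/2}\,du+O(1)=\sqrt{2N/\pi}+O(1)$. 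Combining with the GinUE formula and $P(N,N)=\tfrac12+o(1)$ produces
\[
N-E_{N,\C}^{(1)}(1)=\sqrt{N/(2\pi)}+\sqrt{2N/\pi}+o(\sqrt{N})=3\sqrt{N/(2\pi)}+o(\sqrt{N})=\tfrac{3}{2}\sqrt{2/\pi}\,\sqrt{N}+o(\sqrt{N}).
\]
Adding the real-eigenvalue contribution then yields $N-E_N^{(1)}(1)=\sqrt{N/(2\pi)}+o(\sqrt{N})$, completing \eqref{EN disc GinUE} for $\beta=1$.

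The main obstacle is the transition-region analysis of $I$: one must verify that both the $O(u^{-3/2})$ remainder in the $\erfc$ expansion and the replacement of $Q(N-1,u)$ by the indicator of $[0,N-1]$ across its $\sqrt{N}$-wide crossover produce only $O(1)$ errors, so that the leading constant $\sqrt{2/\pi}$ emerges exactly. Every other step reduces to bookkeeping with Stirling's formula and regularised incomplete Gamma-function asymptotics.
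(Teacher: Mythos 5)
For $\beta=2$, $\beta=4$, and the complex contribution $E_{N,\C}^{(1)}(1)$, your route is essentially the paper's: Stirling for the GinUE tail, observing that the truncated-$\sinh$ term in \eqref{EN symplectic Ginibre} is $O(1)$ for the GinSE, and estimating the $\erfc$-integral in \eqref{EN C 1 a GinOE} to be $\sim\sqrt{2N/\pi}$. For $E_{N,\R}^{(1)}(1)$, however, you rederive the asymptotics directly from Proposition~\ref{Prop_EN GinOE}(iii), whereas the paper simply quotes the Edelman--Kostlan--Shub asymptotic \eqref{EN GinOE leading order}; your version is self-contained but contains an error.

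The problem is your claim that the prefactor $c_k := (N+2k-3)!!/\bigl[\,2^{N/2+k-1/2}(N-2)!!(2k-1)!!\,\bigr]$ in the third summand is exponentially small in $N$ uniformly over $k\in\{1,\dots,N/2-1\}$. It is not: from the ratio $c_{k+1}/c_k = (N+2k-1)/(2(2k+1))$ the sequence $c_k$ increases over the whole range, and at the top end Stirling gives $c_{N/2-1}=\binom{2N-4}{N-2}\,2^{-(2N-7/2)}\sim (2\pi N)^{-1/2}$, a power law, not an exponential. Your uniform bound thus only yields $\sum_k c_k=O(\sqrt{N})$, which is not $o(\sqrt{N})$. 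What actually holds is that $c_{N/2-1-m}\approx c_{N/2-1}\,e^{-m^2/(2N)}$ (from $c_{N/2-1-m}/c_{N/2-m}\approx(1+m/N)^{-1}$), so the prefactor is concentrated on a window of width $O(\sqrt{N})$ near $k=N/2-1$ and $\sum_k c_k=O(1)$; together with $0\le P\le 1$ this bounds the third summand by $O(1)$. Either supply this concentration argument, or bypass the computation entirely as the paper does by invoking \eqref{EN GinOE leading order}.
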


Notice the difference by a factor $1/\sqrt{2}$ for $\beta=4$ comparing \eqref{EN disc GinUE} and \eqref{EN disc GinSE}. 
Let us stress that in \cite{MR3450566}, the behaviour \eqref{EN disc GinUE} was shown in the more general context of the elliptic GinUE. 
Furthermore, for the GinUE, more precise asymptotic behaviour is available in \cite{charlier2022asymptotics}, see also Remark~\ref{Rem_ML FCS}. 
We mention that the order $O(\sqrt{N})$ is closely related to the order of the semi-large gap probabilities, cf. \cite{charlier2021large,byun2023almost}.
Let us also note that for the GinUE case, the optimal convergence rate of $O(N^{-1/2})$ to the circular law was established in \cite{MR4302281}. This result was further generalised to the products of GinUE matrices as well \cite{MR4254801}. 

In Figure~\ref{Fig_ENGinOE} (A), the different contributions of real and complex eigenvalues to the expected number are plotted for $\beta=1$.
In Figure~\ref{Fig_ENGinOE} (B), the deviation from the total number of eigenvalues $N$ is compared in all three ensembles.

\begin{figure}[t]
    \begin{subfigure}{0.48\textwidth}
        \begin{center}
            \includegraphics[width=\textwidth]{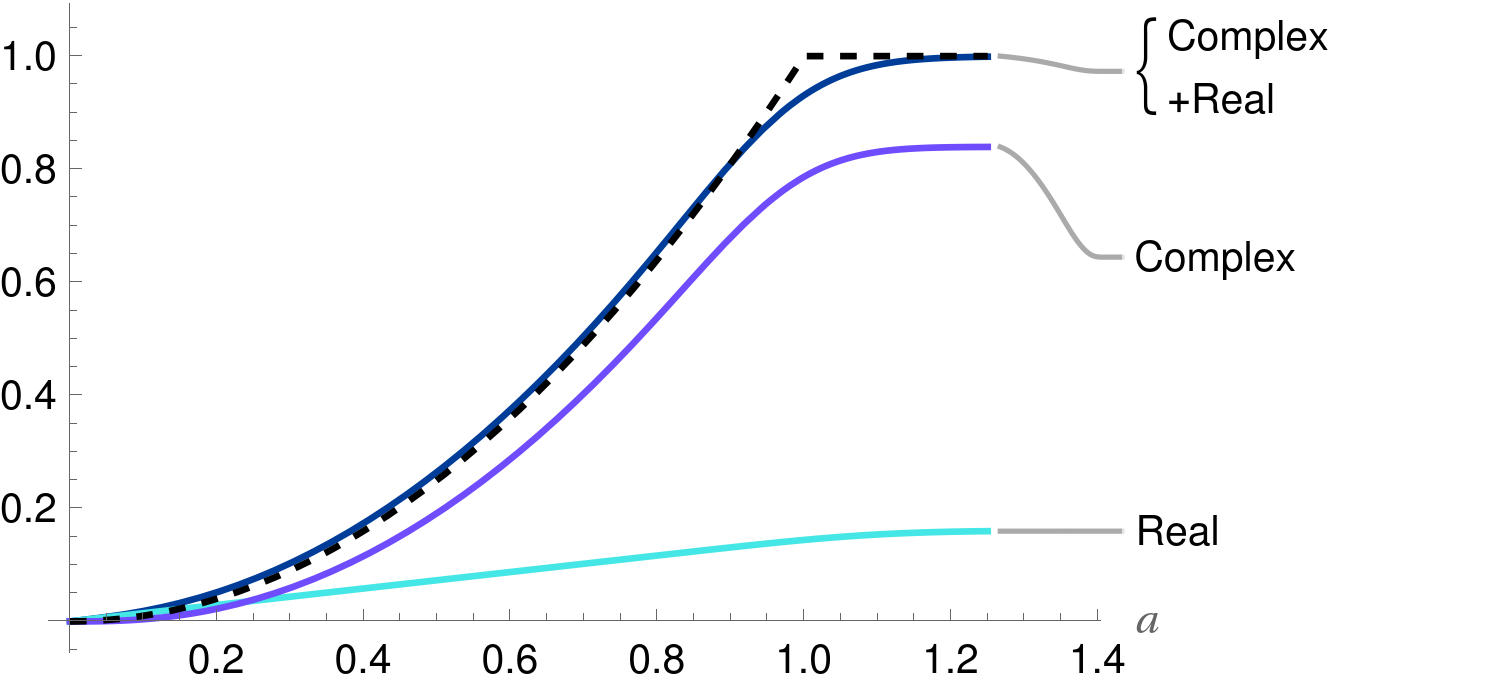}
        \end{center}
        \caption{$E_N^{(1)}(a), E_{N,\C}^{(1)}(a), E_{N,\R}^{(1)}(a) $}
    \end{subfigure} 
    \begin{subfigure}{0.48\textwidth}
        \begin{center}
            \includegraphics[width=\textwidth]{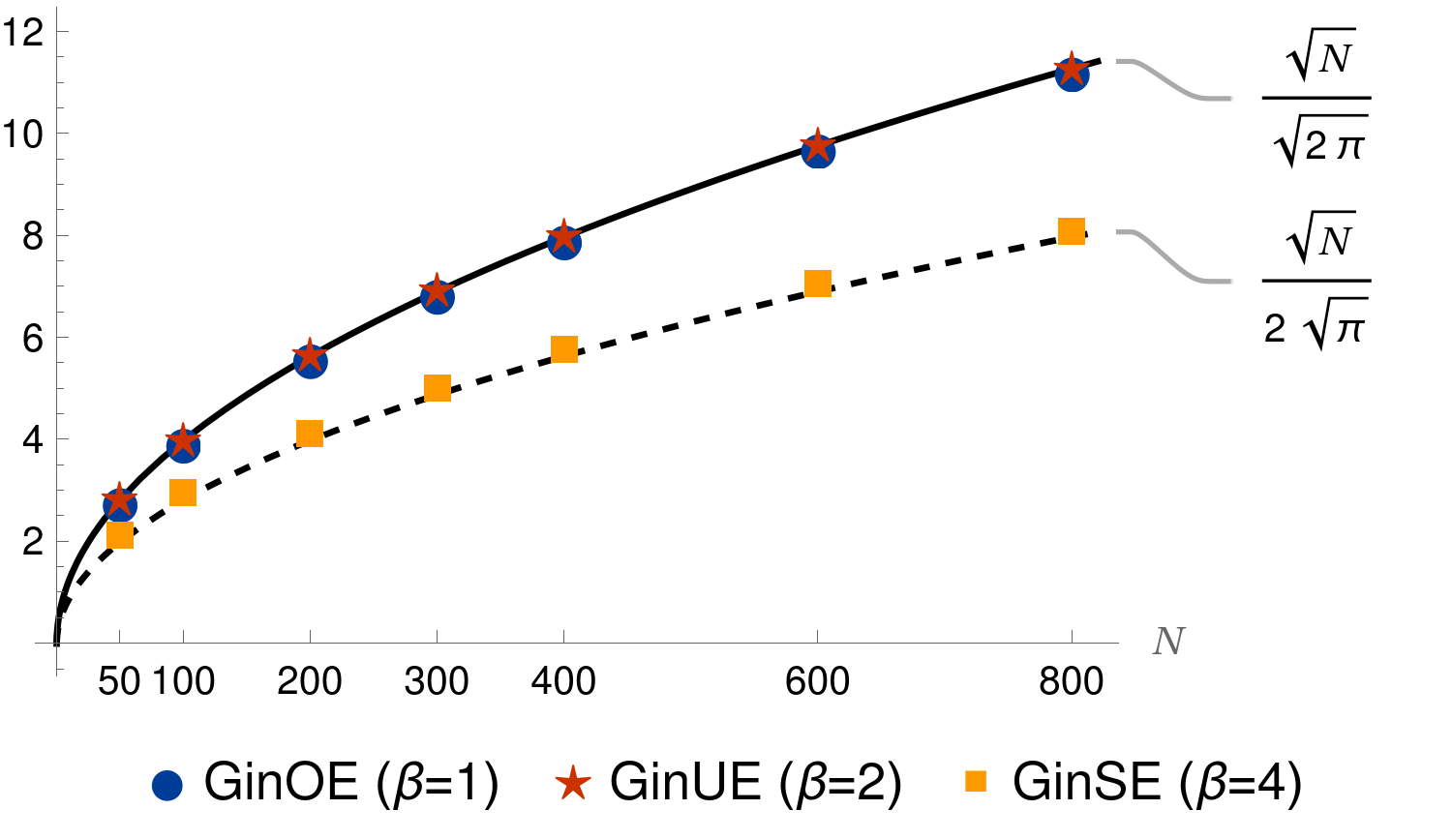}
            \caption{$N - E_N^{(\beta)}(1)$} 
        \end{center}
    \end{subfigure} 
    \caption{(A): Graphs of $a \mapsto E_N^{(1)}(a)/N $ (Complex+Real), $ E_{N,\C}^{(1)}(a)/N $ (Complex) and $ E_{N,\R}^{(1)}(a)/N $ (Real) in the GinOE, where $N=30$.  \\   
    (B): Datapoints of $N \mapsto N - E_N^{(\beta)}(1)$ computed from simulations and comparison to Corollary~\ref{Cor_outside the disc number}. The GinUE and GinOE nicely agree in the top curve, and the GinSE is smaller by a factor of $1/\sqrt{2}$ as predicted.}
    \label{Fig_ENGinOE}
\end{figure}

\subsubsection{Infinite Ginibre ensembles in the origin scaling limit}\label{subsubsec:mean-inf}

The origin scaling limit is defined by rescaling all variables in the $k$-point correlation functions as $z_j^\prime=\sqrt{N}z_j$, $j=1,\dots,k$, and in particular the radius of the disc for the FCS as $R = \sqrt{N} a$. In other words we define
\begin{equation}\label{lim Rkb24}
{R}_{k}^{(\beta=2,4)}(z_1^\prime,\dots,z_k^\prime):=\lim_{N\to\infty}N^{-k}\ \textbf{R}_{N,k}^{(\beta=2,4)}(z_1=z_1^\prime/\sqrt{N},\dots,z_k=z_k^\prime/\sqrt{N}).
\end{equation}
For the GinOE we have to distinguish the different dimensions of the eigenvalues:
\begin{align}
{R}_{k,\C}^{(1)}(z_1^\prime,\dots,z_k^\prime)&:=\lim_{N\to\infty}N^{-k}\ \textbf{R}_{N,k,\C}^{(1)}(z_1=z_1^\prime/\sqrt{N},\dots,z_k=z_k^\prime/\sqrt{N}),
\label{originRkC1}
\\
{R}_{k,\R}^{(1)}(x_1^\prime,\dots,x_k^\prime)&:=\lim_{N\to\infty}N^{-k/2}\ \textbf{R}_{N,k,\R}^{(1)}(x_1=x_1^\prime/\sqrt{N},\dots,x_k=x_k^\prime/\sqrt{N}).
\label{originRkR1}
\end{align}
Likewise, for the mixed correlation functions $\textbf{R}_{N,k,\R,l,\C}^{(1)}$ defined in Appendix~\ref{App:A}, we have for the origin scaling limit for $k=1$ real and $l=1$ complex eigenvalue
\begin{equation}
{R}_{1,\R,1,\C}^{(1)}(x_1^\prime,z_1^\prime):= \lim_{N\to\infty}N^{-\frac{1}{2}-1}\ 
  \textbf{R}_{N,1,\R,1,\C}^{(1)}(x_1=x_1^\prime/\sqrt{N},z_1=z_1^\prime/\sqrt{N}).
  \label{originR11}
\end{equation}
Consequently, we define the limiting mean at the origin as 
\begin{equation}
E^{(\beta=2,4)}(R) := \int_{D_R} R_1^{(\beta=2,4)}(z) \,dA(z), \qquad E^{(1)}(R) :=  E^{(1)}_\C(R)+E^{(1)}_\R(R),
\label{limEDef}
\end{equation}
where
\begin{equation}
E^{(1)}_\C(R):= \int_{D_R} R_{1,\C}^{(1)}(z) \,dA(z), \qquad  E^{(1)}_\R(R) := \int_{-R}^R R_{1,\R}^{(1)}(x) \,dx . 
\label{limEDef1}
\end{equation}
These have the interpretation as the expected number of eigenvalues in a disc of radius $R$.  Furthermore, for the radial densities of complex eigenvalues on $\C$ we write 
\begin{equation}
\widehat{R}_1^{(\beta=2,4)}(r):= \frac{1}{2\pi} \int_0^{2\pi} R_1^{(\beta=2,4)} ( r e^{i\theta} )\,d \theta, \qquad 
\widehat{R}_{1,\C}^{(1)}(r):= \frac{1}{2\pi} \int_0^{2\pi} R_{1,\C}^{(1)} ( r e^{i\theta} )\,d \theta.
\end{equation}

\begin{prop}[\textbf{Limiting mean in the origin scaling limit}] \label{Prop_EN inf}
We have 
\begin{align}
 \label{E DR C}
&E^{(2)}(R) = R^2,
\\
&\label{E DR S}
E^{(4)}(R) = R^2-\frac14+\frac14 e^{-4R^2},
\\
 \label{E DR R}
&E^{(1)}(R) = R^2 +\frac12- \frac12 e^{ 2R^2 } \erfc(\sqrt{2}R).
\end{align}
In particular, we have
\begin{equation}
E^{(1)}(R) = E^{(1)}_\C(R)+ E^{(1)}_\R(R),  
\end{equation}
where 
\begin{equation} \label{E DC DR R}
E_\C^{(1)}(R) =  R^2-\sqrt{ \frac{2}{\pi} } R +\frac12- \frac12 e^{ 2R^2 } \erfc(\sqrt{2}R), \qquad  E_\R^{(1)}(R) =  \sqrt{ \frac{2}{\pi} } R .
\end{equation}
Furthermore, the respective radial densities read
\begin{equation} 
\widehat{R}_1^{(2)}(r)=1, \qquad  \widehat{R}_1^{(4)}(r)=1-e^{-4r^2}, \qquad \widehat{R}_{1,\C}^{(1)}(r)=1-e^{2r^2} \erfc(\sqrt{2}r).
\end{equation}
\end{prop}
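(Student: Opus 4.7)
The plan is to derive all five formulas by substituting $a = R/\sqrt{N}$ into the finite-$N$ expressions of Proposition~\ref{Prop_EN GinOE} and taking $N \to \infty$, justifying the passage to the limit by dominated convergence on the explicit integrands. Under this rescaling one has $Na^2 = R^2$ and $\sqrt{N}a = R$, so the target radius $R$ appears naturally inside the finite-$N$ formulas and no further change of variables is needed at the end.

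First I would handle the GinUE. Substituting $a = R/\sqrt{N}$ into \eqref{E_N cplx Ginibre} gives
\[
E_N^{(2)}(R/\sqrt{N}) \;=\; R^2 \,+\, (N - R^2)\,P(N,R^2) \,-\, \frac{R^{2N}\,e^{-R^2}}{(N-1)!}.
\]
Using $\gamma(N,R^2) = R^{2N}e^{-R^2}\sum_{k\ge 0}R^{2k}/(N)_{k+1}$, one finds $P(N,R^2) = R^{2N}e^{-R^2}/N!\,\bigl(1 + O(1/N)\bigr)$, so the last two terms each scale as $R^{2N}/(N-1)!$ and cancel to leading order, leaving $E^{(2)}(R) = R^2$. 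For the GinSE, substituting into \eqref{EN symplectic Ginibre} turns the first summand into $\tfrac12 E_{2N}^{(2)}(R/\sqrt{N}) \to R^2$ by the GinUE calculation with $2N$ in place of $N$, while the truncated series $\sum_{k=0}^{N-1}(2R^2)^{2k+1}/(2k+1)!$ converges to $\sinh(2R^2)$; the identity $e^{-2R^2}\sinh(2R^2) = \tfrac12(1-e^{-4R^2})$ then yields the formula for $E^{(4)}(R)$.

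For the GinOE real part, I would analyse the density \eqref{RN1real1} under $x = x'/\sqrt{N}$. On compacts in $x'$ we have $Q(N-1,(x')^2) \to 1$, so the first summand contributes $\sqrt{N/(2\pi)}$ to $\textbf{R}_{N,1,\R}^{(1)}(x'/\sqrt{N})$; the second summand carries a factor $|x'/\sqrt{N}|^{N-1}$ and vanishes superexponentially. Integrating over $|x| \le R/\sqrt{N}$ and rescaling gives $E_\R^{(1)}(R) = R\sqrt{2/\pi}$. For the complex part, inserting $a=R/\sqrt{N}$ into \eqref{EN C 1 a GinOE} yields the sum $\sum_{k=0}^{N-2}P(k+1,R^2)$, which differs from the GinUE sum by $P(N,R^2) \to 0$ and hence also tends to $R^2$; dominated convergence (using $Q(N-1,r^2)\le 1$ and the integrability of $r\,e^{2r^2}\erfc(\sqrt{2}r)$ on $[0,R]$) reduces the remaining integral to $2\int_0^R r\,e^{2r^2}\erfc(\sqrt{2}r)\,dr$. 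A single integration by parts with $u = \erfc(\sqrt{2}r)$ and $dv = 2r\,e^{2r^2}\,dr$ (so $v = \tfrac12 e^{2r^2}$ and $du = -\sqrt{8/\pi}\,e^{-2r^2}\,dr$) evaluates this to $\tfrac12\,e^{2R^2}\erfc(\sqrt{2}R) - \tfrac12 + R\sqrt{2/\pi}$, yielding $E_\C^{(1)}(R)$ in the stated form.

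Adding $E_\R^{(1)}(R)$ and $E_\C^{(1)}(R)$ gives $E^{(1)}(R)$ with the $\pm R\sqrt{2/\pi}$ contributions cancelling. The radial densities $\widehat{R}_1^{(\beta)}(r)$ are then read off by differentiating the identity $E^{(\bullet)}(R) = \int_0^R 2r\,\widehat{R}_1^{(\bullet)}(r)\,dr$ in $R$. The main subtlety I expect is the cancellation of the two $R^{2N}/N!$-type terms in the GinUE step (and hence also in the first term of the GinSE formula and in the sum appearing in the GinOE complex case), which requires the next-order asymptotics of $P(N,R^2)$ rather than merely $P(N,R^2) \to 0$; everything else reduces to dominated convergence together with one explicit integration by parts.
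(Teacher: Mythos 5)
Your proof is correct and takes a genuinely different route from the paper's for the GinUE and GinSE cases. You work downward from the finite-$N$ closed-form expressions in Proposition~\ref{Prop_EN GinOE}, substitute $a = R/\sqrt{N}$, and let $N\to\infty$ term by term. The paper instead integrates the limiting one-point densities directly: it invokes $R_1^{(2)}(z)=1$ for GinUE, and for GinSE it uses $R_1^{(4)}(z)=4yF(2y)$ with Dawson's integral, evaluating the angular integral via a series expansion and duplication formula. Your route avoids the Dawson-function series manipulation entirely and derives the GinSE result cleanly from the GinUE one plus the elementary $\sinh$ identity; the paper's route avoids any reliance on the finite-$N$ closed forms and is perhaps more self-contained if one only cares about the origin limit. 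For the GinOE parts the two proofs are essentially the same: both send $Q(N-1,r^2)\to 1$ in the radial complex density of Lemma~\ref{Prop_GinOE expected complex}, perform the same integration by parts, and use the constant limiting real density $1/\sqrt{2\pi}$. Your radial densities by differentiating $E^{(\bullet)}(R)=\int_0^R 2r\,\widehat{R}_1^{(\bullet)}(r)\,dr$ check out.

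One small remark on the step you flag as the ``main subtlety'': the cancellation between $(N-R^2)P(N,R^2)$ and $R^{2N}e^{-R^2}/(N-1)!$ is not actually needed. Since $P(N,R^2)\sim R^{2N}e^{-R^2}/N!$, the quantity $N\,P(N,R^2)\sim R^{2N}e^{-R^2}/(N-1)!\to 0$ superexponentially on its own, so each of the two terms vanishes separately as $N\to\infty$ for fixed $R$. You do need the asymptotic $P(N,R^2)=O(R^{2N}/N!)$, not merely $P(N,R^2)\to 0$, to kill the $N$-prefactor; but no cancellation between the two terms is required.
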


\begin{table}[h!]
    \centering
   \begin{center}
{\def\arraystretch{3}
\centering
\begin{tabular}{ |p{2cm}|p{4.5cm}|p{2.5cm}|p{4.5cm}|  }
 \hline
& \centering \textup{GinOE} & \centering \textup{GinUE} &  \hspace{3.5em} \textup{GinSE} 
\\ 
\hline 
\centering $E^{(\beta)}(R)$ & \centering $R^2 +\dfrac12- \dfrac12 e^{ 2R^2 } \erfc(\sqrt{2}R)$ & \centering $R^2$  &  $ \hspace{2em} R^2-\dfrac14+\dfrac14 e^{-4R^2}$ 
 \\ 
 \hline
\centering $R \to 0$ & \centering $ \sqrt{ \dfrac{2}{\pi} } R+O(R^3) $ & \centering $R^2$ & \hspace{3em} $2R^4+O(R^6)$
 \\ 
   \hline 
\centering $R \to \infty$  &  \centering $R^2 +\dfrac12+O\Big(\dfrac{1}{R}\Big)$  & \centering $R^2$ & \hspace{2em} $R^2-\dfrac14 +O(e^{-4R^2})$
 \\ 
 \hline
\end{tabular}
}
\end{center}
    \caption{Comparison of the exact results for the expected mean number and its expansion for the small and large $R$ limit for all three Ginibre ensembles.  }
    \label{tab:E limits}
\end{table}

The expansion of the exact results from Proposition~\ref{Prop_EN inf} in the limits $R\to0$ and $R\to\infty$ is collected in Table~\ref{tab:E limits}. Recalling the rescaling of the origin limit with $R^2=Na^2$, the large-$R$ limit makes contact with the bulk regime where $0<a<1$. 
Note that all three ensembles show a $R^2$-behaviour in the large $R$ limit, however $E(R) = R^2$ is exact only for the GinUE. 
%For the GinUE, GinSE and GinOE, the bulk behaviour $\sim a^2$ is known to be universal, cf. \eqref{EN GinOE leading order}. 
Furthermore, close to the origin, we observe an $R^4$-behaviour for the GinSE, which arises from the repulsive effect of the real axis.
The linear behaviour for the GinOE stems from the real eigenvalues, while the number of complex eigenvalues in \eqref{E DC DR R} reveals a cubic behaviour 
\begin{equation} \label{E DC DR 0}
E^{(1)}_\C(R) =  \sqrt{ \frac{2}{\pi} } \frac{4}{3} R^3 + O(R^5), \qquad R \to 0.
%\sqrt{ \frac{2}{\pi} } \frac{4}{3} R^3 -R^4 + O(R^5).
\end{equation}
Figure~\ref{Fig_InfGin_Mean} illustrates the expected number through plots for all three ensembles.

\begin{figure}[t]
    \begin{subfigure}{0.48\textwidth}
		\begin{center}
            \includegraphics[width=\textwidth]{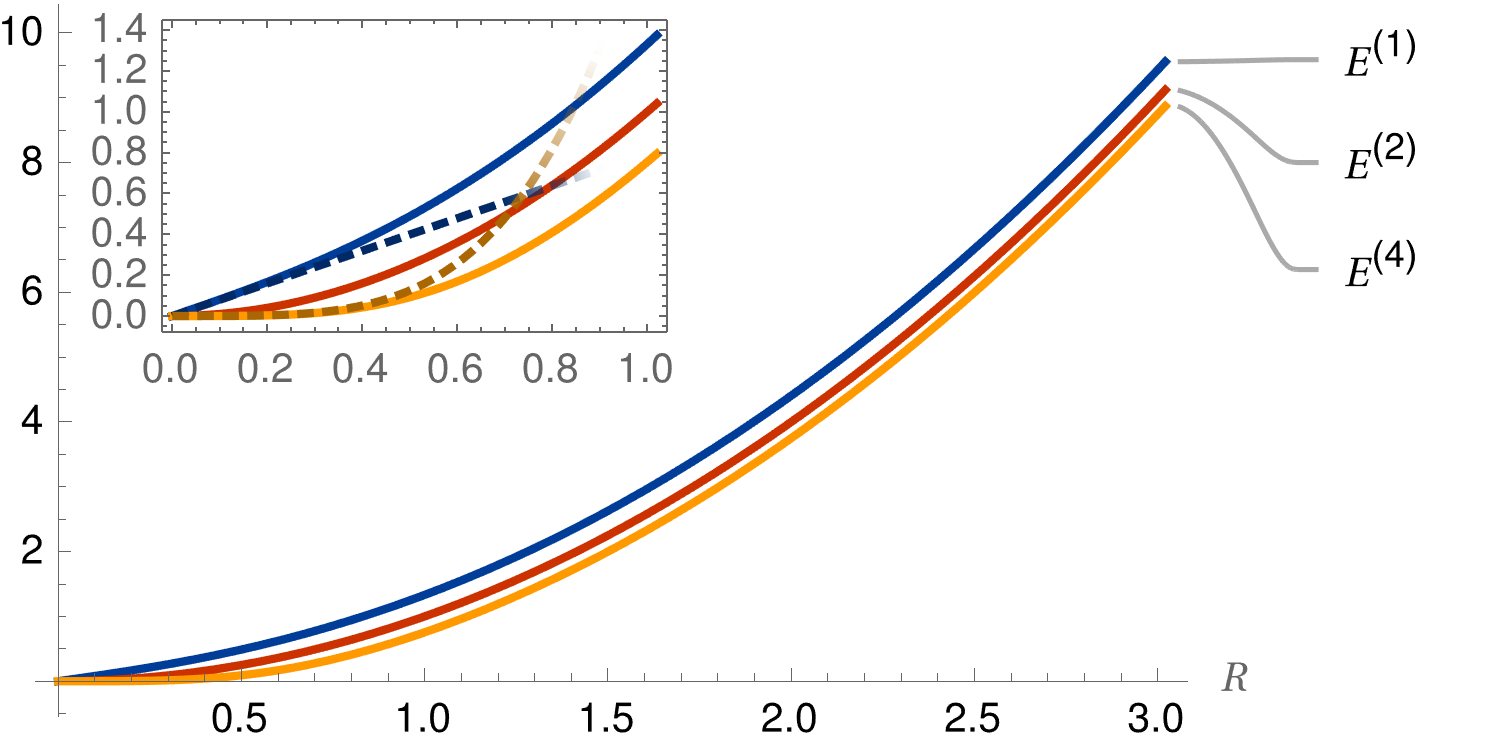}
		\end{center}
		\subcaption{GinOE, GinUE and GinSE}
	\end{subfigure}
	\begin{subfigure}{0.48\textwidth}
		\begin{center}
			\includegraphics[width=\textwidth]{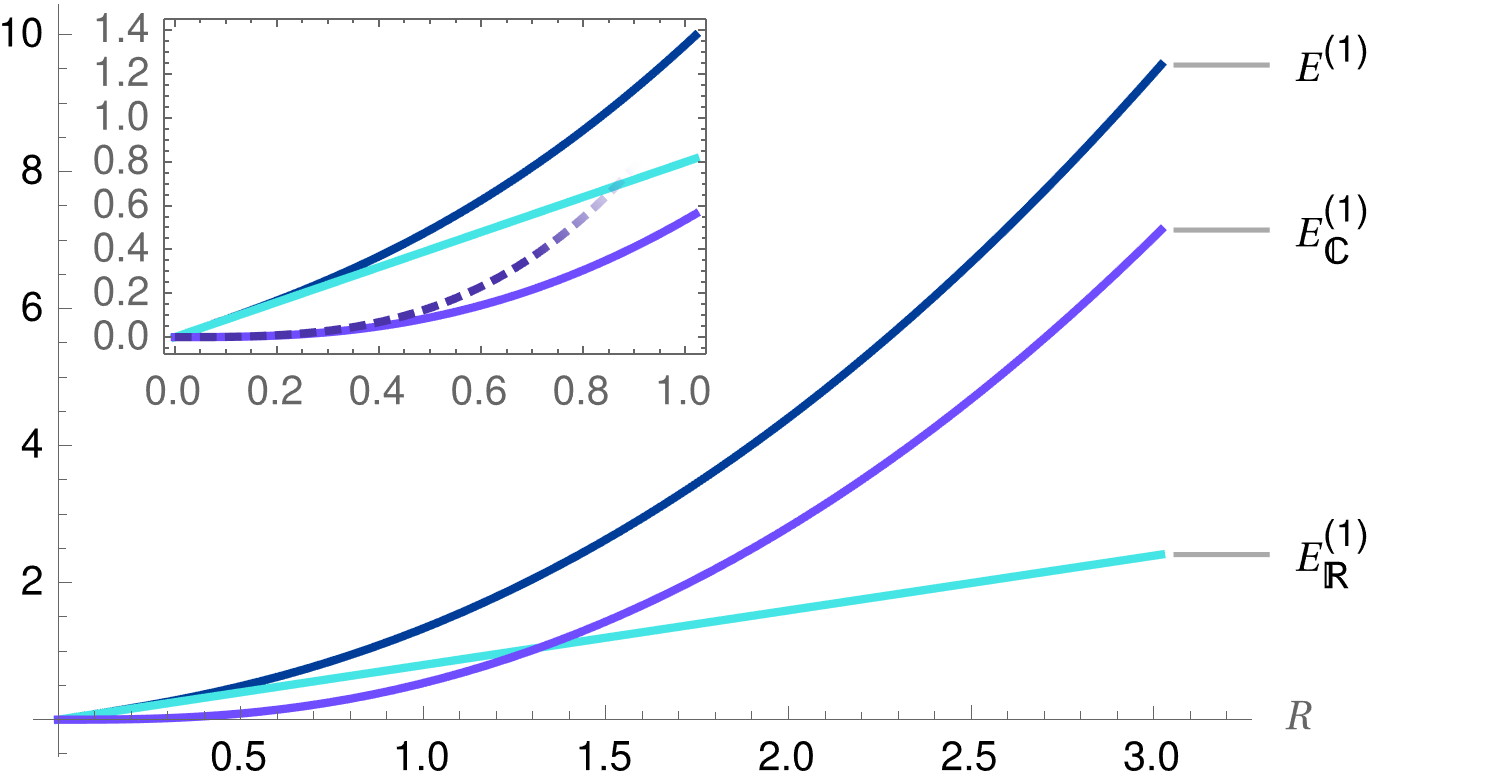}
		\end{center}
		\subcaption{Real and complex counting numbers for GinOE}
    \end{subfigure}
    \caption{Graphs of the expected number of the infinite Ginibre ensembles, cf.\ \eqref{E DR C}, \eqref{E DR S}, \eqref{E DR R} and \eqref{E DC DR R}.
    The inset shows the behaviour close to the origin, where dashed lines indicate the $R \to 0$ limits from Table~\ref{tab:E limits} and \eqref{E DC DR 0}.
    %For the figures we simulated these ensembles by generating matrices from the GinOE and GinUE with size $N \times N$ (where we used $N = 1600$) and from the GinSE with size $(2 N) \times (2 N)$ (where $N = 800$). 
    %In each case we rescaled the samples of 1600 eigenvalues such that the limiting droplet has radius $\sqrt{1600} = 40$.
    %Note that for the GinOE we count each complex eigenvalue but for the GinSE we only count the number of pairs of eigenvalues.
    }
   \label{Fig_InfGin_Mean}
\end{figure}

\newpage 

\subsection{Number variance of the infinite Ginibre ensembles}\label{subsec:var-inf}
\hfill\\
In this subsection we will only consider the large-$N$ limit for the number variance. 

\subsubsection{Infinite Ginibre ensembles in the origin scaling limit}

Similarly to \eqref{limEDef} and \eqref{limEDef1} 
we define the limiting variance in the origin limit as 
\begin{equation}
\begin{split}
\Var \, \NN^{(\beta=2,4)}( D_R ) &:=  \int_{ D_R } R_{1}^{(\beta=2,4)}(z)\,dA(z) 
\\
&\quad +\int_{D_R^2 } \Big( R_{2}^{(\beta=2,4)} (z,w)-R_{1}^{(\beta=2,4)}(z) R_{1}^{(\beta=2,4)}(w) \Big) \,dA(z)\,dA(w).
\end{split}
\end{equation}
Likewise, for $\beta=1$, we define 
\begin{equation}
\Var \NN^{(1)}(D_R) := \Var \NN_\R^{(1)} (D_R) + 2\,\Cov \Big( \NN_\C^{(1)}(D_R),   \NN_\R^{(1)}(D_R) \Big) + \Var \NN_\C^{(1)} (D_R), 
\label{VarDef1}
\end{equation}
where 
\begin{align}
&\Var \NN_\R^{(1)}(D_R) := \int_{-R}^R R_{1,\R}^{(1)}(x)\,dx 
+\int_{(-R,R)^2} \Big(R_{2,\R}^{(1)}(x,y)-R_{1,\R}^{(1)}(x) R_{1,\R}^{(1)}(y)\Big)\, dx\,dy, 
\\ 
&\Var \NN_\C^{(1)} (D_R)  := \int_{ D_R } R_{1,\C}^{(1)}(z)\,dA(z)  +\int_{D_R^2 }\Big( R_{2,\C}^{(1)} (z,w)-R_{1,\C}^{(1)}(z) R_{1,\C}^{(1)}(w) \Big)\,dA(z)\,dA(w),
\\
&\Cov \Big( \NN_\C^{(1)}(D_R),   \NN_\R^{(1)}(D_R) \Big)  := \int_{D_R}  \int_{-R}^R    \Big( R_{1,\R,1,\C}^{(1)}(x,z)- R_{1,\R}^{(1)}(x) R_{1,\C}^{(1)}(z) \Big) \,dx \,dA(z).
\end{align}
In the proof of Proposition \ref{Prop_Expression var inf} we will explain how these expressions are determined through integrals over the upper half plane only, where the limiting correlations functions given in the Appendix \ref{App:A} are used.

Let us first recall some special functions that will be used to express the number variance. 
The modified Bessel function $I_\nu$ of the first kind is given by \cite[Chapter 10]{olver2010nist}
\begin{equation} \label{I nu definition}
I_\nu(z):=\sum_{k=0}^{\infty} \frac{(z/2)^{2k+\nu}}{k! \,\Gamma(k+\nu+1)}.
\end{equation}

\bigskip 

\begin{prop}[\textbf{Number variance of the infinite GinUE and GinSE in the origin limit}] \label{Prop_Var inf csGinibre}
We have the following equivalent expressions for the limiting variance at the origin for $\beta=2,4$.
\begin{itemize}
    \item[\textup{(i)}] \textbf{\textup{GinUE:}} We have
\begin{align}  \label{Var inf cGinibre v1}
\Var \, \NN^{(2)}( D_R) &= \sum_{j=1}^\infty P(j,R^2) Q(j,R^2) \\
&= R^2 e^{-2 R^2} \Big ( I_0(2 R^2) + I_1(2 R^2) \Big ) ,
\label{Var inf cGinibre v2}
\end{align}
with the following limiting expressions:
\begin{align}
 \label{linear behaviour Cplx}
&\Var \, \NN^{(2)}( D_R) \sim   \dfrac{R}{\sqrt{\pi}}, \qquad R \to \infty, 
\\
 \label{linear behaviour Cplx 0}
& \Var \, \NN^{(2)}( D_R) \sim R^2, \qquad R \to 0. 
\end{align}
\item[\textup{(ii)}] \textbf{\textup{GinSE:}} We have 
\begin{align}  \label{Var inf sGinibre v1}
\Var \NN^{(4)}(D_R) &=  \sum_{k=1}^\infty P(2k,2R^2) Q(2k,2R^2) 
\\
&=    R^2 e^{ -4R^2 } \Big ( I_0(4 R^2) + I_1(4 R^2) - {}_1 F_2  ( 1/2; 1,3/2; -4 R^4 ) \Big) \label{Var inf sGinibre v2}
\end{align}
with the limiting expressions:
\begin{align}
\label{linear behaviour Sym}
& \Var \NN^{(4)}(D_R)  \sim  \dfrac{ R }{ \sqrt{2\pi} }, \qquad R \to \infty, 
\\
& \label{linear behaviour Sym 0}
\Var \NN^{(4)}(D_R)  \sim  2 R^4, \qquad R \to 0.  
\end{align}
\end{itemize}
\end{prop}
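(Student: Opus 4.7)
The plan is to establish the two equivalent forms in each of (i) and (ii) separately, and then read off the small-$R$ and large-$R$ asymptotics from the closed Bessel form. The series forms \eqref{Var inf cGinibre v1} and \eqref{Var inf sGinibre v1} follow from a Kostlan/Rider-type radial independence at the origin, while the closed forms \eqref{Var inf cGinibre v2} and \eqref{Var inf sGinibre v2} are obtained by summing the series via Bessel generating-function identities.

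For the series forms, I would invoke Kostlan's theorem: in the origin scaling limit, the squared moduli $\{|z_j^\prime|^2\}_{j\ge 1}$ in the GinUE are mutually independent with $|z_j^\prime|^2 \sim \Gamma(j,1)$; the analogous Rider-type independence for the GinSE gives $|z_j^\prime|^2 \sim \Gamma(2j,2)$. Consequently, $\NN^{(\beta)}(D_R) = \sum_j \mathbf{1}_{|z_j^\prime|^2 \le R^2}$ is a sum of independent Bernoullis with success probabilities $p_j^{(2)} = P(j,R^2)$ and $p_j^{(4)} = P(2j, 2R^2)$, and the variance of an independent sum is $\sum_j p_j(1-p_j)$, which immediately yields \eqref{Var inf cGinibre v1} and \eqref{Var inf sGinibre v1}. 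A direct polar integration of the DPP/Pfaffian variance formula $\Var \NN = \int_{D_R} R_1\,dA - \int_{D_R^2} |K|^2\,dA\,dA$ combined with the series $I_0(2\sqrt{uv}) = \sum_n (uv)^n/(n!)^2$ gives an equivalent route.

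For the closed forms, set $x = R^2$ and $S(x) := \sum_{j\ge 1} P(j,x)\,Q(j,x)$. Differentiating and using the telescoping $P(j,x) - P(j+1,x) = p_{j+1}(x)$ produces, after a term-by-term cancellation, the algebraic identity $\sum_j p_j(x)^2 + 2\sum_j p_j(x)\,P(j,x) = 1$. This reduces $S^\prime(x)$ to $\sum_j p_j(x)^2 = e^{-2x} I_0(2x)$. A direct check using $I_0^\prime = I_1$ and $I_1^\prime(z) = I_0(z) - I_1(z)/z$ then shows that $f(x) := x\,e^{-2x}(I_0(2x) + I_1(2x))$ satisfies $f(0)=0$ and $f^\prime(x) = e^{-2x} I_0(2x)$, so $S \equiv f$ and \eqref{Var inf cGinibre v2} follows. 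For the GinSE, the same scheme applies with the sum restricted to even indices; splitting the generating series $\sum_n (uv)^n/(n!)^2$ into its even and odd parts via $J_0(2\sqrt{uv}) = \sum_n (-1)^n (uv)^n/(n!)^2$ gives $\sum_{k\ge 1} p_{2k}(y)\,p_{2k}(t) = \tfrac12 e^{-y-t}\bigl(I_0(2\sqrt{yt}) - J_0(2\sqrt{yt})\bigr)$ with $y = 2R^2$. The $I_0$-contribution reproduces half of the GinUE closed form, and the $J_0$-contribution integrates to $\tfrac{e^{-2y}}{2}\int_0^y J_0(2t)\,dt$; recognising the Taylor series identity $(1/y)\int_0^y J_0(2t)\,dt = {}_1F_2(1/2;1,3/2;-y^2)$ then yields \eqref{Var inf sGinibre v2} after substituting $y = 2R^2$.

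The asymptotics \eqref{linear behaviour Cplx}--\eqref{linear behaviour Sym 0} follow by standard expansions of the closed forms. As $R \to \infty$, the Bessel asymptotic $I_\nu(z) \sim e^z/\sqrt{2\pi z}$ gives $R^2 e^{-2R^2}(I_0(2R^2)+I_1(2R^2)) \sim R/\sqrt{\pi}$ and $R^2 e^{-4R^2}(I_0(4R^2)+I_1(4R^2)) \sim R/\sqrt{2\pi}$, while the GinSE correction is $O(e^{-4R^2})$ because $\int_0^\infty J_0(2t)\,dt = 1/2$. As $R\to 0$, the Taylor expansions $I_0(z) = 1 + O(z^2)$, $I_1(z) = z/2 + O(z^3)$, and ${}_1F_2(1/2;1,3/2;z) = 1 + z/3 + O(z^2)$ give $\Var\NN^{(2)}(D_R) \sim R^2$, and for the GinSE the cancellation $I_0(0) + I_1(0) - {}_1F_2(1/2;1,3/2;0) = 0$ pushes the onset to the next order, delivering $\Var\NN^{(4)}(D_R) \sim 2R^4$ and reflecting the repulsion from the real axis. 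The main technical obstacle is the closed-form identification in the GinSE case, namely recognising the $J_0$-generated integral as an ${}_1F_2$ correction; the GinUE case reduces cleanly to Bessel recurrences once the series form is established.
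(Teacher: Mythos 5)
Your treatment of the GinUE case (i) is correct and takes a genuinely different route from the paper. The paper evaluates $\Var\NN^{(2)}(D_R)=\int_{D_R}\int_{D_R^c}|K|^2\,dA\,dA$ directly, applies Leibniz's rule in the radius, and then appeals either to tabulated Bessel integrals or to the series manipulation of Remark~\ref{rem:gR prep}. You instead use Kostlan's theorem for the Bernoulli decomposition, differentiate the series $S(x)=\sum_j P(j,x)Q(j,x)$, and reduce $S'(x)$ to $\sum_j p_j(x)^2=e^{-2x}I_0(2x)$ via the diagonal/off-diagonal identity $\sum_j p_j^2+2\sum_j p_jP(j)=1$. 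Both arguments hinge on the same combinatorial fact, $2\sum_{k<l}\tfrac{x^{k+l}}{k!\,l!}=e^{2x}-I_0(2x)$, but differentiating the series first is cleaner and sidesteps the two-dimensional Leibniz manipulation.

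For the GinSE case (ii), two steps need repair. First, the GinSE is a Pfaffian, not determinantal, point process, so Kostlan's theorem does not apply; ``Rider-type independence'' of the moduli is asserted without a reference, and what you actually need (and what the paper proves in Proposition~\ref{Prop_cumulant symplectic}(ii) via de Bruijn's formula) is only that the moment generating function of $\NN_a$ factorizes as a product of Bernoulli generating functions, giving \eqref{variance symplectic} and hence the series \eqref{Var inf sGinibre v1} in the origin limit without any claim about the joint law of moduli. Second, ``the same scheme applies with the sum restricted to even indices'' is misleading and, taken literally, wrong: the reduction $S'=\sum_j p_j^2$ fails once the summation index is restricted to even values, because the $\{k<l\}\leftrightarrow\{k>l\}$ relabelling used to pair off-diagonal terms no longer holds when one index is forced odd. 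Indeed one finds
\begin{equation*}
\frac{d}{dy}\sum_{k\ge 1}P(2k,y)Q(2k,y)=\frac12 e^{-2y}\bigl(I_0(2y)-J_0(2y)\bigr)+e^{-2y}\int_0^y J_0(2t)\,dt,
\end{equation*}
with an extra term you cannot drop. This forces you into the two-variable double-integral form, and then the step ``the $J_0$-contribution integrates to $\tfrac{e^{-2y}}{2}\int_0^y J_0(2t)\,dt$'' is exactly where the paper's technical content lies: it amounts to proving $\int_0^y\int_y^\infty e^{-s-t}J_0(2\sqrt{st})\,ds\,dt=e^{-2y}\int_0^y J_0(2t)\,dt$, which the paper establishes via Leibniz's rule, the alternating-sign binomial identities $2\sum_{n=m}^{2m}(-1)^n\binom{2m}{n}=(-1)^m\binom{2m}{m}$ and $\sum_{n=m+1}^{2m+1}(-1)^n\binom{2m+1}{n}=-(-1)^m\binom{2m}{m}$, and the ${}_1F_2$ Taylor identification. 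You assert this identity without proof; it is true, but it is not a routine integration and must be supplied. The asymptotics \eqref{linear behaviour Cplx}--\eqref{linear behaviour Sym 0} are derived correctly and agree with the paper.
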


The first expression \eqref{Var inf cGinibre v1} 
%corresponds to the large-$N$ limit of the number variance of the $N$-point Ginibre ensemble in the origin regime 
was derived in \cite{lacroix2019rotating}, the second  expression \eqref{Var inf cGinibre v2}  in \cite[Theorem 1.3]{shirai2006large}, where it is written as 
\begin{equation} \label{Var Shirai}
\Var \, \NN^{(2)}( D_R)= \frac{R}{\pi} \int_0^{4R^2} \Big(1-\frac{x}{4R^2}\Big)^{1/2} x^{-1/2} e^{-x} \,dx. 
\end{equation}
We will present an alternative derivation of this result compared to \cite{shirai2006large}.
The expression \eqref{Var inf cGinibre v2} is equivalent to \eqref{Var Shirai} due to \cite[3.364.1, 3.366.1]{gradshteyn2014table}.

The expression \eqref{Var inf sGinibre v1} for $\beta=4$ was shown in \cite{akemann2022universality} as a large-$N$ limit of the number variance \eqref{variance symplectic} of the finite symplectic Ginibre ensemble in the origin %(deep bulk) 
regime.
The second expression \eqref{Var inf sGinibre v2} is new and can be rewritten as 
\begin{equation} 
\Var \NN^{(4)}(D_R)= R^2 e^{ -4R^2 } \Big ( I_0(4 R^2) + I_1(4 R^2) - J_0(4R^2) \frac{\textbf{H}_{-1}(4R^2)}{2/\pi}-J_{1}(4R^2) \frac{\textbf{H}_0(4R^2)}{2/\pi} \Big) , \label{Var inf sGinibre v3}
% R^2 e^{ -4R^2 } \Big [ I_0(4 R^2) + I_1(4 R^2) - \frac{\pi}{2}  \Big( J_0(4x^2) \textbf{H}_{-1}(4x^2)+J_{1}(4x^2) \textbf{H}_0(4x^2) \Big) \Big]
\end{equation}
where 
\begin{equation}
J_\nu(z)= \sum_{k=0}^\infty (-1)^k \frac{(z/2)^{2k+\nu}}{ k! \Gamma(k+\nu+1) } 
\end{equation}
is the Bessel function of the first kind and 
\begin{equation}
\textbf{H}_\nu(z) =\sum_{k=0}^\infty (-1)^k \frac{ (z/2)^{2k+\nu+1} }{ \Gamma(k+\frac32) \Gamma(\nu+k+\frac32) }
\end{equation}
is the Struve function \cite[Chapters 10, 11]{olver2010nist}.

\begin{rem}
\label{rem:Iprime}
Let $c(R):= R^2 e^{-2 R^2} ( I_0(2 R^2) + I_1(2 R^2) )$.
Then we have 
\begin{equation} \label{c'(R) Bessel}
c'(R)= 2 R e^{-2R^2} I_0(2R^2), \qquad 
 c(R)= \frac{1}{8} c''(R)+\Big( R-\frac{1}{8R} \Big) c'(R), 
\end{equation}
which follows from $I_0^\prime(x)=I_1(x)$ and $xI_1^\prime(x)=xI_0(x)-I_1(x)$.
We mention that the function $c(R)$ also appears in several different contexts on the studies of the Ginibre ensembles, see e.g. \cite[Proposition 1]{can2019random} and \cite[Theorem 2.1]{byun2021real}.
\end{rem}

For the limiting variance and covariance of the GinOE at the origin we give explicit expressions as functions of $R$ in Proposition~\ref{Prop_Expression var inf}, containing up to two-fold integrals.
Here, let us state just their asymptotic behaviour.

\begin{thm}[\textbf{Asymptotic number variance of the infinite GinOE in the origin limit}]\label{Thm_GinOE inf}
As $R \to \infty$, we have 
\begin{align}
 \label{Var NR DR inf}
&\Var \NN_\R^{(1)}(D_R) \sim \frac{ 2\sqrt{2}-2 }{\sqrt{\pi}}  R, 
\\
\label{Var NC DR inf}
& \Var \NN_\C^{(1)} (D_R)  \sim \frac{2\sqrt{2} }{ \sqrt{\pi} }  R,
\\
\label{GinOE inf Cov asy}
&\Cov \Big( \NN_\C^{(1)}(D_R),   \NN_\R^{(1)}(D_R) \Big)  \sim -\frac{ 2\sqrt{2}-2 }{\sqrt{\pi}}  R .
\end{align}
In particular, put together we have 
\begin{equation}\label{linear behaviour real}
\Var \NN^{(1)}(D_R) \sim \frac{2}{\sqrt{\pi}} R.
\end{equation}
\end{thm}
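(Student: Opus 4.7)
The starting point is Proposition~\ref{Prop_Expression var inf}, which expresses
$\Var \NN_\R^{(1)}(D_R)$, $\Var \NN_\C^{(1)}(D_R)$ and $\Cov(\NN_\C^{(1)}(D_R), \NN_\R^{(1)}(D_R))$
as explicit finite-dimensional integrals over the limiting $1$- and $2$-point correlations of the infinite GinOE at the origin, whose closed forms are given in Appendix~\ref{App:A}. The plan is, for each of the three pieces, to isolate the leading $O(R)$ behaviour as $R \to \infty$ via a bulk-plus-boundary-layer analysis, reducing each integral to a one-dimensional Gaussian or $\erfc$ integral that can be evaluated in closed form.

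For $\Var \NN_\R^{(1)}(D_R)$ I would start from
\begin{equation*}
\Var \NN_\R^{(1)}(D_R) = \int_{-R}^R R_{1,\R}^{(1)}(x)\,dx + \int_{-R}^R\!\int_{-R}^R \bigl[ R_{2,\R}^{(1)}(x,y) - R_{1,\R}^{(1)}(x) R_{1,\R}^{(1)}(y) \bigr]\,dx\,dy,
\end{equation*}
and observe that as $|x|,|y|\to\infty$ with $|x-y|$ bounded, $R_{1,\R}^{(1)}(x)$ tends to the translation-invariant bulk density $\rho_\R = 1/\sqrt{2\pi}$ while the connected $2$-point density tends to a function $\kappa_\R(x-y)$ of Gaussian-type decay (obtained from the bulk limit of the real-real skew-kernel). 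Splitting the $x$-integration into an interior bulk region and an $O(1)$-wide boundary layer near $\pm R$, and using dominated convergence for the tails, yields
\begin{equation*}
\Var \NN_\R^{(1)}(D_R) = 2R\Bigl(\rho_\R + \int_{\R}\kappa_\R(v)\,dv\Bigr) + o(R),
\end{equation*}
and evaluating $\int_\R \kappa_\R$ by a direct Gaussian integration reproduces the coefficient $(2\sqrt{2}-2)/\sqrt{\pi}$ of \eqref{Var NR DR inf}.

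For $\Var \NN_\C^{(1)}(D_R)$ and the covariance, the same philosophy applies in 2D. The key observation is that the GinOE complex-complex pre-kernel decomposes, in the origin scaling limit, into a GinUE-like translation-invariant part (essentially $e^{z\bar w - |z|^2/2 - |w|^2/2}$) plus a conjugate-reflection $\erfc$ part that is localised to a strip $|\im z| = O(1)$ around the real axis. I would introduce local coordinates $z = R e^{i\theta} + \zeta$ transverse to $\partial D_R$ and exploit the bulk-plus-boundary-layer structure: the translation-invariant part gives a perimeter integral along $\partial D_R$ that can be reduced, after an explicit Gaussian integration transverse to the circle, to the GinUE-type contribution, while the conjugate-reflection part contributes only near the two points $\pm R$ where $\partial D_R$ meets the real axis, and its contribution is an explicit $\erfc$-Gaussian integral. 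Together these produce \eqref{Var NC DR inf}. The same boundary-layer analysis for the mixed kernel $R_{1,\R,1,\C}^{(1)}(x,z)$ from Appendix~\ref{App:A} produces \eqref{GinOE inf Cov asy}, the negative sign reflecting the effective repulsion between real eigenvalues and nearby complex ones.

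The main obstacle is the bookkeeping in the complex sector: the GinOE Pfaffian pre-kernel is a sum of a bulk part and a reflection part, and $R_{2,\C}^{(1)}$ contains several Pfaffian cross-products of these pieces, each of which must be tracked through the bulk versus boundary-layer regions and shown to be either translation-invariant, localised near the real axis, or $o(R)$. The summed asymptotics \eqref{linear behaviour real} furnishes a useful consistency check, since
$(2\sqrt{2}-2) - 2(2\sqrt{2}-2) + 2\sqrt{2} = 2$
reproduces the coefficient $2/\sqrt{\pi}$, precisely twice the GinUE bulk value $1/\sqrt{\pi}$ from \eqref{linear behaviour Cplx}.
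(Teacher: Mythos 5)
Your proposal takes a genuinely different route from the paper. The paper starts from the explicit integral formulas in Proposition~\ref{Prop_Expression var inf} and then performs direct asymptotic analysis: \eqref{Var NR DR inf} is read off from the closed form in Proposition~\ref{Prop_Expression var inf}~(i) by series expansion, the covariance is obtained by expanding the one-dimensional integral in Proposition~\ref{Prop_Expression var inf}~(ii), and \eqref{Var NC DR inf} comes from the detailed case-by-case estimates of the eight double integrals $I_{\pm,k}(R)$ in Lemma~\ref{Lem_integrals GinOE c}, using light-cone coordinates and delicate cancellations of the $O(R^2)$ terms. Your bulk-plus-boundary-layer picture is a different organizing principle. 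For the real sector it works cleanly and is arguably shorter: $R_{2,\R}^{c}(x,y)$ depends only on $x-y$, decays like a Gaussian, and a short computation of $\rho_\R + \int_\R R_{2,\R}^{c}(v)\,dv = \frac{1}{\sqrt{2\pi}} + \bigl(\frac{1}{\sqrt{2\pi}} - \frac{1}{\sqrt{\pi}}\bigr) = \frac{\sqrt 2 - 1}{\sqrt\pi}$ (which is \emph{nonzero}, i.e.\ the real subprocess is not completely screened) recovers $\frac{2\sqrt 2 -2}{\sqrt\pi}R$; I checked this arithmetic and it is correct.

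The complex sector, however, has a genuine gap. You assert that the translation-invariant part ``gives a perimeter integral along $\partial D_R$ that can be reduced \ldots\ to the GinUE-type contribution'' while the conjugate-reflection part ``contributes only near the two points $\pm R$''. If this picture were complete, $\Var\NN_\C^{(1)}(D_R)$ would equal the GinUE disk result plus $O(1)$ corrections, i.e.\ $\frac{R}{\sqrt{\pi}}+O(1)$ (or, accounting for the pairing $\NN_\C = 2\NN_\C^+(H_R)$ and taking only the arc of $\partial H_R$, $\frac{2R}{\sqrt{\pi}}+O(1)$), whereas the theorem states $\frac{2\sqrt{2}R}{\sqrt{\pi}}$ -- off by a factor $\sqrt 2$. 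The missing $O(R)$ term comes from the strip $\{|\Im z| = O(1)\}$ along the \emph{entire} real diameter of $D_R$, not just from the two points $\pm R$: in that strip the density has an $O(1)$ deficit relative to the GinUE bulk value, and the screening identity $\rho_\C(z)+\int_{\C_+} R_{2,\C}^{c}(z,w)\,dA(w)=0$ fails (the full screening in the GinOE requires the real eigenvalues as well), so the ``variance per unit area'' is nonzero over a region of area $\sim R$. This is precisely the subtlety flagged in the last paragraph of Appendix~\ref{App:A}: bulk agreement of the kernels with the GinUE does \emph{not} transfer to the variance because the domain meets the real axis. In the paper this effect is automatically captured by the $I_{+,3}(R)$ and $I_{+,2}(R),I_{+,4}(R)$ asymptotics in Lemma~\ref{Lem_integrals GinOE c}, which carry $O(R^2)$ and $O(R)$ pieces that only partially cancel against $2E_\C^{(1)}(R)$. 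To salvage your boundary-layer approach in the complex sector you would need to add a strip contribution of the form $2R \cdot c_{\mathrm{strip}}$, where $c_{\mathrm{strip}}$ is an explicit transverse integral of the density deficit and the imperfect screening across the strip -- this is neither $O(1)$ nor localized to $\pm R$.

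A smaller remark: for the covariance you sketch the same boundary-layer philosophy, but the paper's derivation is actually quite direct (retain the $(\hat a + R)\erf(\hat a + R)$ term in Proposition~\ref{Prop_Expression var inf}~(ii) and evaluate a single $\erfc$-Gaussian moment via \eqref{firstterm}); here the two approaches are essentially the same once you notice the integrand is concentrated where $v=O(1)$. Your consistency check $(2\sqrt 2 - 2)-2(2\sqrt 2 - 2)+2\sqrt 2 = 2$ is correct and is a useful sanity test, though it cannot by itself detect the compensating errors in the individual $\Var\NN_\C$ and $\Cov$ terms that your complex-sector argument would produce.
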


%We stress that the asymptotic behaviours \eqref{Var NR DR inf}, \eqref{Var NC DR inf} and \eqref{GinOE inf Cov asy} support the Conjecture~\ref{Conj_GinOE} for the bulk of the spectrum.
Figure~\ref{Fig_InfGin_Var} illustrates the variance for the three ensembles.

\begin{rem}
The negative covariance between real and complex eigenvalues can be
intuitively understood as follows. Looking at the joint density \eqref{jpdf-kl-1}, it vanishes both when a complex eigenvalue (pair) becomes real, from the factor in the second line, and when it comes close to a real eigenvalue,
from the two factors in the first line.
The same negative correlation can also be expected on a larger scale
away from the origin, e.g.\ from the expected number outside the limiting
support in Corollary \ref{Cor_outside the disc number}, eq.~\eqref{EN detailed}: Whenever the number of real eigenvalues increases (by an $O(\sqrt{N})$), the number of complex eigenvalues has to decrease by the same amount, in order to keep the total balance in \eqref{EN disc GinUE}. This is in agreement with the total variance vanishing in Conjecture~\ref{Conj_GinOE}.
\end{rem}

\begin{rem}\label{remark_Poisson}
By Proposition~\ref{Prop_EN inf} we have that as $R \to 0$,  
\begin{equation}
E^{(4)}(R) \sim 2R^4, \qquad E_\R^{(1)}(R) \sim  \sqrt{ \frac{2}{\pi} } R.  
\end{equation}
On the one hand, it follows from \eqref{linear behaviour Sym 0} and \eqref{Var NR DR inf closed} that as $R \to 0$,   
\begin{equation}
\Var \NN^{(4)}(D_R)  \sim  2 R^4, \qquad \Var \NN_\R^{(1)}(D_R) \sim \sqrt{ \frac{2}{\pi} } R. 
\end{equation}
Thus, one can observe the behaviours:
\begin{equation}
\lim_{R \to 0} \frac{\Var \NN^{(4)}(D_R)}{E^{(4)}(R)} = \lim_{R \to 0} \frac{\Var \NN_\R^{(1)}(D_R)}{E_\R^{(1)}(R)} = 1,
\end{equation}
which is a characteristic feature of the Poisson point processes. Indeed, for the GinUE and GinSE, one can show that not only the variance but also all the higher cumulants are asymptotically the same as their mean value in this regime. This follows from the asymptotic results on the FCS (see \cite{lacroix2019rotating} for the GinUE case and Theorem~\ref{Thm_higher cumulants} for the GinSE case) -- see the Remark~\ref{rem_Poisson} below. Furthermore, the above asymptotic behaviour again indicates the possibility of Poissonian statistics also for the GinOE in the origin limit.
\end{rem}

\begin{figure}[t]
    \begin{subfigure}{0.48\textwidth}
		\begin{center}
			\includegraphics[width=\textwidth]{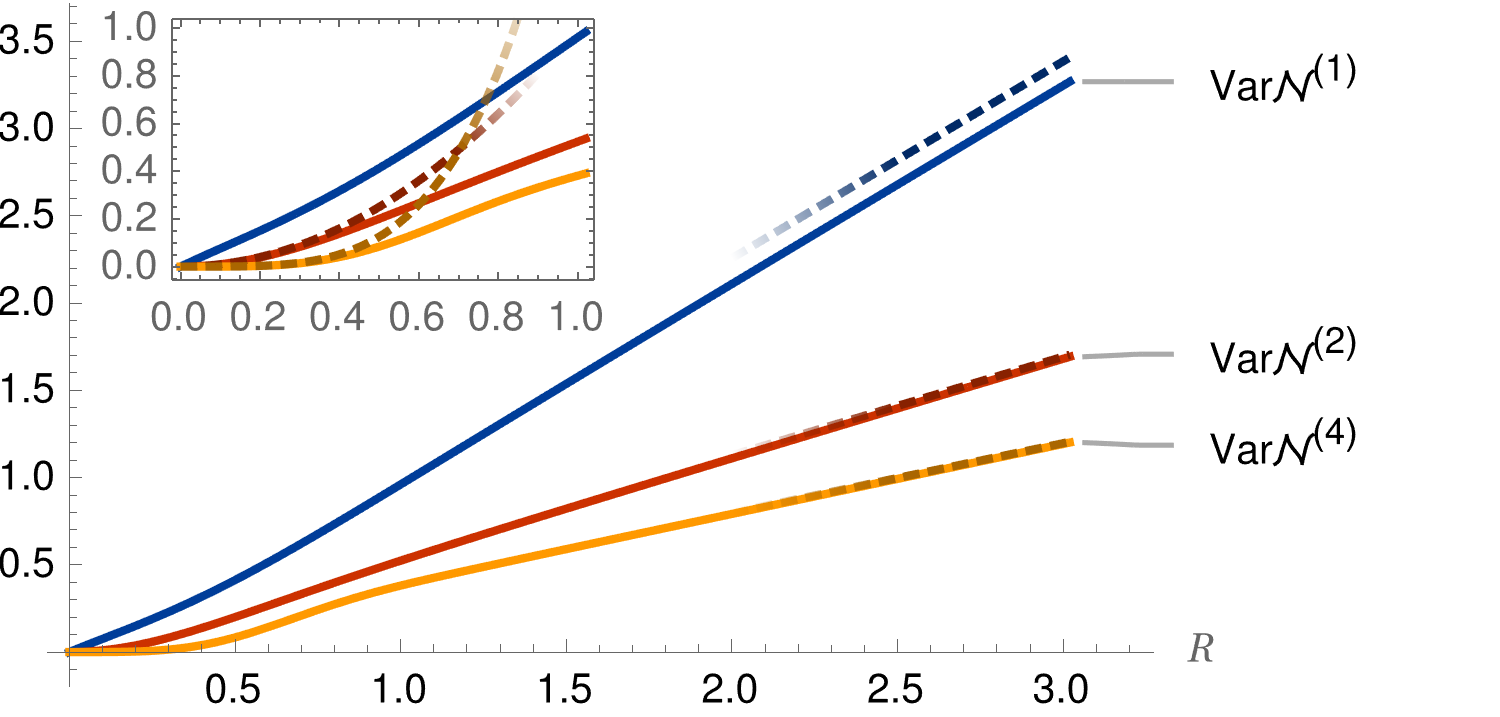}
		\end{center}
		\subcaption{Variance for GinOE, GinUE and GinSE}
	\end{subfigure}
	\begin{subfigure}{0.48\textwidth}
		\begin{center}
			\includegraphics[width=\textwidth]{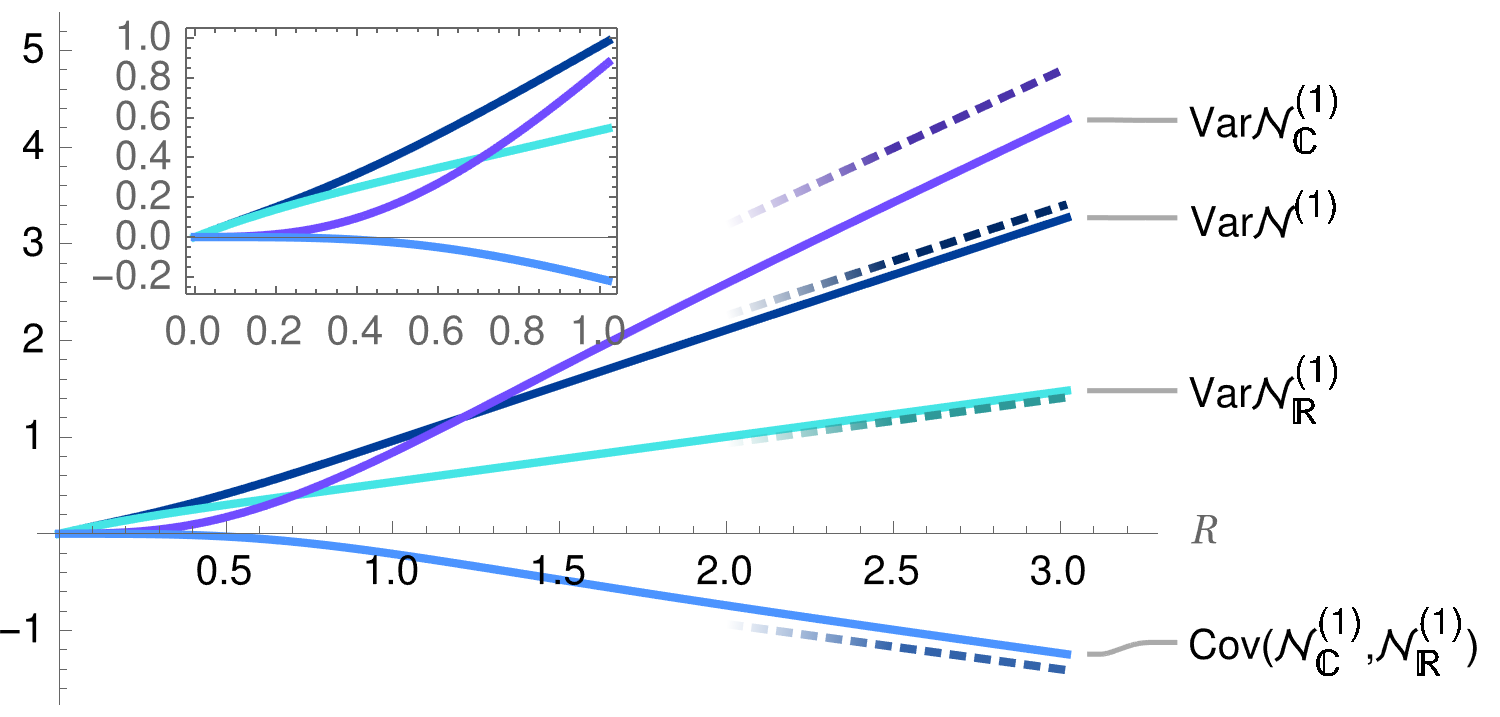}
		\end{center}
		\subcaption{Real and complex (co)variances for GinOE}
    \end{subfigure}
    \caption{Graphs of the number variance of the infinite Ginibre ensembles, cf.\ Proposition~\ref{Prop_Var inf csGinibre} and Proposition~\ref{Prop_Expression var inf}.
    Dashed lines indicate the $R \to 0$ and $R \to \infty$ asymptotics, cf.\ \eqref{Variance GinUSOE inf asy}, Proposition~\ref{Prop_Var inf csGinibre} and Proposition~\ref{Thm_GinOE inf}.
    }
    \label{Fig_InfGin_Var}
\end{figure}

\begin{rem}
By the asymptotic behaviours \eqref{linear behaviour Cplx}, \eqref{linear behaviour Sym} and \eqref{linear behaviour real}, we have for all three Ginibre ensembles the following universal behaviour in the bulk of the spectrum
\begin{equation} \label{Variance GinUSOE inf asy}
\Var \, \NN^{(\beta)}( D_R) \sim   \frac{R}{\sqrt{\pi}}  \frac{2}{\mathfrak{c}(\beta)}  \qquad R \to \infty, 
\end{equation}
where $\mathfrak{c}(\beta)$ is given by 
\begin{equation} \label{c(beta) 124}
\mathfrak{c}(\beta)= \begin{cases}
    1 & \beta=1,
    \\
    2 & \beta=2,
    \\
    2\sqrt{2} & \beta=4.
\end{cases}
\end{equation} 
By comparison with Remark~\ref{varbeta} below one can notice that this asymptotic formula is consistent with the bulk scaling limit \eqref{Var universal N}, with $a=R/\sqrt{N}$. 
\end{rem}

\subsubsection{Conjectured bulk and edge behaviour for the GinOE}

In this subsection we consider again the initial definition of the Ginibre ensembles at the beginning of this Section \ref{sec:main}, that is without rescaling the distances as in the origin limit. Hence, the limiting support of the spectrum is the centred disc $D_a$ with radius $a=1$.  
Let $\NN_a$ be the number of eigenvalues in $D_a$. 
For the finite GinOE it can be written as 
\begin{equation}
\NN_a= \NN_a^\C + \NN_a^\R, 
\end{equation}
where $\NN_a^\C$ (resp., $\NN_a^\R$) is the number of complex (resp.\ real) eigenvalues.

We mention that the counting statistics of real eigenvalues of the (elliptic) GinOE have been studied in several recent works \cite{fitzgerald2021fluctuations, MR3612267, byun2023large, forrester2023local, byun2021real,forrester2015diffusion}. Additionally, a very recent work \cite{goel2023central} has also investigated the counting statistics of  complex eigenvalues.
We also refer the reader to  \cite{cipolloni2021fluctuation,cipolloni2023central,rourke2016central,kopel2015linear} and references therein for the study of fluctuations of the linear statistics. 

For the GinUE and GinSE the limiting variance $\Var \NN_a^{(\beta=2,4)}$ is known in the bulk ($0<a<1$) and edge scaling limits. It was shown to be universal for non-Gaussian rotational invariant potentials in \cite{lacroix2019rotating} and \cite{akemann2022universality} respectively, see Remark~\ref{varbeta} below.  

For the GinOE, as in \eqref{VarDef1} we have
\begin{equation}
\Var \NN_a^{(1)} = \Var \NN_a^\C+ \Var \NN_a^\R + 2 
\Cov( \NN_a^\C, \NN_a^\R ).
\label{var-a-Def}
\end{equation}
Then we have the following conjecture for the scaling limits of each contribution to the different scaling regimes:

\begin{conj}[{\bf GinOE number variance scaling limits}]\label{Conj_GinOE}
We expect the following as $N \to \infty$.
\begin{itemize}
    \item \textup{\textbf{Bulk:}} For $a \in (0,1)$ fixed, we have 
    \begin{equation}
      \frac{ \Var \NN_a^\C }{ \sqrt{N/\pi} } \sim 2\sqrt{2} \, a , \qquad  \frac{ \Var \NN_a^\R }{ \sqrt{N/\pi} } \sim (2\sqrt{2}-2) \, a, \qquad   2\,\dfrac{  \Cov( \NN_a^\C, \NN_a^\R )  }{  \sqrt{N/\pi} } \sim -2(2\sqrt{2}-2) \, a. 
    \end{equation}
    In particular, for the total variance we obtain
    \begin{equation}
    \frac{ \Var \NN_a^{(1)} }{ \sqrt{N/\pi}  } \sim 2\, a .
    \end{equation}  %\smallskip 
    \item \textup{\textbf{Edge:}} In the scaling limit $a=1-\SS/\sqrt{2N}$, we have
       \begin{equation}
    \frac{ \Var \NN_a^{(1)} }{ \sqrt{N/\pi}  } \sim 2 f(\SS) , \qquad f(\SS)  :=  \sqrt{ 2\pi }   \int_{-\infty}^{\SS} \frac{\erfc(t)\erfc(-t)}{4}  \,dt. 
    \end{equation} %\smallskip 
      \item \textup{\textbf{Outside:}} For $a>1$ fixed, we have
            \begin{equation}
\frac{\Var \NN_a^\C}{\sqrt{N/\pi}} \sim 2\sqrt{2}-2,\qquad   \frac{\Var \NN_a^\R}{\sqrt{N/\pi}} \sim 2 \sqrt{2}-2, \qquad 2\frac{ \Cov( \NN_a^\C, \NN_a^\R ) }{ \sqrt{N/\pi} } \sim -2(2\sqrt{2}-2),
\end{equation}
leading to the vanishing of the total variance
    \begin{equation}
    \frac{ \Var \NN_a^{(1)} }{ \sqrt{N/\pi}  } \sim 0.
    \end{equation}
\end{itemize}
\end{conj}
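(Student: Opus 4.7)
The plan is to decompose \eqref{var-a-Def} into three pieces and to treat each via the Pfaffian kernel structure of the GinOE. For each of $\Var \NN_a^\C$, $\Var \NN_a^\R$ and $\Cov(\NN_a^\C,\NN_a^\R)$, I would write the variance (or covariance) as an integral of the corresponding truncated one- and two-point correlation functions on $D_a$, $(-a,a)$ or their product (see Appendix~\ref{App:A}). Because the truncated two-point functions are localized on a scale $1/\sqrt{N}$, the dominant $O(\sqrt{N})$ contribution in the bulk and edge regimes comes from a thin neighbourhood of $\partial D_a$, together with the endpoints $\pm a$ for the real-real term. The first step is to apply the known pointwise limits of the GinOE Pfaffian kernel: away from the real axis the complex-complex block converges to the GinUE Ginibre kernel (cf.\ \cite{MR2530159,akemann2019universal}); near the real axis it retains a genuinely Pfaffian form; the real-real block converges to the one-dimensional real-axis Pfaffian kernel; and the mixed real-complex block has an explicit $\erfc$-type scaling limit.

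For the \emph{bulk} regime ($a\in(0,1)$), after rescaling distances by $\sqrt{N}$, each of the three contributions should reduce to a one-dimensional integral along $\partial D_a$. Away from the real axis, the GinUE arc contribution to $\Var \NN_a^\C$ is already encoded in the $R\to\infty$ asymptotics of Proposition~\ref{Prop_Var inf csGinibre}~(i). The additional terms arise from a local analysis in neighbourhoods of $(\pm a,0)$, where all three blocks of the Pfaffian kernel interact, and this is precisely where the conjectured coefficients $2\sqrt{2}$, $2\sqrt{2}-2$ and $-2(2\sqrt{2}-2)$ should emerge. A useful consistency check is that the three limits must sum to the universal total $2/\sqrt{\pi}$ predicted by Theorem~\ref{Thm_GinOE inf} after setting $R=\sqrt{N}a\to\infty$. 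The \emph{edge} regime is handled identically but with edge scaling kernels; since the conjecture only gives the total variance, the individual contributions can be summed before localization. The particular shape $f(\SS)=\sqrt{2\pi}\int_{-\infty}^{\SS}\erfc(t)\erfc(-t)/4\,dt$ strongly suggests that, after all cancellations between the three blocks, the leading edge variance is driven by the known real-eigenvalue edge density profile $\erfc(t)\erfc(-t)/(2\sqrt{2\pi})$ integrated across the boundary up to $\SS$. For the \emph{outside} regime ($a>1$), the starting point is that $\NN_a\approx N$ deterministically with exponentially small error, so $\Var \NN_a^{(1)}\to 0$ is automatic; the individual $\Var \NN_a^\C$ and $\Var \NN_a^\R$ then converge to the known full-spectrum variances of complex and real eigenvalues in the GinOE (cf.\ \cite{forrester2015diffusion,byun2023large,goel2023central}), and the covariance is forced by the vanishing of their sum.

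The main obstacle is the local analysis at the intersection of the real axis with the boundary circle $\partial D_a$ (or, for the edge case, with the droplet edge). The GinOE Pfaffian kernel there mixes three asymptotic regimes of very different nature --- the two-dimensional GinUE bulk/edge limit, the one-dimensional real-axis Pfaffian limit, and the Pfaffian coupling between them --- and one needs a matched asymptotic expansion on an intermediate scale $1/\sqrt{N}\ll |z\mp a|\ll 1$ that is uniform enough to extract the precise numerical coefficients of the conjecture. A secondary difficulty is to justify the interchange of limit and integration uniformly in $a$ in the full variance integrals. These are presumably the reasons the authors state the result as a conjecture rather than a theorem, and they are absent in the GinUE treatment of \cite{lacroix2019rotating} and the normal GinSE treatment of \cite{akemann2022universality}, where no mixed real-complex Pfaffian kernel appears.
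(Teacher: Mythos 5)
You should first observe that Conjecture~\ref{Conj_GinOE} is stated in the paper precisely as a conjecture, and the authors provide no proof. Their supporting evidence consists of: (i) Theorem~\ref{Thm_GinOE inf}, whose $R\to\infty$ asymptotics, under the formal substitution $R=\sqrt{N}\,a$, reproduce the bulk coefficients $2\sqrt{2}$, $2\sqrt{2}-2$, $-2(2\sqrt{2}-2)$ by matching the deep-bulk (origin) regime onto the bulk; (ii) the reference \cite{forrester2007eigenvalue} for the outside regime, exactly along the lines you suggest ($\NN_a^\R$ converges to the total number of real eigenvalues, $\NN_a^\C = N-\NN_a^\R$, whence $\Var\NN_a^\C=\Var\NN_a^\R$ and $\Cov=-\Var$); and (iii) numerical simulations for the edge (Figure~\ref{Fig_GinOE var N}). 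Your proposal is a genuine and plausible programme for an actual proof, different in character from the paper's evidence: rather than continuing the origin limit into the bulk, you propose a direct local kernel analysis along $\partial D_a$ with a matched asymptotic expansion at $(\pm a, 0)$. You also identify the same obstacle as the authors implicitly face, namely the multi-scale behaviour of the Pfaffian kernel at the real-axis crossing, which is why neither side closes the argument; so there is no gap to fault here, only an open problem correctly diagnosed.

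Two imprecisions in your heuristic are worth flagging. First, you state that the GinUE arc contribution to $\Var\NN_a^\C$ away from the real axis is encoded in Proposition~\ref{Prop_Var inf csGinibre}(i), with the $(\pm a,0)$ neighbourhoods furnishing ``additional terms.'' Because the non-real GinOE eigenvalues come in conjugate pairs, $\NN_a^\C = 2\,\NN_a^{\C,+}$, so the far-from-axis arc already contributes $2a\sqrt{N/\pi}$, twice the GinUE bulk value; the conjectured $2\sqrt{2}$ then splits as $2+(2\sqrt{2}-2)$, and the neighbourhoods of $(\pm a,0)$ must produce the second summand, which is the same order $O(\sqrt{N})$ and the same coefficient that controls $\Var\NN_a^\R$ and $-2\Cov(\NN_a^\C,\NN_a^\R)$. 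These local contributions are not subleading corrections, and the weight of the local analysis is heavier than your phrasing suggests. Second, reading $f(\SS)$ at the edge as driven by the ``real-eigenvalue edge density profile'' is misleading: by Remark~\ref{varbeta} this $f$ is the universal edge scaling function common to the GinUE and GinSE, ensembles with no real eigenvalues at all; the coincidence of form is a consequence of universality, not of the GinOE real spectrum.
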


Let us mention that the bulk part of Conjecture~\ref{Conj_GinOE} is supported by Theorem~\ref{Thm_GinOE inf} and the outside part by \cite{forrester2007eigenvalue}, these are also summarised in Table~\ref{Table_GinOE variance}.
All parts of our conjecture are supported by numerical simulations, see Figure~\ref{Fig_GinOE var N}.

\begin{table}[h!]
    \centering    
\begin{center}
{\def\arraystretch{3}
\begin{tabular}{ |p{2.75cm}|p{2.75cm}|p{2.75cm}|p{2.75cm}|p{2.75cm}|  }%{ |c|c|c|c|c| } 
 \hline
& \centering $\dfrac{\Var \NN_a^\C}{\sqrt{N/\pi}} $ & \centering   $\dfrac{\Var \NN_a^\R}{\sqrt{N/\pi}} $  & \centering $2\,\dfrac{  \Cov( \NN_a^\C, \NN_a^\R )  }{  \sqrt{N/\pi} }$   & \hspace{1.5em} $\dfrac{\Var \NN_a^{(1)}}{\sqrt{ N/\pi}} $
 \\ 
\hline 
\centering $a\in (0,1)$ & \centering $2\sqrt{2} \, a$ & \centering $(2\sqrt{2}-2)\,a$ & \centering $-2(2\sqrt{2}-2)\, a$ & \hspace{2.75em}$2\, a$
% \\ 
% \hline
%\centering $a=1-\frac{\mathcal{S}}{ \sqrt{2N} }$ &   \centering $???$  &  \centering $???$ &  \centering $???$  &  \hspace{2em} $2 f(\mathcal{S})$
 \\ 
   \hline 
\centering $a>1$ & \centering $2\sqrt{2}-2$ & \centering $2\sqrt{2}-2$ & \centering $-2(2\sqrt{2}-2)$ & \hspace{2.75em} $0$
 \\ 
 \hline
\end{tabular}
}
\end{center}   
    \caption{Leading order asymptotic behaviour of the individual contributions to the number variance in the GinOE in the bulk with fixed $0<a<1$ and outside the support for $a>1$.}
    \label{Table_GinOE variance}
\end{table}

\begin{figure}[t]
    \begin{subfigure}{0.5\textwidth}
		\begin{center}
			\includegraphics[width=\textwidth]{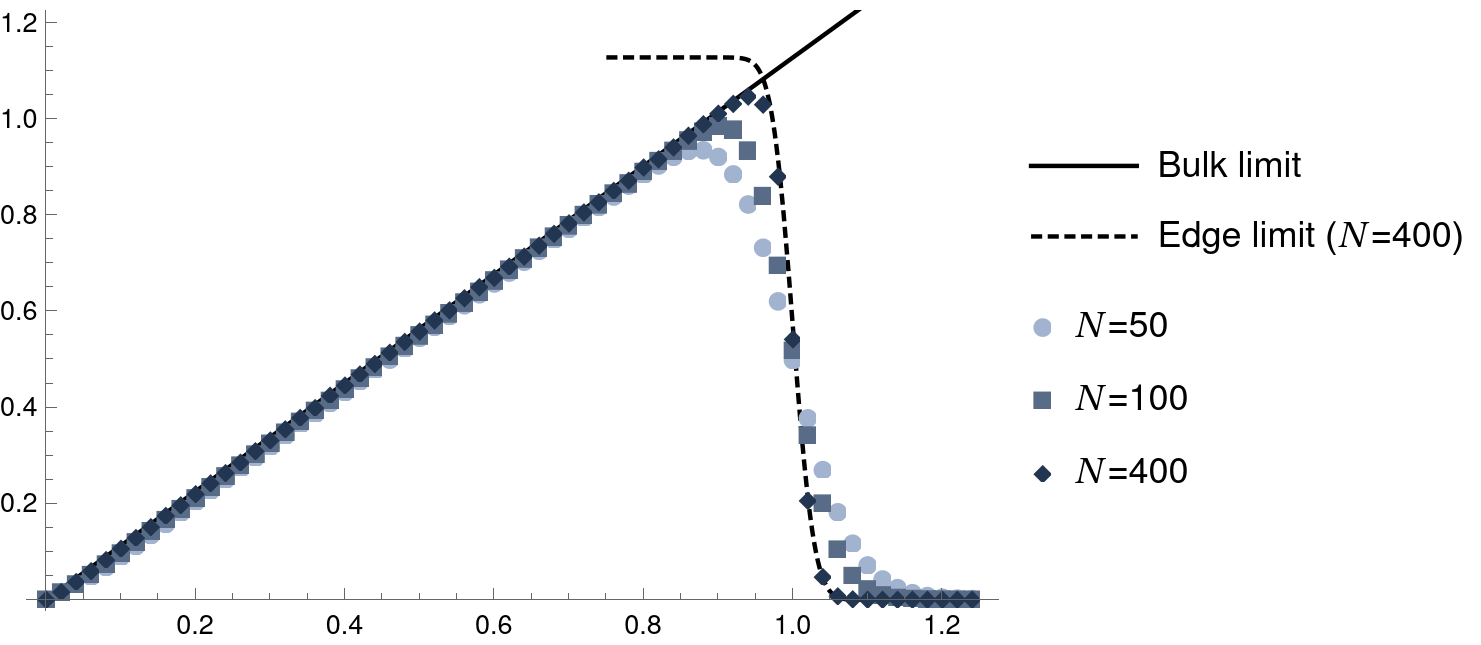}
		\end{center}
		\subcaption{Plot of $\Var \NN_a / \sqrt{N}$}
	\end{subfigure}
	
	\begin{subfigure}{0.32\textwidth}
		\begin{center}
			\includegraphics[width=\textwidth]{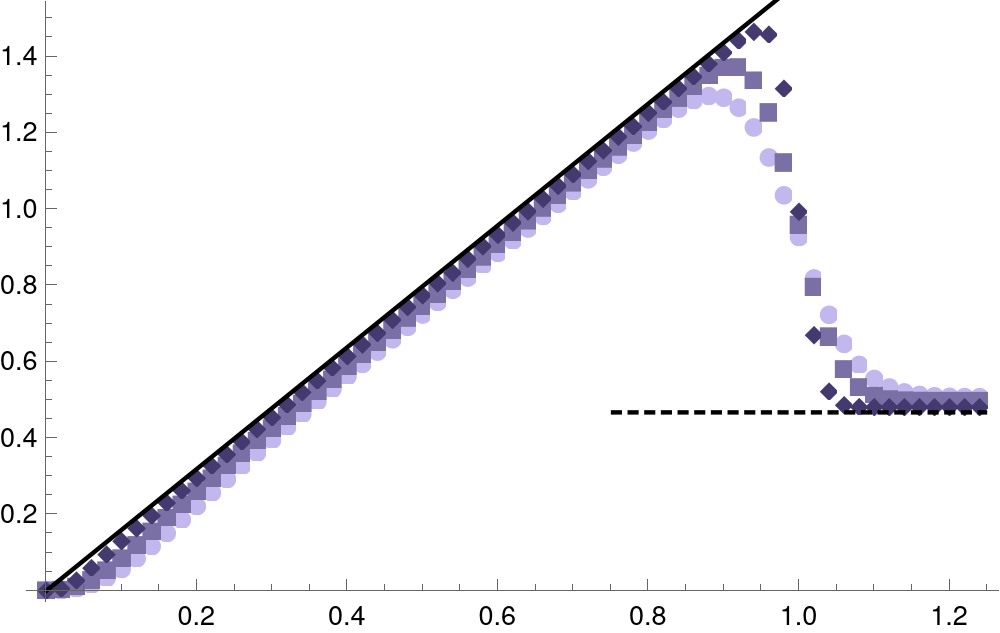}
		\end{center}
		\subcaption{Plot of $\Var \NN_a^\C / \sqrt{N}$}
	\end{subfigure}
	\begin{subfigure}{0.32\textwidth}
		\begin{center}
			\includegraphics[width=\textwidth]{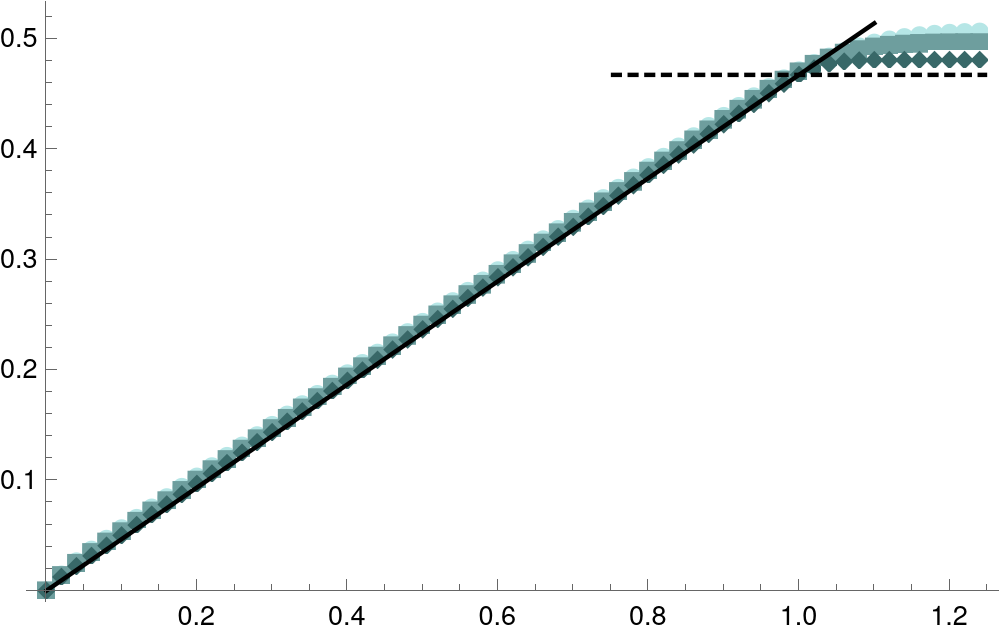}
		\end{center}
		\subcaption{Plot of $\Var \NN_a^\R / \sqrt{N}$}
	\end{subfigure}
	\begin{subfigure}{0.32\textwidth}
		\begin{center}
			\includegraphics[width=\textwidth]{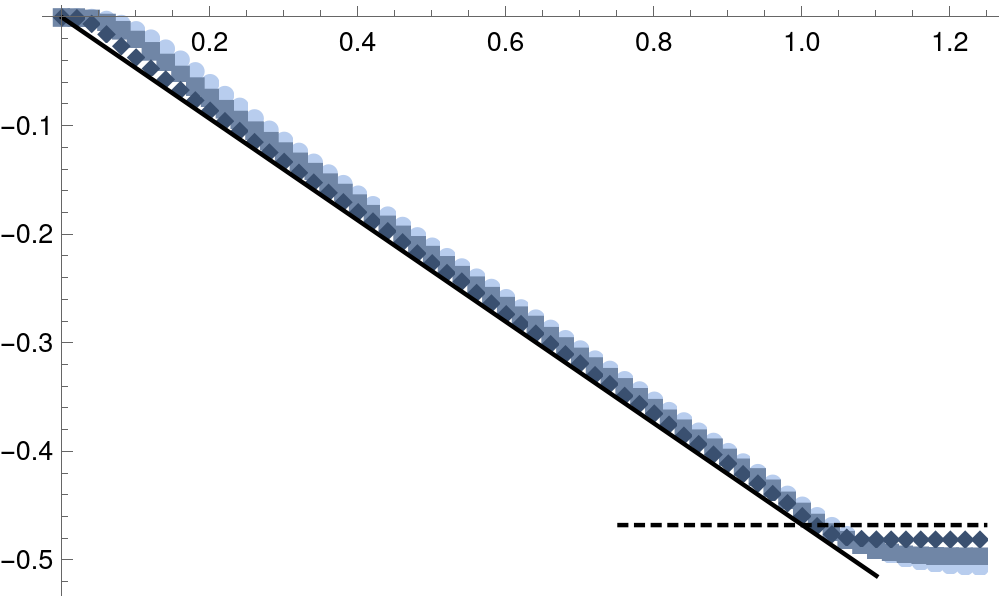}
		\end{center}
		\subcaption{Plot of $\Cov( \NN_a^\C, \NN_a^\R ) / \sqrt{N}$}
	\end{subfigure}

	\caption{Comparison of each contribution to the number variance of the GinOE with numerical simulations for ensembles of size $N$.} \label{Fig_GinOE var N}
\end{figure}

%\todo[inline]{We may add finite-$N$ expressions in Appendix.}

\begin{rem}[\textbf{Bulk and edge universality}]\label{varbeta}
Conjecture~\ref{Conj_GinOE} suggests that the number variance for all three  Ginibre ensembles is universal in the bulk and at the edge and takes the form
\begin{equation} \label{Var universal N}
 \mathfrak{c}(\beta) \frac{ \Var \NN_a^{(\beta)} }{ \sqrt{N/\pi}  } \sim  
\begin{cases}
   \displaystyle 
2 a  & a \in (0,1)\quad \mbox{(bulk)},
\smallskip 
\\
2 f(\SS) & a = 1-\SS/\sqrt{2N} \quad \mbox{(edge),}
\end{cases}
\end{equation}
where $\mathfrak{c}(\beta)$ is given in \eqref{c(beta) 124}.
The variance formula for the GinUE $(\beta=2)$ and GinSE $(\beta=4)$ can be found in \cite[Theorem 1.2]{akemann2022universality} for more general potentials $W(|z|)$, where the variance and scaling variable $\SS$ have to be rescaled by $1/\sqrt{\Delta W(a)}$, the inverse mean level spacing, to preserve the universal form.  
More detailed asymptotic behaviours for these cases can be found in \cite{charlier2022asymptotics,byun2022characteristic}. 
\end{rem}

\subsection{Full counting statistics of planar symplectic ensembles}\label{subsec:FCS-SE}
\hfill\\
For comparison we present both the FCS for the non-Gaussian extension of the GinUE and GinSE class, the so-called planar ensembles.
The determinantal Coulomb gas ensemble (random normal matrix ensemble) 
is defined by the joint density \eqref{jpdf2}, replacing the Gaussian potential $|z|^2$ in the exponent by a rotationally invariant external potential, $W \equiv W_N : \C \to \R$ which possibly depends on $N$. Its integrable structure \eqref{Rk beta2N} remains valid, with the kernel given in terms of planar orthogonal polynomials with respect to $e^{-NW}$, cf.~\cite{byun2022progress}.

Likewise, the same generalisation is made in the Pfaffian Coulomb gas ensemble based on the joint density \eqref{jpdf4}, called planar symplectic ensembles.
Also here, the integrable structure \eqref{Rk beta4N} remains, with the corresponding skew-kernel \eqref{kappaN} given in terms of planar skew-orthogonal polynomials, cf.~\cite{byun2023progress}.

Following \cite{akemann2022universality}, we shall use the following terminology. 

\begin{defi}\label{Qsuit}
A rotationally invariant potential $W_N(z) =g_N(|z|)$ will be called suitable, if it satisfies the following conditions:
\begin{enumerate}
\item $g_N(r) \gg \log r$ as $r \to \infty$;
\smallskip
\item $g_N \in C^2(0,1]$;
\smallskip
    \item $r g'_N(r)$ increases on $(0,\infty)$;
\smallskip
\item
$\lim\limits_{r \to 0} r g_N'(r)=0$, and $\lim_{N \to \infty}g'_N(1)=4$. 
\end{enumerate}
\end{defi}
Note that condition \textit{(3)} is equivalent to $\Delta W_N(z) > 0$ for $z \neq 0$ (here, $\Delta = \partial_z \overline{\partial_z}= \frac14 (\partial_x^2+\partial_y^2)$ for $z=x+iy$ is one-quarter of the usual Laplacian) and condition \textit{(4)} implies that the limiting droplet is the unit disk.

Let us write
\begin{equation} \label{ortho norm}
h_j:=\int_\C  e^{-N W_N(z)}\,|z|^{2j}\,dA(z) = 2 \int_0^\infty  e^{-N g_N(r)}\,r^{2j+1}\,dr,
\end{equation}
for the non-vanishing $j$-th moments, which are at the same time the squared norms of planar orthogonal polynomials (monomials).
We also write
\begin{align}
\begin{split} \label{ortho norm trunc}
h_{j,1}(a)&:= \int_{ |z|<a } e^{-N W_N(z)} |z|^{2j}\,dA(z)=  2 \int_0^a e^{-N g_N(r)} r^{2j+1}  \,dr,
\\
h_{j,2}(a)&:= \int_{ |z|>a } e^{-N W_N(z)} |z|^{2j}\,dA(z)=  2 \int_a^\infty e^{-N g_N(r)} r^{2j+1}  \,dr 
\end{split}
\end{align}
for the truncated moments or squared norms.
We also denote 
\begin{equation}
\mathcal{L}_{j}(a)=\frac{ h_{j,1}(a) }{ h_j }, \qquad \mathcal{M}_{j}(a)=\frac{ h_{j,2}(a) }{ h_j }.
\end{equation}

Let us recall that  $\NN_a := \# \{z_j: |z_j| \le a \}$ denotes the number of particles in the centred disc $D_a$ of radius $a.$
Here, 
\begin{equation} \label{polylog function}
\Li_s(x) = \sum_{k = 1}^{\infty} k^{-s} x^k, \quad |x|<1, 
\end{equation}
is the definition of the polylogarithm function, that also converges for $|x|=1$ and $\Re(s)>1$. It is given by analytic continuation otherwise.

\begin{prop}[\textbf{Finite-$N$ expressions of the cumulant generating functions}] \label{Prop_cumulant symplectic}
Let $W$ be a suitable potential, then we have the following for $a \geq 0$ and $N \in \N$. 
\begin{itemize}
    \item[\textup{(i)}] \textbf{\textup{(Random normal matrix ensembles)}} We have for the moment generating function
\begin{equation}
\mathbb{E}_{N,W}^{(\beta=2)}\Big[  e^{u \NN_a } \Big] = \prod_{j=0}^{N-1} \Big(e^{u} \mathcal{L}_{j}(a)+ \mathcal{M}_{j}(a) \Big) . 
\end{equation}
Furthermore, for any $p \ge 2$, the $p$-th cumulant of $\NN_a$ has the expression
\begin{equation} \label{cumulants_complex}
\kappa_p^{(\beta=2)}(a) = (-1)^{p+1} \sum_{j = 0}^{N-1} \Li_{1 - p}\Big( 1 - \frac{1}{\mathcal{L}_j(a)} \Big).
\end{equation}
We also have for the mean
\begin{equation}\label{mean value complex}
E_{N,W}^{(\beta=2)}(a):=\mathbb{E}_{N,W}^{(\beta=2)}[\NN_a]=\sum_{j = 0}^{N-1} \mathcal{L}_j(a).
\end{equation}
    \item[\textup{(ii)}] \textbf{\textup{(Planar symplectic ensembles)}} We have
\begin{equation} \label{eq:CGF symplectic}
\mathbb{E}_{N,W}^{(\beta=4)}\Big[  e^{u  \NN_a } \Big] = \prod_{j=0}^{N-1} \Big( e^{u}\mathcal{L}_{2j+1}(a)+ \mathcal{M}_{2j+1}(a) \Big).
\end{equation}
Furthermore, for any $p \ge 2$, the $p$-th cumulant of $\NN_a$ has the expression
\begin{equation} \label{cumulants symplectic}
\kappa_p^{(\beta=4)}(a) = (-1)^{p+1} \sum_{j = 0}^{N-1} \Li_{1 - p}\Big( 1 - \frac{1}{\mathcal{L}_{2j+1}(a)} \Big).
\end{equation}
For the mean, we have
\begin{equation} \label{mean value symplectic}
E_{N,W}^{(\beta=4)}(a):=\mathbb{E}_{N,W}^{(\beta=4)}[\NN_a] = \sum_{j = 0}^{N-1} \mathcal{L}_{2j+1}(a).
\end{equation}
\end{itemize}
\end{prop}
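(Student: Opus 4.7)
My plan is to realise the moment generating function as a ratio of partition functions associated with a modified, but still rotationally invariant, potential. Define $\widetilde{W}_N$ by
\[
e^{-N\widetilde{W}_N(z)} := e^{-NW_N(z)}\bigl(1 + (e^u-1)\,\mathbf{1}_{D_a}(z)\bigr),
\]
which remains rotationally invariant since $D_a$ is a centred disc. Writing $e^{u\NN_a} = \prod_{l=1}^N\bigl(1 + (e^u-1)\mathbf{1}_{D_a}(z_l)\bigr)$ and absorbing the extra factor into the Gibbs weight of \eqref{jpdf2}--\eqref{jpdf4}, one obtains
\[
\mathbb{E}_{N,W}^{(\beta)}\bigl[e^{u\NN_a}\bigr] = \frac{\mathcal{Z}_N^{(\beta)}[\widetilde{W}]}{\mathcal{Z}_N^{(\beta)}[W]}, \qquad \beta\in\{2,4\},
\]
so the whole problem reduces to a product representation of the partition function for an arbitrary rotationally invariant weight.

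For case (i), I would invoke Andreief's identity on the squared Vandermonde and use that, for a rotationally invariant weight $V(|z|)$, the Gram matrix $\bigl[\langle z^j, z^k\rangle_V\bigr]_{j,k=0}^{N-1}$ with $\langle f,g\rangle_V = \int_\C f\,\overline{g}\,e^{-NV}\,dA$ is diagonal with entries $h_j[V]$. Hence $\mathcal{Z}_N^{(2)}[V] = c_N^{(2)} \prod_{j=0}^{N-1} h_j[V]$ for a $V$-independent constant $c_N^{(2)}$. Noting that $h_j[\widetilde{W}] = h_j + (e^u-1)h_{j,1}(a) = h_j\bigl(e^u\mathcal{L}_j(a)+\mathcal{M}_j(a)\bigr)$ by $\mathcal{L}_j+\mathcal{M}_j=1$, the ratio $\mathcal{Z}_N^{(2)}[\widetilde W]/\mathcal{Z}_N^{(2)}[W]$ yields the claimed product. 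The mean formula \eqref{mean value complex} follows from $\partial_u|_{u=0}$.

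For case (ii), I would use the de Bruijn / Pfaffian representation of the symplectic partition function,
\[
\mathcal{Z}_N^{(4)}[V] = c_N^{(4)}\,\Pf\Bigl[\langle z^j,z^k\rangle^{(s)}_V\Bigr]_{j,k=0}^{2N-1}, \quad \langle f,g\rangle^{(s)}_V := \int_\C\bigl(f(z)g(\bar z)-f(\bar z)g(z)\bigr)(z-\bar z)\,e^{-NV(|z|)}\,dA(z).
\]
For rotationally invariant $V$ the angular integration forces $\langle z^j,z^k\rangle^{(s)}_V$ to vanish unless $|j-k|=1$, and the skew-Gram matrix becomes block-anti-diagonal with $2\times2$ blocks whose single non-zero entry is proportional to $h_{2j+1}[V]$. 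Evaluating the Pfaffian then gives $\mathcal{Z}_N^{(4)}[V] \propto \prod_{j=0}^{N-1} h_{2j+1}[V]$ with a $V$-independent prefactor, which cancels in the ratio; this yields \eqref{eq:CGF symplectic}, and \eqref{mean value symplectic} follows from the derivative at $u=0$.

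Finally, to obtain \eqref{cumulants_complex} and \eqref{cumulants symplectic}, I would take the logarithm of either MGF, turning it into a sum of terms of the form $\log\bigl(1 + (e^u-1)p_j\bigr)$, each being the CGF of a Bernoulli$(p_j)$ variable. Setting $\alpha:=p/(1-p)$, the logarithmic derivative equals $\alpha e^u/(1+\alpha e^u) = \sum_{k\ge 1}(-1)^{k+1}\alpha^k e^{ku}$, so differentiating $n-1$ further times and evaluating at $u=0$ gives $\kappa_n = -\Li_{1-n}(-\alpha)$ for $n\ge 2$. The identity $-\alpha = 1/(1-1/p)$ combined with the polylogarithm inversion $\Li_{-m}(1/z) = (-1)^{m+1}\Li_{-m}(z)$ (valid for $m\ge 1$, as $\Li_{-m}$ is rational) then converts this into $(-1)^{n+1}\Li_{1-n}(1-1/p)$, which after summation over $j$ matches the claimed formulas. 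The delicate point in the whole argument is the symplectic Pfaffian step: one must check cleanly that the combinatorial prefactor $c_N^{(4)}$ is truly independent of $V$, so that it cancels in the ratio. Once this is settled, everything else reduces to standard manipulations with $\Li$.
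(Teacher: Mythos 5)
Your proposal is correct and takes essentially the same approach as the paper: absorb the counting factor into the weight, apply de Bruijn's identity, use rotational invariance to force the (skew-)Gram matrix into (block anti-)diagonal form, and read off the product representation, then derive the cumulants from the factor $\log(1+(e^u-1)\mathcal{L}_j)$ via the polylogarithm inversion formula. The only cosmetic difference is that you package the intermediate step as a ratio of partition functions with a modified rotationally invariant weight $\widetilde W$, whereas the paper carries the factor $e^{u\chi_a}$ through the Pfaffian computation directly and then divides by the known value of $Z_{N,W}^{(\beta=4)}=N!\,2^N\prod_{j}h_{2j+1}$; both routes deliver the same $V$-independent prefactor cancellation that you correctly flag as the crux.
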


Note that Proposition~\ref{Prop_cumulant symplectic} (i) was shown in \cite{lacroix2019rotating}.

\begin{thm}[\textbf{FCS of the planar symplectic ensembles}]\label{Thm_higher cumulants}
Let $W$ be a suitable potential.
Then as $N \to \infty$, the following holds.
\begin{enumerate}[label=(\roman*)]
    \item \textup{\textbf{Bulk:}} For $a\in (0,1)$ fixed and $p \geq 2$, we have % TODO add \beta = 2 case
    \begin{equation}
  \lim_{N \to \infty}  \sqrt{ \frac{2}{N \Delta W(a)} } \, \kappa_p^{(\beta=4)}(a) = a\,\kappa_p^{\textup{bulk}},
    \end{equation}
    where
    \begin{equation}\label{kappa_p}
    \begin{split}
    \kappa_p^{\textup{bulk}} :=&\, (-1)^{p + 1} \int_{-\infty}^{\infty} \Li_{1 - p}\Big( -\frac{\erfc(-x)}{\erfc(x)} \Big) \, dx 
    \\
     =&     \begin{cases}
\displaystyle        - \int_{-\infty}^{\infty} \Li_{1 - p}\Big( -\frac{\erfc(-x)}{\erfc(x)} \Big) \, dx, & \textup{if }p \textup{ is even},
        \smallskip 
        \\
        0 , & \textup{if }p \textup{ is odd}.
        \end{cases}
    \end{split}
    \end{equation}
    \item \textup{\textbf{Edge:}} For $\SS \in \R$, let 
    \begin{equation} \label{a edge scaling}
    a=1-\frac{ \SS }{ \sqrt{ 2 \Delta W(1)  N} }.
    \end{equation}
    Then we have 
    \begin{equation}
\lim_{N \to \infty}    \sqrt{ \frac{2}{ N \Delta W(1) } } \, \kappa_p^{(\beta=4)}(a) = \kappa_p^{\textup{edge}}(\SS),
    \end{equation}
    where 
      \begin{equation}
        \kappa_p^{\textup{edge}}(\SS) := (-1)^{p + 1} \int_{-\infty}^{\SS} \Li_{1 - p}\Big( -\frac{\erfc(-x)}{\erfc(x)} \Big) \, dx.
    \end{equation}
\end{enumerate}
\end{thm}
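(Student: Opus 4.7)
The plan is to perform a saddle-point analysis on the finite-$N$ cumulant formula \eqref{cumulants symplectic}, following the strategy used for the random normal matrix ensembles in \cite{lacroix2019rotating} but adapted to the symplectic index $2j+1$. Both $h_{2j+1}$ and $h_{2j+1,1}(a)$ are Laplace-type integrals of $e^{N f_j(r)}$ with $f_j(r) := -g_N(r) + \frac{4j+3}{N}\log r$. The saddle $r_j^*$ solves $r g_N'(r)|_{r_j^*} = (4j+3)/N$, which by Definition~\ref{Qsuit}(3) is uniquely defined and strictly increasing in $j$. Using the rotational-invariance identity $\Delta W(z) = (rg_N'(r))'/(4r)$, one finds $-f_j''(r_j^*) = 4\Delta W_N(r_j^*)$, giving Laplace width $\sigma_j = 1/(2\sqrt{N\Delta W_N(r_j^*)})$.

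The first key step is to localize the sum in \eqref{cumulants symplectic} to the transition window $|r_j^* - a| = O(1/\sqrt{N})$. For $r_j^*$ deep inside $(0,a)$, $\mathcal{L}_{2j+1}(a) = 1 - O(e^{-cN})$, and for $r_j^*$ well outside, $\mathcal{L}_{2j+1}(a) = O(e^{-cN})$; in both regimes $\Li_{1-p}(1 - 1/\mathcal{L}_{2j+1}(a))$ is super-polynomially small. Inside the window, a standard Laplace expansion yields
\begin{equation*}
\mathcal{L}_{2j+1}(a) = \tfrac{1}{2}\erfc(x_j) + O(N^{-1/2}), \qquad x_j := (r_j^* - a)\sqrt{2N\,\Delta W_N(a)},
\end{equation*}
and the identity $\erfc(x) + \erfc(-x) = 2$ converts the cumulant argument to
\begin{equation*}
1 - \frac{1}{\mathcal{L}_{2j+1}(a)} = -\frac{\erfc(-x_j)}{\erfc(x_j)} + O(N^{-1/2}).
\end{equation*}

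The second key step is to pass from the sum to a Riemann integral. Implicit differentiation of the saddle equation yields $\Delta r_j^* = 1/(N r_j^* \Delta W_N(r_j^*))$, so that the rescaled spacing is $\Delta x_j = \sqrt{2/(N\Delta W_N(a))}/a$ in the bulk (with $a$ replaced by $1$ at the edge). Summing over the $O(\sqrt{N})$ indices in the transition window and multiplying by $(-1)^{p+1}\sqrt{2/(N\Delta W(a))}$ gives
\begin{equation*}
\sqrt{\tfrac{2}{N\Delta W(a)}}\;\kappa_p^{(\beta=4)}(a) \longrightarrow (-1)^{p+1}\,a \int_{-\infty}^{\infty} \Li_{1-p}\!\Big(-\tfrac{\erfc(-x)}{\erfc(x)}\Big)\,dx,
\end{equation*}
which is $a\,\kappa_p^{\textup{bulk}}$. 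At the edge, $a = 1 - \SS/\sqrt{2\Delta W(1)N}$, the largest saddle $r_{N-1}^* \to 1$ corresponds to the cutoff $x = \SS$, and the same argument with $\Delta W(1)$ in place of $\Delta W(a)$ and the $a$ prefactor replaced by $1$ produces $\kappa_p^{\textup{edge}}(\SS)$. The odd-$p$ vanishing in the bulk follows from the change of variables $x \mapsto -x$, which sends $y(x) := -\erfc(-x)/\erfc(x)$ to its reciprocal $1/y(x)$, combined with the inversion identity $\Li_{1-p}(1/y) = (-1)^p\Li_{1-p}(y)$ for negative integer order (verified from the rational form $\Li_{-n}(y) = (y\partial_y)^n\,y/(1-y)$).

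The main obstacle will be the uniform control of the Laplace approximation across the $O(\sqrt N)$ indices in the transition window, so that the $O(N^{-1/2})$ pointwise errors do not accumulate after summation. A secondary subtlety is the behaviour near $x = 0$, where $-\erfc(-x)/\erfc(x) \to -1$: the polylogarithm $\Li_{1-p}$ remains bounded at $-1$ for $p \geq 2$, so integrability is automatic, but one needs uniform error control on $\mathcal{L}_{2j+1}(a)$ near its transition value $1/2$. The edge case adds one further delicacy: the largest index $j=N-1$ probes $g_N$ precisely at the soft edge $r=1$, and condition \textit{(4)} of Definition~\ref{Qsuit} is exactly what guarantees that the Gaussian Laplace expansion remains valid and universal there.
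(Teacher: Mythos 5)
Your proposal takes essentially the same route as the paper: the localized Laplace asymptotic $\mathcal{L}_{2j+1}(a)\sim\tfrac12\erfc(x_j)$ in the $O(\sqrt{N})$ transition window, the identity $1-2/\erfc(x)=-\erfc(-x)/\erfc(x)$, conversion of the $j$-sum to a Riemann integral with the same change-of-variables prefactor, and the odd-$p$ vanishing via the inversion formula \eqref{Li-inversion}. The paper simply quotes the $\mathcal{L}_k(a)$ asymptotic from the earlier mean/variance analysis in \cite{akemann2022universality} while you re-derive it by the saddle-point method, but the argument is the same.
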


\begin{rem}[FCS of the Mittag-Leffler ensembles] \label{Rem_ML FCS}
The Mittag-Leffler ensembles are two-parameter generalisations of the Ginibre ensembles associated with the potential of the form
\begin{equation}\label{ML potential}
W^{\rm ML}(z) = \alpha |z|^{2b} - \frac{2c}{N} \log|z|, \qquad (\alpha,b>0, \,\, c>-1). 
\end{equation}
For the complex Mittag-Leffler ensembles, a very precise FCS was obtained in a recent work by Charlier \cite[Corollary 1.6]{charlier2022asymptotics}.
We recall that the FCS is equivalent to the problem of deriving a large-$N$ expansion of the partition functions having jump-type singularities (see \cite{MR4610282} and references therein for various literature on the partition functions of random normal matrix and planar symplectic ensembles).
As a consequence, by \cite[Proposition 5.10]{byun2023progress}, the FCS of the complex Mittag-Leffler ensembles induces that of the symplectic Mittag-Leffler ensembles (with different parameters). 
For instance, the FCS of the symplectic Ginibre ensemble follows from the FCS of the complex Mittag-Leffler ensemble associated with the linear potential $W^{\rm ML}(z)=2|z|$.
\end{rem}

In Figure~\ref{Fig:higher order cumulants} we illustrate Proposition \ref{Prop_cumulant symplectic} and Theorem \ref{Thm_higher cumulants} for $p=3,4$ using the potentials
\begin{align}
W_N^{(m)}(z) &:= -\frac{1}{N} \log \MeijerG{m , 0}{0,  m}{ -  \\  \bfs{0}}{ (2 N)^m |z|^2 }, \qquad m \in \N, \label{Q products} \\
W^{\rm tr}(z) &:=
\begin{cases} -2 \tilde{c} \, \log \Big( 1-\dfrac{|z|^2}{1+{ \tilde{c} } } \Big) &\textup{if } |z| \le \sqrt{1+\tilde{c}},
\smallskip
\\
\infty &\textup{otherwise},
\end{cases} \qquad \tilde{c}>0. \label{Q truncated unitary strong}
\end{align}
Here, $G^{m,n}_{p,q}$ is the Meijer $G$-function defined by 
\begin{equation} \label{Meijer G}
\MeijerG {m,n} {p,q} { a_1,\dots,a_p\\
   b_1,\dots,b_q| } {z}:= \frac{1}{2\pi i} \oint_L  z^s \frac{     \prod_{j=1}^{m}\Gamma(b_j-s)   \prod_{j=1}^{n}\Gamma(1-a_j+s)      }{  \prod_{j=1+m}^{q}\Gamma(1-b_j+s) \prod_{j=1+n}^{p}\Gamma(a_j-s)       }\,ds,
\end{equation}
where the integration contour $L$ in $\C$ depends on $\{a_j\}, \{ b_j\}$, see e.g.\ \cite[Chapter 16]{olver2010nist}.
The potential \eqref{Q products} describes the product of $m$ symplectic Ginibre matrices ($m=1$: GinSE), see e.g.~\cite{MR3066113}.
On the one hand, the potential \eqref{Q truncated unitary strong} stems from the truncated symplectic unitary ensemble, see e.g.~\cite{khoruzhenko2021truncations}.
%Truncated unitary ensemble: We consider the complex eigenvalues of a $(2 N + c + 1) \times (2 N + c + 1)$ random matrix from the compact symplectic group truncated to size $2N\times 2N$.
%The potential of the symplectic truncated unitary ensemble in strong rescaling ($c=2\tilde{c}N$ and fixed $\tilde{c}$, after the scaling $z \to z/\sqrt{1+\tilde{c} }$) is given by
%\begin{equation} \label{Q truncated unitary strong}
%Q^{\textup{trunc,s}}(z):=
%\begin{cases} -2 \tilde{c} \, \log \Big( 1-\dfrac{|z|^2}{1+{ \tilde{c} } } \Big) &\textup{if } |z| %\le \sqrt{1+\tilde{c}},
%\smallskip
%\\
%\infty &\textup{otherwise},
%\end{cases} \qquad \tilde{c}>0.
%\end{equation}

\begin{ex}
For $p = 2$ it follows from $\Li_{-1}(1-1/x)=x(x-1)$ that for the GinSE
\begin{equation} \label{variance symplectic}
\Var \NN_a^{(\beta=4)} = \sum_{j = 0}^{N-1} \mathcal{L}_{2j+1}(a) \mathcal{M}_{2j+1}(a). 
\end{equation}
This formula appears in \cite[Proposition 1.1]{akemann2022universality}. 
Furthermore we have
\begin{equation}
- \Li_{-1}\Big( -\frac{\erfc(-x)}{\erfc(x)} \Big) = \frac{1-\erf(x)^2}{4}, \qquad \int_\R \frac{1-\erf(x)^2}{4}  \,dx = \frac{1}{ \sqrt{2\pi} }.
\end{equation}
Therefore, one can observe that Theorem~\ref{Thm_higher cumulants} recovers \cite[Theorem 1.2]{akemann2022universality}.
\end{ex}

\begin{figure}[t]
	\begin{subfigure}{0.32\textwidth}
		\begin{center}
			\includegraphics[width=\textwidth]{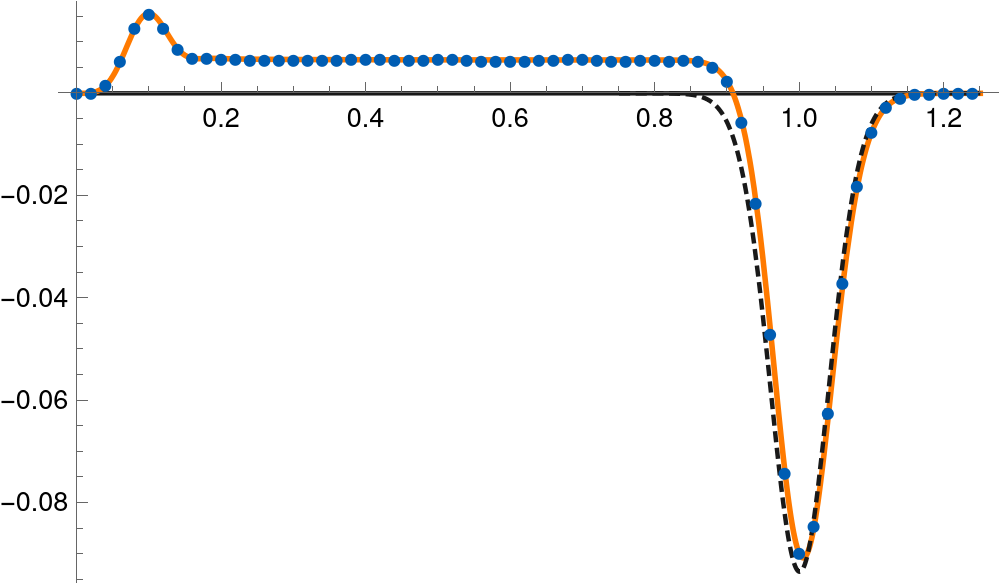}
		\end{center}
		\subcaption{$m=1$ ($\tilde{c}=\infty$), $p=3$}
	\end{subfigure} 
	\begin{subfigure}{0.32\textwidth}
		\begin{center}
			\includegraphics[width=\textwidth]{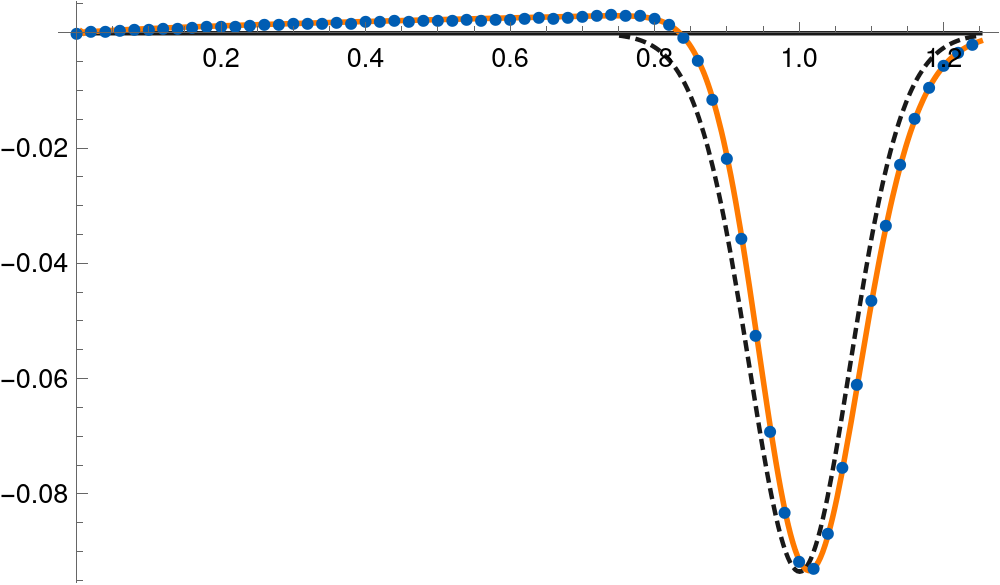}
		\end{center}
		\subcaption{$m=3$, $p=3$}
	\end{subfigure} 
\begin{subfigure}{0.32\textwidth}
		\begin{center}
			\includegraphics[width=\textwidth]{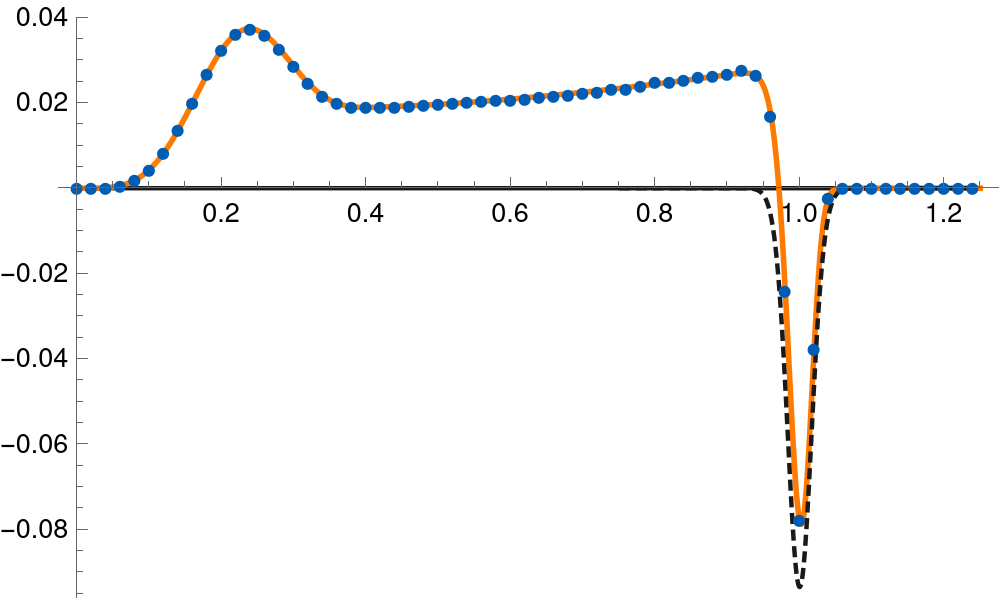}
		\end{center}
		\subcaption{$\tilde{c}=0.2$, $p=3$}
	\end{subfigure}  
 
	\begin{subfigure}{0.32\textwidth}
		\begin{center}
			\includegraphics[width=\textwidth]{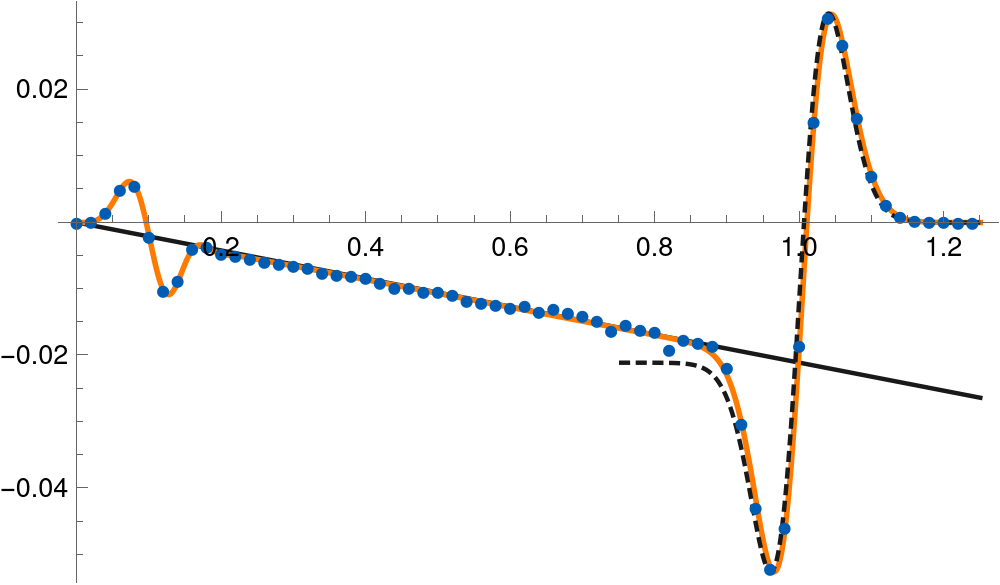}
		\end{center}
		\subcaption{$m = 1$ ($\tilde{c}=\infty$), $p=4$}
	\end{subfigure}
	\begin{subfigure}{0.32\textwidth}
		\begin{center}
			\includegraphics[width=\textwidth]{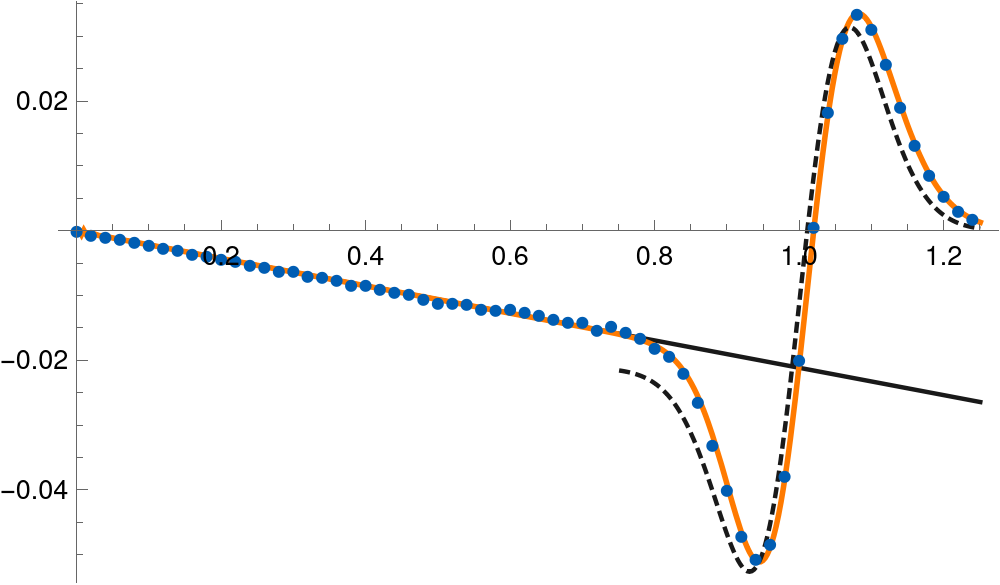}
		\end{center}
		\subcaption{$m=3$, $p=4$}
	\end{subfigure}
	\begin{subfigure}{0.32\textwidth}
		\begin{center}
			\includegraphics[width=\textwidth]{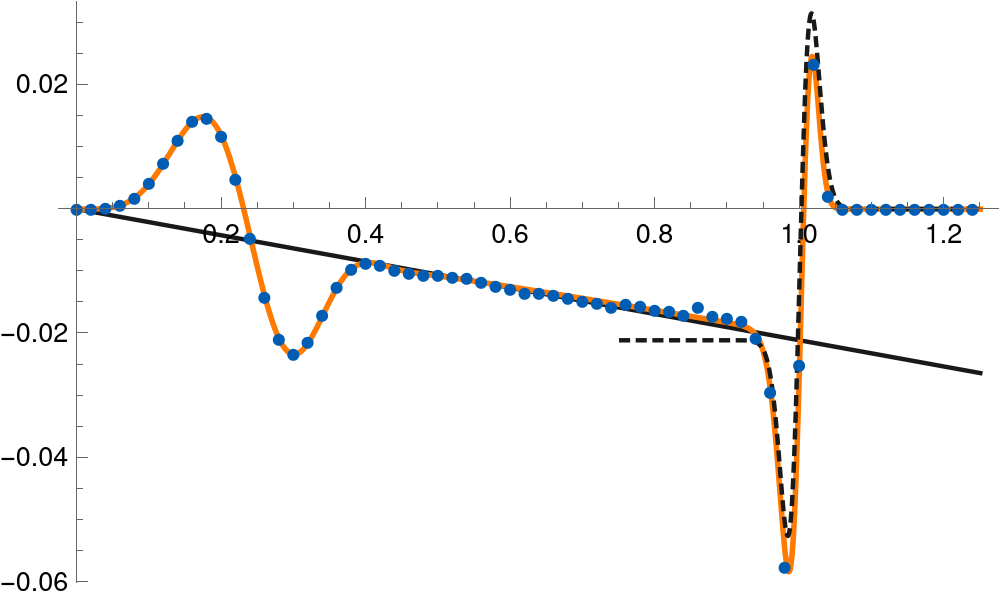}
		\end{center}
		\subcaption{$\tilde{c}=0.2$, $p=4$}
	\end{subfigure}
	\caption{Higher order cumulants for the product of $m$ symplectic Ginibre matrices, see \eqref{Q products}, and for the symplectic truncated unitary ensemble, see \eqref{Q truncated unitary strong}, with $N=50$. In blue: the $p$-th cumulant $\kappa_p^{(\beta=4)}(a)$ computed from $25{,}000{,}000$ random samples. In orange: right hand side of \eqref{cumulants symplectic}. In black: bulk limit (solid line) and edge limit (dashed line) from Theorem~\ref{Thm_higher cumulants}.}
    \label{Fig:higher order cumulants}
\end{figure}

\begin{rem}[Origin limit of the cumulants]\label{rem_Cumulant_origin}
In the origin limit where we rescale $R = \sqrt{N} \, a$, we obtain for cumulants of the GinSE
\begin{equation}
\lim_{N \to \infty} \kappa_p^{(\beta=4)}\Big(\frac{R}{\sqrt{N}}\Big) = (-1)^{p+1} \sum_{j = 0}^{\infty} \Li_{1 - p}\Big( -\frac{\Gamma(2j+2, 2R^2)}{\gamma(2j+2, 2R^2)} \Big).
\end{equation}
This follows from \eqref{cumulants symplectic}, where the convergence of the infinite sum is assured by the Gamma-function asymptotics \cite[Eq.~(8.11.4)]{olver2010nist}.
An analogous formula for the GinUE was already shown in the supplementary material of \cite{lacroix2019rotating}.
Note that the origin limit is not universal, e.g.\ for the potentials in \eqref{ML potential}, \eqref{Q products} and \eqref{Q truncated unitary strong}  %some necessary calculations were carried out 
different results were obtained 
in \cite[Propositions~1.4, 1.5, and 1.7]{akemann2022universality}, respectively.
\end{rem}

%\todo[inline]{We may add remark on the large deviation probabilities in the intermediate regime}

\begin{table}[t]
\begin{center}
{\def\arraystretch{2}
\begin{tabular}{|cc||c|c||c|c||c|}
	\hline
	& & \multicolumn{2}{c||}{GinUE} & \multicolumn{2}{c||}{GinSE} & GinOE \\
	& & Ginibre & Potential $W(\abs{z})$ & Ginibre & Potential $W(\abs{z})$ & \\
	\hline\hline
	%%%%%%%%%%%%%%%%%%%%%%%%% Expected counting number %%%%%%%%%%%%%%%%%%%%%%%%%
	\multirow{4}{*}{\rotatebox[origin=c]{90}{Expectation}} & Finite $N$ & \makecell{\cite[Prop.~3.1]{akemann2022universality} \\ Prop.~\ref{Prop_EN GinOE}(i)} & \makecell{\cite{lacroix2019intermediate}, \cite[Prop.~1.1]{akemann2022universality} \\ Prop.~\ref{Prop_cumulant symplectic}(i)} & \makecell{\cite[Prop.~3.1]{akemann2022universality} \\ Prop.~\ref{Prop_EN GinOE}(i)} & \makecell{\cite[Prop.~1.1]{akemann2022universality} \\ Prop.~\ref{Prop_cumulant symplectic}(ii)} & \makecell{Prop.~\ref{Prop_EN GinOE}(iii) \\ Lem.~\ref{Prop_GinOE expected real}} \\
	\cline{2-7}
	& Origin & \makecell{\cite{lacroix2019rotating} \\ Prop.~\ref{Prop_EN inf}} & \makecell{Depends on $W$ \\ e.g.\ \cite[Sec.~1.3]{akemann2022universality}} & Prop.~\ref{Prop_EN inf} & \makecell{Depends on $W$ \\ e.g.\ \cite[Sec.~1.3]{akemann2022universality}} & Prop.~\ref{Prop_EN inf} \\
	\cline{2-7}
	& Bulk & \makecell{\cite[Prop.~3.1]{akemann2022universality},  \\ \cite[Cor.~1.6 (a)]{charlier2022asymptotics}}    & via \cite{HM13} & \cite[Prop.~3.1]{akemann2022universality} & \makecell{via \cite{MR2934715} \\ \cite[Cor.~1.3]{akemann2022universality}} & \makecell{\cite{MR1231689} \\ Eq.~\eqref{EN GinOE leading order}} 
 \\
	\cline{2-7}
	& Edge & \makecell{\cite[Cor.~1.6 (b)]{charlier2022asymptotics}, \\ \cite{lacroix2019rotating}, Cor.~\ref{Cor_outside the disc number}} & via \cite{MR3975882} &  \makecell{via Rem.~\ref{Rem_ML FCS} \\ Cor.~\ref{Cor_outside the disc number}}  & \makecell{ Unknown } &  \makecell{Unknown \\ Cor.~\ref{Cor_outside the disc number}}
 \\
%	\cline{2-7}
%	& Outside & Cor.~\ref{Cor_outside the disc number} & via \cite{MR3975882} & Cor.~\ref{Cor_outside the disc number} & Unknown & Cor.~\ref{Cor_outside the disc number} \\
	\hline\hline
	%%%%%%%%%%%%%%%%%%%%%%%%% Number variance %%%%%%%%%%%%%%%%%%%%%%%%%
	\multirow{4}{*}{\rotatebox[origin=c]{90}{Variance}} & Finite $N$ & \cite{lacroix2019intermediate,lacroix2019rotating} & \cite{lacroix2019intermediate}, \cite[Prop.~1.1]{akemann2022universality} & \cite[Eq.~(3.8)]{akemann2022universality} & \cite[Prop.~1.1]{akemann2022universality} & Rem.~\ref{Rem_GinOE var finiteN} \\
	\cline{2-7}
	& Origin & \makecell{\cite{lacroix2019rotating} \\ Prop.~\ref{Prop_Var inf csGinibre}(i)} & \makecell{Depends on $W$ \\ e.g.\ \cite[Sec.~1.3]{akemann2022universality}} & Prop.~\ref{Prop_Var inf csGinibre}(ii) & \makecell{Depends on $W$ \\ e.g.\ \cite[Sec.~1.3]{akemann2022universality}} & Prop.~\ref{Prop_Expression var inf} \\
	\cline{2-7}
	& Bulk & \multicolumn{2}{c||}{\cite{lacroix2019intermediate,lacroix2019rotating}, \cite[Thm.~1.2]{akemann2022universality}, Rem.~\ref{varbeta}} & \multicolumn{2}{c||}{\cite[Thm.~1.2]{akemann2022universality}, Rem.~\ref{varbeta}} & \makecell{Conj.~\ref{Conj_GinOE} \\ (Thm.~\ref{Thm_GinOE inf})} \\
	\cline{2-7}
	& Edge & \multicolumn{2}{c||}{\cite{lacroix2019rotating}, \cite[Thm.~1.2]{akemann2022universality}, Rem.~\ref{varbeta}} & \multicolumn{2}{c||}{\cite[Thm.~1.2]{akemann2022universality}, Rem.~\ref{varbeta}} & Conj.~\ref{Conj_GinOE} \\
	\hline\hline
	%%%%%%%%%%%%%%%%%%%%%%%%% Higher order cumulants %%%%%%%%%%%%%%%%%%%%%%%%%
	\multirow{4}{*}{\rotatebox[origin=c]{90}{Cumulants}} & Finite $N$ & \multicolumn{2}{c||}{\cite{lacroix2019intermediate,lacroix2019rotating}, Prop.~\ref{Prop_cumulant symplectic}(i)} & \multicolumn{2}{c||}{Prop.~\ref{Prop_cumulant symplectic}(ii)} & \makecell{Partly unknown,  \\ cf. \cite{forrester2015diffusion}  } \\
	\cline{2-7}
	& Origin & \cite{lacroix2019rotating} & Depends on $W$ & Rem.~\ref{rem_Cumulant_origin} & Depends on $W$ & Unknown \\
	\cline{2-7}
	& Bulk & \cite{lacroix2019intermediate,lacroix2019rotating} & \cite{lacroix2019intermediate} & \multicolumn{2}{c||}{Thm.~\ref{Thm_higher cumulants}(i)} & Unknown \\
	\cline{2-7}
	& Edge & \cite{lacroix2019rotating} & \cite{lacroix2019rotating} with \cite{Lacroix2018extremes} & \multicolumn{2}{c||}{Thm.~\ref{Thm_higher cumulants}(ii)} & Unknown \\
	\hline
\end{tabular}
}
\end{center}
\caption{Summary of our main results and (some) references to known results.}
\label{tab:Summary of Results}
\end{table}

\begin{rem}\label{rem_Poisson}
It is interesting to study the behaviour of the cumulants for the GinUE and the GinSE, given respectively in Eqs. (\ref{cumulants_complex}) and (\ref{cumulants symplectic}) in the limit $a \to 0$, with fixed $N$. Indeed, in this limit, one has
\begin{equation}\label{Lj_small_a}
{\mathcal L}_j(a) = \frac{e^{-Ng(0)}}{h_j} \frac{a^{2j+2}}{2j+2}(1 + o(1)), \quad a \to 0 \;.
\end{equation}
Note that for $n \in \mathbb{N},$ it follows from the inversion formula \cite[Eq. (1)]{jonquiere1889note}
\begin{equation}\label{Li-inversion}
\Li_{-n}(x) = (-1)^{n-1} \Li_{-n}(1/x)= (-1)^{n-1} \sum_{k=1}^\infty \frac{k^n}{x^k} ,
\end{equation}
that as $x \to 0,$
%Therefore, using the asymptotic behaviour of the polylogarithm function with negative integer index \cite{wolfram}
\begin{equation}
\label{asympt_polylog}
{\rm Li}_{1-p}(-x) = \frac{(-1)^{p+1}}{x} + O\Big(\frac{1}{x^2} \Big), \quad p \geq 2 \quad {\rm and} \quad p \in \mathbb{N}.
\end{equation}
Therefore one finds, as $a \to 0$, 
\begin{equation}\label{kappa_small_a}
\kappa_p^{(\beta=2)}(a) = \frac{e^{-Ng(0)}}{2h_0} a^2(1 + o(1)), \qquad \kappa_p^{(\beta=4)}(a) = \frac{e^{-Ng(0)}}{4h_1} a^4(1 + o(1)) \;.
\end{equation}
On the other hand, from Eqs. \eqref{mean value complex} and \eqref{mean value symplectic} one finds
\begin{align}
\begin{split}
\label{average_small_a}
& E_{N,W}^{(\beta=2)}(a) = {\mathcal L}_0(a)(1+o(1)) = \frac{e^{-Ng(0)}}{2h_0} a^2(1+o(1)),
\\
& E_{N,W}^{(\beta=4)}(a) = {\mathcal L}_1(a)(1+o(1)) = \frac{e^{-Ng(0)}}{4h_1} a^4(1+o(1)) \;.
\end{split}
\end{align}
Therefore, by comparing Eqs. \eqref{kappa_small_a} and \eqref{average_small_a} we see that as $a \to 0$, all the cumulants coincide with the mean value, i.e.,
\begin{equation}\label{Poisson}
\lim_{a \to 0} \frac{\kappa_p^{(\beta=2)}(a)}{E_{N,W}^{(\beta=2)}(a)} = \lim_{a \to 0} \frac{\kappa_p^{(\beta=4)}(a)}{E_{N,W}^{(\beta=4)}(a)} = 1, \qquad p \geq 2 \;,
\end{equation}
which shows, as anticipated in the Remark \ref{remark_Poisson}, that ${\mathcal N}_a$ indeed becomes a Poissonian random variable in this limit $a\to 0$, keeping $N$ fixed. We emphasize that the behaviour of the cumulants found in the limit $N \to \infty$, keeping $a \in (0,1)$ fixed in Theorem \ref{Thm_higher cumulants}, is quite different. Indeed, although the functional dependence on the variable $a$ of all the cumulants is similar, the coefficients $\kappa^{\rm bulk}_p$ in (\ref{kappa_p}) have a non-trivial dependence on $p$. This shows that in this limit the fluctuations of ${\mathcal N}_a$ are clearly non-Poissonian (see Ref. \cite{lacroix2019intermediate} for a similar discussion in the context of the GinUE).    
\end{rem}

Our main results are summarized in Table~\ref{tab:Summary of Results}.

\section{Expected number of the Ginibre ensembles}\label{sec:proofmean}

\subsection{Finite Ginibre ensembles}
\hfill\\
The Lemmas~\ref{Prop_GinOE expected real} and \ref{Prop_GinOE expected complex} below imply Proposition~\ref{Prop_EN GinOE}. 

\begin{lem} \label{Prop_GinOE expected real}
For $N$ odd, 
\begin{equation}
\begin{split} \label{expected real eigenvalues N odd}
E_{N,\R}^{(1)}(a)& = \frac{1}{ \sqrt{2} } \sum_{k=0}^{N-2} \frac{ (2k-1)!! }{ (2k)!! }  P\Big(k+\frac12, Na^2\Big) +P\Big(\frac{N}{2}, \frac{Na^2}{2}\Big) 
\\
&\quad -   \frac{1}{ 2^{N/2} \, (N-2)!! } \sum_{k=0}^{ (N-3)/2 } \frac{ (N+2k-2)!! }{2^k \, (2k)!!} P\Big( \frac{N}{2}+k, Na^2\Big).
\end{split}
\end{equation}
For $N$ even,
\begin{equation}
\begin{split} \label{expected real eigenvalues N even}
E_{N,\R}^{(1)}(a)& = \frac{1}{ \sqrt{2} } \sum_{k=0}^{N-2} \frac{ (2k-1)!! }{ (2k)!! }  P\Big(k+\frac12, Na^2\Big)  +\frac{ N^{N/2}   }{ (N-2)!!  }     \int_{0}^a  e^{ -\frac{N}{2} x^2 } x^{N-1} \erf\Big(\sqrt{\frac{N}{2}}x\Big)\,dx
\\
&\quad -\frac{1}{ 2^{N/2} \, (N-2)!! } \sum_{k=1}^{ N/2-1 } \frac{  (N+2k-3)!! }{ 2^{k-1/2} \, (2k-1)!! }   P\Big( \frac{N-1}{2}+k, Na^2\Big).  
\end{split}
\end{equation}
\end{lem}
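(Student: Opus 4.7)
By the evenness of the real one-point density (\ref{RN1real1}), we have $E_{N,\R}^{(1)}(a) = 2\int_0^a \textbf{R}_{N,1,\R}^{(1)}(x)\,dx$, and we split this into the two contributions coming from the two summands in (\ref{RN1real1}). Call them $I_1(a)$ (the one with $Q(N-1,Nx^2)$) and $I_2(a)$ (the one with $P(\tfrac{N-1}{2},\tfrac{Nx^2}{2})$). The first piece $I_1$ is independent of the parity of $N$ and accounts for the first sum in both (\ref{expected real eigenvalues N odd}) and (\ref{expected real eigenvalues N even}); the second piece $I_2$ is where the parity of $N$ matters because the first argument $(N-1)/2$ of $P$ is a (positive) integer for $N$ odd and a half-integer for $N$ even.

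\textbf{Handling $I_1$.} Expand $Q(N-1,Nx^2)=e^{-Nx^2}\sum_{j=0}^{N-2}(Nx^2)^j/j!$ and change variables $t=Nx^2$, yielding
\begin{equation*}
I_1(a)=2\sqrt{\tfrac{N}{2\pi}}\int_0^a Q(N-1,Nx^2)\,dx=\frac{1}{\sqrt{2\pi}}\sum_{j=0}^{N-2}\frac{\gamma(j+\tfrac12,Na^2)}{j!}.
\end{equation*}
Using $\Gamma(j+\tfrac12)/j!=\sqrt{\pi}\,(2j-1)!!/(2j)!!$, one rewrites this sum as $\tfrac{1}{\sqrt{2}}\sum_{j=0}^{N-2}\tfrac{(2j-1)!!}{(2j)!!}P(j+\tfrac12,Na^2)$, which is the first sum appearing in both formulas of the lemma.

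\textbf{Handling $I_2$ for $N$ odd.} Here $(N-1)/2\in\mathbb{N}$, so $P(\tfrac{N-1}{2},z)=1-e^{-z}\sum_{j=0}^{(N-3)/2}z^j/j!$. The constant term $1$ contributes $\int_0^a e^{-Nx^2/2}x^{N-1}\,dx$; substituting $v=Nx^2/2$ turns this into $\tfrac{2^{N/2-1}}{N^{N/2}}\gamma(\tfrac{N}{2},\tfrac{Na^2}{2})$, which after the prefactor of the density becomes exactly $P(\tfrac{N}{2},\tfrac{Na^2}{2})$. The remaining terms of the sum, after substituting $t=Nx^2$, give a finite combination of incomplete gammas $\gamma(\tfrac{N}{2}+j,Na^2)$; converting Gamma values at the half-integer $\tfrac{N}{2}$ into double factorials via $\Gamma(m+\tfrac12)=(2m-1)!!\sqrt{\pi}/2^m$ and using $(2j)!!=2^j j!$ reorganises the coefficients into the stated $\tfrac{(N+2k-2)!!}{2^k(2k)!!}$, producing (\ref{expected real eigenvalues N odd}).

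\textbf{Handling $I_2$ for $N$ even.} Here $(N-1)/2=k+\tfrac12$ with $k=N/2-1$, so the incomplete Gamma recursion $\gamma(a+1,z)=a\gamma(a,z)-z^a e^{-z}$ applied $k$ times to $\gamma(\tfrac12,z)=\sqrt{\pi}\,\erf(\sqrt{z})$ gives the identity
\begin{equation*}
P\!\left(\tfrac{N-1}{2},z\right)=\erf(\sqrt{z})-e^{-z}\sum_{j=0}^{N/2-2}\frac{z^{j+1/2}}{\Gamma(j+\tfrac32)}.
\end{equation*}
Plugging $z=Nx^2/2$ back in and integrating against $e^{-Nx^2/2}x^{N-1}$ leaves the $\erf$-term as the integral $\frac{N^{N/2}}{(N-2)!!}\int_0^a e^{-Nx^2/2}x^{N-1}\erf(\sqrt{N/2}\,x)\,dx$ (using $\Gamma(N/2)\,2^{N/2-1}=(N-2)!!$), while the subtracted sum turns into incomplete gammas $\gamma(\tfrac{N-1}{2}+k,Na^2)$ via the substitution $t=Nx^2$, matching the last term of (\ref{expected real eigenvalues N even}) after the same double-factorial rewriting as above.

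\textbf{Main obstacle.} The calculus steps (substitutions, recursions, expansions of $P$ and $Q$) are standard; the bookkeeping is the real hurdle. The target formulas are written in terms of double factorials with delicate powers of $2$, whereas the natural output of the computation is in terms of Gamma values at half-integer arguments. Translating between these conventions -- tracking factors like $\Gamma(N/2+k)/\Gamma(N/2)$, $\Gamma(j+\tfrac32)$, and $2^{N/2+j}$ simultaneously through each finite sum and reconciling the index shifts $j\leftrightarrow k$ with the stated ranges -- is where the work concentrates, so one must be careful that the cases $N$ odd and $N$ even end up sharing the $I_1$ contribution and differ only in the closed-form vs.\ $\erf$-integral replacement of the $P(\tfrac{N-1}{2},\cdot)$ term.
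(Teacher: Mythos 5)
Your proposal is correct and takes essentially the same route as the paper: split the density into the $Q(N-1,Nx^2)$ and $P(\tfrac{N-1}{2},\tfrac{Nx^2}{2})$ pieces, handle the first piece uniformly via term-by-term integration and the identity $\Gamma(k+\tfrac12)/k! = \sqrt{\pi}\,(2k-1)!!/(2k)!!$, then expand the second piece according to the parity of $N$ and convert the resulting incomplete gammas into $P$-functions with double-factorial coefficients. The only cosmetic difference is that you re-derive the half-integer expansion of $P(\tfrac{N-1}{2},\cdot)$ from the recursion $\gamma(a+1,z)=a\gamma(a,z)-z^a e^{-z}$, whereas the paper cites the DLMF for those expansions; the bookkeeping you flag as the main hurdle is exactly what the written proof carries out.
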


Note that for any $N \in \N$
\begin{equation}\label{P2gamma}
 \frac{1}{ \sqrt{2} } \sum_{k=0}^{N-2} \frac{ (2k-1)!! }{ (2k)!! }  P\Big(k+\frac12, Na^2\Big)= \sqrt{ \frac{2}{\pi} } \frac{ a\sqrt{N} \,  \Gamma(N-1,Na^2)+\gamma(N-\tfrac12, Na^2) }{ (N-2)! }. 
\end{equation}

\begin{rem}
If $a=\infty$, $E_{N,\R}^{(1)}=E_{N,\R}^{(1)}(\infty)$ corresponds to the expected number of real eigenvalues. 
For $a=\infty$, we have that for $N$ odd, 
\begin{align}
E_{N,\R}^{(1)} &=1+\frac{1}{ \sqrt{2} } \sum_{k=0}^{N-2} \frac{ (2k-1)!! }{ (2k)!! }- \frac{ 1   }{ 2^{N/2 } (N-2)!!  }  \sum_{k=0}^{ (N-3)/2 } \frac{ (N+2k-2)!!  }{ 2^k \,(2 k)!!} 
%\\
%&= 1+ \sqrt{2} \sum_{k=1}^{(N-1)/2} \frac{(4k-3)!!}{ (4k-2)!! }.
\end{align}
whereas  for $N$ even,  using the series expansion of the error function inside the integral we obtain
\begin{align}
\begin{split}
E_{N,\R}^{(1)} &= \frac{1}{ \sqrt{2} } \sum_{k=0}^{N-2} \frac{ (2k-1)!! }{ (2k)!! } + \frac{2}{ \sqrt{\pi} } \frac{ \Gamma( \frac{N+1}{2} ) }{ \Gamma( \frac{N}{2} ) } {}_2 F_1\Big( \frac12, \frac{N+1}{2} ; \frac32; -1 \Big)
\\
&\quad  -        \frac{1}{ 2^{N/2} \, (N-2)!! } \sum_{k=1}^{ N/2-1 } \frac{  (N+2k-3)!! }{ 2^{k-1/2} \, (2k-1)!! }. 
%= \sqrt{2} \sum_{k=0}^{N/2-1} \frac{(4k-1)!!}{ (4k)!! }.
\end{split}
\end{align}
One can check that these correspond to the expressions \cite{MR1231689}
\begin{equation}
E_{N,\R}^{(1)} = 
\begin{cases}
\displaystyle   1+ \sqrt{2} \sum_{k=1}^{(N-1)/2} \frac{(4k-3)!!}{ (4k-2)!! } &N \textup{ odd},
 \smallskip
   \\
\displaystyle     \sqrt{2} \sum_{k=0}^{N/2-1} \frac{(4k-1)!!}{ (4k)!! } &N \textup{ even}.
\end{cases}
\end{equation}
%appearing in \cite{MR1231689}. 
%It was also shown in \cite{MR1231689} that 
%\begin{equation}
%E_{N,\R}^{(1)} \sim \Big(\frac{2}{\pi} N \Big)^{\frac12}. 
%\end{equation}
\end{rem}

\begin{proof}[Proof of Lemma~\ref{Prop_GinOE expected real}]
%The $1$-point function $\textbf{R}_{N,1,\R}^{(1)}(x)$ of the real eigenvalues is given by 
%\begin{align}
%\textbf{R}_{N,1,\R}^{(1)}(x) &= \sqrt{\frac{N}{2\pi}} e^{ -N x^2 } \sum_{k=0}^{N-2} \frac{ N^k x^{2k} }{k!}  +  \frac{1}{\sqrt{2\pi}} \frac{ N^{N-\frac12} }{ (N-2)! } e^{ -\frac{N}{2} x^2 } |x|^{N-1} \int_{0}^{|x|} e^{ -\frac{N}{2} u^2 } u^{N-2}\,du
%\\
%&= \sqrt{\frac{N}{2\pi}} e^{ -N x^2 } \sum_{k=0}^{N-2} \frac{ N^k x^{2k} }{k!}  +  \frac{  2^{ N/2-2 } }{\sqrt{\pi}} \frac{ N^{N/2} }{ (N-2)! } e^{ -\frac{N}{2} x^2 } |x|^{N-1} \gamma\Big( \frac{N-1}{2}, \frac{Nx^2}{2} \Big)
%\\
%&= \sqrt{\frac{N}{2\pi}} Q(N-1,Nx^2) +  \frac{ N^{N/2}   }{ 2^{N/2 } \Gamma(\frac{N}{2})  } e^{ -\frac{N}{2} x^2 } |x|^{N-1}  P\Big( \frac{N-1}{2}, \frac{Nx^2}{2} \Big),  
%\end{align}
%see \cite{MR1231689,forrester2007eigenvalue}. Then
%\begin{equation}
%E_{N,\R}^{(1)}(a)= \sqrt{ \frac{2}{\pi} } \int_{0}^{ \sqrt{N}a } \Big[ Q(N-1,x^2) + \frac{ 2^{N/2-3/2}  }{ \Gamma(N-1) } x^{N-1} e^{-x^2/2} \gamma\Big( \frac{N-1}{2}, \frac{x^2}{2} \Big) \Big]\,dx. 
%\end{equation}
Inserting the $1$-point function $\textbf{R}_{N,1,\R}^{(1)}(x)$ of the real eigenvalues from \eqref{RN1real1} into the definition of $E_{N,\R}^{(1)}(a)$ from \eqref{EN RC 1 a Def} gives
\begin{equation}
E_{N,\R}^{(1)}(a)= \int_{-a}^{a} \Big(\sqrt{\frac{N}{2\pi}} Q(N-1,Nx^2) +  \frac{ N^{N/2}   }{ 2^{N/2 } \Gamma(\frac{N}{2})  } e^{ -\frac{N}{2} x^2 } |x|^{N-1}  P\Big( \frac{N-1}{2}, \frac{Nx^2}{2} \Big) \Big)\,dx, 
\end{equation}
for any $N \in \N.$
%Using integration by parts and properties of the incomplete gamma-functions one can show that
%\begin{equation}
%\begin{split}
%\int_0^b Q(N,x)\,dx&=b \, Q(N,b)+N-N\, Q(N+1,b) =b \, Q(N,b)+N\, P(N+1,b)
%\\
%&=b \, Q(N,b)+ N\,P(N,b)-\frac{ b^N \, e^{-b} }{ (N-1)! } = N+(b-N)\,Q(N,b)
%-\frac{ b^n \, e^{-b} }{ (N-1)! } .
%\end{split}
%\end{equation}
For the first term we use
\[
Q(N-1,Nx^2) = e^{-Nx^2} \sum_{j=0}^{N-2} \frac{N^k x^{2k}}{k!}, \qquad
\Gamma\Big(k+\frac12\Big)= \sqrt{\pi} \frac{(2k-1)!!}{2^k},
\]
and obtain
\begin{align}
\begin{split}
&\quad \sqrt{\frac{N}{2\pi}}\int_{-a}^a  Q(N-1,Nx^2)\,dx
=\sqrt{\frac{N}{2\pi}} \int_{-a}^{a}  e^{ -N x^2 } \sum_{k=0}^{N-2} \frac{ N^k x^{2k} }{k!}\,dx 
\\
&= \frac{1}{ \sqrt{2\pi} } \sum_{k=0}^{N-2} \frac{ \gamma(k+\frac12, Na^2) }{k!} =  \frac{1}{ \sqrt{2\pi} } \sum_{k=0}^{N-2} \frac{ \Gamma(k+\frac12) }{k!}  P\Big(k+\frac12, Na^2\Big)  
\\
%&=  \frac{1}{ \sqrt{2} } \sum_{k=0}^{N-2} \frac{ (2k)! }{ 2^{2k} (k!)^2 }  P\Big(k+\frac12, Na^2\Big)
&=  \frac{1}{ \sqrt{2} } \sum_{k=0}^{N-2} \frac{ (2k-1)!! }{ 2^{k} k! }  P\Big(k+\frac12, Na^2\Big) . 
\end{split}
\end{align}
On the other hand, using integration by parts we obtain 
\begin{align}
\begin{split}
\sqrt{\frac{N}{2\pi}}\int_{-a}^a  Q(N-1,Nx^2)\,dx  & %=   \sqrt{\frac{2N}{\pi}}\int_{0}^a  Q(N-1,Nx^2)\,dx 
= \frac{1}{ \sqrt{2\pi} } \int_{0}^{Na^2} \frac{ Q(N-1,x) }{ \sqrt{x} }\,dx
\\
&= \sqrt{ \frac{2}{\pi} } \frac{ a\sqrt{N} \,  \Gamma(N-1,Na^2)+\gamma(N-\tfrac12, Na^2) }{ (N-2)! }.  
\end{split}
\end{align}
This implies the identity \eqref{P2gamma}. 

For the second term we need to distinguish between even and odd $N$ because
\begin{align*}
P\Big( \frac{N-1}{2}, \frac{Nx^2}{2} \Big) &= 
\begin{cases}
\displaystyle   1- e^{ -\frac{N}{2}x^2 } \sum_{k=0}^{ (N-3)/2 } \frac{(N x^2/2)^k }{k!} &\textup{for odd }N,
\smallskip 
\\
\displaystyle  \erf\Big(\sqrt{\frac{N}{2}}|x|\Big)-\frac{ e^{ -\frac{N}{2}x^2 } }{ \sqrt{\pi} } \sum_{k=1}^{N/2-1} \frac{ (N x^2/2)^{k-1/2} }{ (1/2)_k }   &\textup{for even }N,
\end{cases}
\end{align*}
see \cite[Section 8.4]{olver2010nist}.
Therefore if $N$ is odd, 
\begin{align*}
&\quad  \int_{-a}^a e^{ -\frac{N}{2} x^2 } |x|^{N-1}  P\Big( \frac{N-1}{2}, \frac{Nx^2}{2} \Big) \,dx %= \int_{-a}^a  e^{ -\frac{N}{2} x^2 } |x|^{N-1} \Big( 1- e^{ -\frac{N}{2}x^2 } \sum_{k=0}^{ (N-3)/2 } \frac{(N x^2/2)^k }{k!} \Big) \,dx 
\\
&= 2\int_{0}^a  e^{ -\frac{N}{2} x^2 } x^{N-1}\,dx -2\sum_{k=0}^{ (N-3)/2 } \frac{(N /2)^k }{k!} \int_{0}^a  e^{ -N x^2 } x^{N-1+2k}      \,dx  
\\
&= 2^{N/2} N^{-N/2} \gamma\Big(\frac{N}{2}, \frac{Na^2}{2}\Big) - N^{-N/2}\sum_{k=0}^{ (N-3)/2 } \frac{ 1 }{2^k \,k!} \gamma\Big( \frac{N}{2}+k, Na^2\Big)
\end{align*}
and consequently
\begin{align*}
&\quad \frac{ N^{N/2}   }{ 2^{N/2 } \Gamma(\frac{N}{2})  } \int_{-a}^a    e^{ -\frac{N}{2} x^2 } |x|^{N-1}  P\Big( \frac{N-1}{2}, \frac{Nx^2}{2} \Big) \,dx 
\\
%&= P\Big(\frac{N}{2}, \frac{Na^2}{2}\Big) - \frac{ 1   }{ 2^{N/2 } \Gamma(\frac{N}{2})  }  \sum_{k=0}^{ (N-3)/2 } \frac{ 1 }{2^k \,k!} \gamma\Big( \frac{N}{2}+k, Na^2\Big)
%\\
&= P\Big(\frac{N}{2}, \frac{Na^2}{2}\Big) -   \sum_{k=0}^{ (N-3)/2 } \frac{ \Gamma(\frac{N}{2}+k) }{2^{k+N/2}  \Gamma(\frac{N}{2})\,k!} P\Big( \frac{N}{2}+k, Na^2\Big)
\\
&= P\Big(\frac{N}{2}, \frac{Na^2}{2}\Big) -  \frac{1}{ 2^{N/2} \, (N-2)!! } \sum_{k=0}^{ (N-3)/2 } \frac{ (N+2k-2)!! }{2^{2k} \,k!} P\Big( \frac{N}{2}+k, Na^2\Big)
\end{align*}
where we used \cite[Eq.~(5.4.2)]{olver2010nist} to write the Gamma-functions in terms of double factorials.
This implies  \eqref{expected real eigenvalues N odd}, since $2^k k! = (2k)!!$.

On the other hand, if $N$ is even, 
\begin{align*}
&\quad \int_{-a}^a e^{ -\frac{N}{2} x^2 } |x|^{N-1}  P\Big( \frac{N-1}{2}, \frac{Nx^2}{2} \Big) \,dx %= \int_{-a}^a  e^{ -\frac{N}{2} x^2 } |x|^{N-1} \Big(  \erf\Big(\sqrt{\frac{N}{2}}|x|\Big)-\frac{ e^{ -\frac{N}{2}x^2 } }{ \sqrt{\pi} } \sum_{k=1}^{N/2-1} \frac{ (N x^2/2)^{k-1/2} }{ (1/2)_k }   \Big) \,dx 
\\
&= 2\int_{0}^a  e^{ -\frac{N}{2} x^2 } x^{N-1} \erf\Big(\sqrt{\frac{N}{2}}x\Big)\,dx -\frac{2}{ \sqrt{\pi} }\sum_{k=1}^{ N/2-1 } \frac{(N /2)^{k-1/2} }{ (1/2)_k } \int_{0}^a  e^{ -N x^2 } x^{N-2+2k}      \,dx  
\\
&=  2\int_{0}^a  e^{ -\frac{N}{2} x^2 } x^{N-1} \erf\Big(\sqrt{\frac{N}{2}}x\Big)\,dx -\frac{  \sqrt{2} }{ \sqrt{\pi} } N^{-N/2} \sum_{k=1}^{ N/2-1 } \frac{1 }{ (2k-1)!! }   \gamma\Big( \frac{N-1}{2}+k, Na^2\Big),  
\end{align*}
and including the normalisation we have 
%If $N$ is even, 
\begin{align*}
&\quad \frac{ N^{N/2}   }{ 2^{N/2 } \Gamma(\frac{N}{2})  } \int_{-a}^a    e^{ -\frac{N}{2} x^2 } |x|^{N-1}  P\Big( \frac{N-1}{2}, \frac{Nx^2}{2} \Big) \,dx 
\\
&= \frac{ N^{N/2}   }{ 2^{N/2-1 } \Gamma(\frac{N}{2})  }     \int_{0}^a  e^{ -\frac{N}{2} x^2 } x^{N-1} \erf\Big(\sqrt{\frac{N}{2}}x\Big)\,dx 
\\
&\quad -        \frac{ 1   }{ 2^{(N-1)/2 } \Gamma(\frac{N}{2})  }   \frac{1}{ \sqrt{\pi} }  \sum_{k=1}^{ N/2-1 } \frac{  \Gamma(\frac{N-1}{2}+k) }{ (2k-1)!! }   P\Big( \frac{N-1}{2}+k, Na^2\Big).  
\end{align*}
This yields \eqref{expected real eigenvalues N even}, after rewriting the Gamma-functions using \cite[Eq.~(5.4.2)]{olver2010nist} in the second term.
%Using the duplication formula
%\begin{equation}\label{Gamma-duplo}
%\Gamma(2z)= \frac{1}{\sqrt{\pi}} 2^{2z-1} \Gamma(z)\Gamma(z+\tfrac12),
%\end{equation}
%we obtain
%\begin{equation}
%\Gamma(\tfrac{N-1}{2}+k)=\sqrt{\pi} \, \frac{ \Gamma(N+2k-1) }{ 2^{ N+2k-2  } \Gamma(\frac{N}{2}+k)   }.
%\end{equation}
%Therefore, we have
%\begin{align*}
%&\quad  \frac{ 1   }{ 2^{(N-1)/2 } \Gamma(\frac{N}{2})  }   \frac{1}{ \sqrt{\pi} }  \sum_{k=1}^{ N/2-1 } \frac{  \Gamma(\frac{N-1}{2}+k) }{ (2k-1)!! }   P\Big( \frac{N-1}{2}+k, Na^2\Big)
%\\
%&=  \frac{ 1   }{ 2^{(N-1)/2 } \Gamma(\frac{N}{2})  }   \sum_{k=1}^{ N/2-1 } \frac{ 1 }{ (2k-1)!! } \frac{ \Gamma(N+2k-1) }{ 2^{ N+2k-2  } \Gamma(\frac{N}{2}+k)   }  P\Big( \frac{N-1}{2}+k, Na^2\Big)
%\\
%&=   \sum_{k=1}^{ N/2-1 } \frac{  (N+2k-3)!! }{ (N-2)!!\,(2k-1)!!\, 2^{k+N/2-1/2} }   P\Big( \frac{N-1}{2}+k, Na^2\Big).  
%\end{align*}
%Combining the above yields \eqref{expected real eigenvalues N even}. 
\end{proof}

%Recall that 
%\begin{equation}
%E_N^\R(a)= \sqrt{ \frac{2}{\pi} } \int_{0}^{ \sqrt{N}a } Q(N-1,x^2) + \frac{ 2^{N/2-3/2}  }{ \Gamma(N-1) } x^{N-1} e^{-x^2/2} \gamma\Big( \frac{N-1}{2}, \frac{x^2}{2} \Big) \,dx. 
%\end{equation}

\begin{lem} \label{Prop_GinOE expected complex}
For $N \in \N$, we have for the radial density at finite-$N$
\begin{equation}\label{RNC1radial}
\frac{1}{N}\frac{1}{2\pi}\int_{ 0 }^{2\pi} \textbf{R}_{N,1,\C}^{(1)}\Big(\frac{r e^{i\theta}}{\sqrt{N}} \Big)\,d\theta = 
%\frac{1}{\pi} 
\Big( 1-e^{2r^2} \erfc(\sqrt{2}r) \Big) Q(N-1,r^2) . 
\end{equation}
This leads to 
\begin{align}\label{ENC1mean}
\begin{split}
E_{N,\C}^{(1)} (a) 
%&=  2 \int_0^{ \sqrt{N}a }  r \Big( 1-e^{2r^2} \erfc(\sqrt{2}r)  \Big) Q(N-1,r^2)\,dr
%\\
%&= 2 \sum_{k=0}^{N-2} \int_0^{ \sqrt{N}a }   \Big( e^{-r^2}-e^{r^2} \erfc(\sqrt{2}r)  \Big) \frac{r^{2k+1}}{k!} \,dr
%\\
%&=\sum_{k=0}^{N-2} P(k+1,Na^2)  -2 \sum_{k=0}^{N-2} \int_0^{ \sqrt{N}a }  e^{r^2} \erfc(\sqrt{2}r)  \frac{r^{2k+1}}{k!} \,dr
%\\
&= \sum_{k=0}^{N-2} P(k+1,Na^2)  - 2 \int_0^{ \sqrt{N}a }  e^{2r^2} \erfc(\sqrt{2}r)   Q(N-1,r^2)\,dr
%\\
%&=\sum_{k=0}^{N-2} P(k+1,Na^2)  -2 \sum_{k=0}^{N-2} \int_0^{ \sqrt{N}a }  e^{r^2}  \frac{r^{2k+1}}{k!} \,dr +2 \sum_{k=0}^{N-2} \int_0^{ \sqrt{N}a }  e^{r^2} \erf(\sqrt{2}r)  \frac{r^{2k+1}}{k!} \,dr
\end{split}
\end{align}
\end{lem}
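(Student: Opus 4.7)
The plan is to start from the explicit $1$-point density \eqref{RN1complex1} and substitute the polar form $z = re^{i\theta}/\sqrt{N}$, under which $|y| = r|\sin\theta|/\sqrt{N}$, $N|z|^2 = r^2$, $\sqrt{2N}|y| = \sqrt{2}\,r|\sin\theta|$ and $2Ny^2 = 2r^2\sin^2\theta$. The factor $Q(N-1,r^2)$ is then angular-independent and pulls out of the $\theta$-integral, so \eqref{RNC1radial} reduces to the purely scalar identity
\begin{equation*}
\frac{\sqrt{2\pi}\,r}{2\pi}\int_0^{2\pi}|\sin\theta|\,\erfc(\sqrt{2}\,r|\sin\theta|)\,e^{2r^2\sin^2\theta}\,d\theta \;=\; 1 - e^{2r^2}\,\erfc(\sqrt{2}\,r).
\end{equation*}

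The main step, and the one I expect to demand the most work, is proving this angular identity. First I would use the $\theta\mapsto -\theta$ and $\theta\mapsto \pi-\theta$ symmetries of the integrand to reduce the integral to $4\int_0^{\pi/2}\sin\theta\,\erfc(\sqrt{2}\,r\sin\theta)\,e^{2r^2\sin^2\theta}\,d\theta$. The substitution $t = r\cos\theta$, under which $r\sin\theta\,d\theta = -dt$ and $\sqrt{2}\,r\sin\theta = \sqrt{2}\sqrt{r^2-t^2}$, then converts the claim into the equivalent statement
\begin{equation*}
\frac{2\sqrt{2}}{\sqrt{\pi}}\int_0^r \erfc\bigl(\sqrt{2}\sqrt{r^2-t^2}\bigr)\,e^{-2t^2}\,dt \;=\; e^{-2r^2} - \erfc(\sqrt{2}\,r).
\end{equation*}
Both sides vanish at $r = 0$, so it suffices to compare derivatives. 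Differentiating the left-hand side via Leibniz produces the boundary contribution $\erfc(0)\,e^{-2r^2} = e^{-2r^2}$ plus the interior term $-(2\sqrt{2}\,r\,e^{-2r^2}/\sqrt{\pi})\int_0^r dt/\sqrt{r^2-t^2} = -\sqrt{2\pi}\,r\,e^{-2r^2}$; after multiplication by $2\sqrt{2}/\sqrt{\pi}$ this matches the derivative $-4r\,e^{-2r^2} + (2\sqrt{2}/\sqrt{\pi})\,e^{-2r^2}$ of the right-hand side, and the identity follows.

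Once \eqref{RNC1radial} is established, the formula for $E_{N,\C}^{(1)}(a)$ follows by radial integration. Writing $E_{N,\C}^{(1)}(a) = \int_{D_a}\textbf{R}_{N,1,\C}^{(1)}(z)\,dA(z)$ in polar coordinates and rescaling $\rho = r/\sqrt{N}$ (so $\rho\,d\rho = r\,dr/N$) yields
\begin{equation*}
E_{N,\C}^{(1)}(a) \;=\; 2\int_0^{\sqrt{N}a} r\,\bigl(1 - e^{2r^2}\erfc(\sqrt{2}\,r)\bigr)\,Q(N-1,r^2)\,dr.
\end{equation*}
This splits into two pieces. For the $Q(N-1,r^2)$ contribution, the substitution $u = r^2$ reduces it to $\int_0^{Na^2}Q(N-1,u)\,du$, and inserting the series $Q(N-1,u) = e^{-u}\sum_{k=0}^{N-2} u^k/k!$ and integrating term by term against $e^{-u}u^k$ produces $\sum_{k=0}^{N-2}P(k+1,Na^2)$. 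The remaining $\erfc$-term is already in the desired integral form, which together yield the stated expression in \eqref{ENC1mean}.
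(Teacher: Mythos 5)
Your proof is correct, and it follows a genuinely different route from the one in the paper. The paper establishes the angular identity by writing $\erfc(\sqrt{2}y)=2\sqrt{2/\pi}\int_{-\infty}^0 e^{-2(t-y)^2}\,dt$, swapping the order of integration, expanding the angular integral into a power series in $r$ via moment integrals of $\sin^\ell\theta$ and Gaussian moments, invoking the duplication formula, and then matching the resulting series against that of $1-e^{2r^2}\erfc(\sqrt{2}r)$. Your approach instead converts the angular integral into the one-variable identity
\begin{equation*}
\frac{2\sqrt{2}}{\sqrt{\pi}}\int_0^r \erfc\bigl(\sqrt{2}\sqrt{r^2-t^2}\bigr)\,e^{-2t^2}\,dt = e^{-2r^2} - \erfc(\sqrt{2}r)
\end{equation*}
via the substitution $t=r\cos\theta$, and then verifies it by showing both sides vanish at $r=0$ and have equal $r$-derivatives (the boundary term of Leibniz giving $e^{-2r^2}$ and the interior term reducing to the elementary $\int_0^r dt/\sqrt{r^2-t^2}=\pi/2$). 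This is shorter, avoids the series-identification step, and buys a self-contained elementary argument; the paper's series manipulation, by contrast, incidentally yields the intermediate expansion $\erf(z)e^{z^2}=\frac{2}{\sqrt\pi}\sum 2^n z^{2n+1}/(2n+1)!!$, which is then reused nowhere, so your route is arguably the more economical one.

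Two minor remarks. First, the Leibniz differentiation involves an integrand whose $r$-derivative has an inverse-square-root singularity at the upper endpoint $t=r$; this is integrable and the formula does hold, but a careful write-up should note why differentiation under the integral sign is licit here. Second, the stated display \eqref{ENC1mean} in the lemma (and the final line of the paper's own proof) is missing a factor of $r$ in the second integrand; your radial integration correctly produces $-2\int_0^{\sqrt N a} r\,e^{2r^2}\erfc(\sqrt 2\,r)\,Q(N-1,r^2)\,dr$, which is the version recorded correctly in Proposition~\ref{Prop_EN GinOE}(iii). This is a typo in the paper, not an error on your part.
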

\begin{proof}
The $1$-point function $\textbf{R}_{N,1,\C}^{(1)}(x)$ of the complex eigenvalues is given by  
\begin{equation}
\frac{1}{N}\textbf{R}_{N,1,\C}^{(1)}\Big(\frac{z}{\sqrt{N}}\Big)= \sqrt{ {2}{\pi}  } |y| \erfc(\sqrt{2}|y|) e^{2y^2} Q(N-1,|z|^2).  
\end{equation}
%Recall that the modified Struve function $\textbf{L}_{\nu}$ is given by
%\begin{equation} \label{L nu definition} 
%\textbf{L}_{\nu}(z)= \sum_{m=0}^\infty \frac{ (z/2)^{2m+\nu+1}  }{\Gamma(m+3/2) %\Gamma(m+\nu+3/2)}. 
%\end{equation}
Using
\begin{equation}
\erfc(\sqrt{2}y)= 2 \sqrt{ \frac{2}{\pi} } \int_{-\infty}^0 e^{-2(t-y)^2}\,dt,
\end{equation}
with $y=r\sin{\theta}$, we obtain 
for the integral \eqref{RNC1radial}
\begin{align}
\begin{split}
\frac{1}{N2\pi}\int_{ 0 }^{2\pi} \textbf{R}_{N,1,\C}^{(1)}\Big(\frac{r e^{i\theta}}{\sqrt{N}} \Big)\,d\theta &= \frac{2}{\pi} Q(N-1,r^2)\int_{-\infty}^0 \int_{ 0 }^{2\pi} |y| e^{-2(t-|y|)^2+2y^2  } \,d\theta \,dt
\\
&=  \frac{4r}{\pi}  Q(N-1,r^2)\int_{-\infty}^0  e^{-2t^2}  \int_{0}^{\pi} (\sin \theta) e^{ 4 t r\sin \theta  } \,d\theta \,dt
\\
&=  \frac{4r}{\pi}  Q(N-1,r^2)\int_{-\infty}^0  e^{-2t^2}  
\sum_{l=0}^\infty \frac{(4tr)^l}{l!}\frac{2^{l+1}\Gamma\big(\frac{l}{2}+1\big)^2}{\Gamma(l+2)} \,dt
\\
&=  \frac{4r}{\pi}  Q(N-1,r^2)\sum_{l=0}^\infty \frac{(8r)^l}{l!}\frac{\Gamma\big(\frac{l}{2}+1\big)^2}{\Gamma(l+2)}\frac{(-1)^l\Gamma\big(\frac{l+1}{2}\big)}{2^{\frac{l+1}{2}}},
\end{split}
\end{align}
using the elementary integrals
\begin{equation}
\label{sin-m-int}
\int_0^{\pi} (\sin \theta)^{l+1} \,d\theta = \frac{2^{l+1} \Gamma(\frac{l}{2}+1)^2}{\Gamma(l+2)}, \qquad
\int_{-\infty}^0 e^{-2t^2} t^l \,dt = \frac{(-1)^l}{2^{1+\frac{l+1}{2}}} \Gamma\Big(\frac{l+1}{2}\Big).
\end{equation}
The duplication formula
\begin{equation}\label{Gamma-duplo}
\Gamma(2z)= \frac{1}{\sqrt{\pi}} 2^{2z-1} \Gamma(z)\Gamma(z+\tfrac12),
\end{equation}
can be used to further simplify the sum, and we obtain
\begin{align}
\begin{split}
    \sqrt{\pi}\sum_{l=0}^\infty \frac{2^{\frac{3}{2}l-\frac{1}{2}}(-r)^l\Gamma\big(\frac{l}{2}+1\big)}{\Gamma(l+2)}&=
    \sqrt{\pi} \sum_{k=0}^\infty \frac{2^{3k-\frac{1}{2}r^{2k}\Gamma(k+1)}}{\Gamma(2k+2)}
    -\sqrt{\pi} \sum_{k=0}^\infty \frac{2^{2k+1}r^{2k+1}\Gamma\big(k+ \frac{3}{2}\big)}{\Gamma(2k+3)}
    \\
    &= \sqrt{\pi} \sum_{k=0}^\infty \frac{2^{2k-\frac{1}{2}}r^{2k}}{(2k+1)!!} -\pi \sum_{k=0}^\infty \frac{2^{k-1}r^{2k+1}}{(k+1)!}  , 
    \label{sum1}
\end{split}
    \end{align}
after splitting the sum into even and odd indices $l$ and applying once more the duplication formula \eqref{Gamma-duplo}.
Using the expansion of the error function \cite[Eq.~(7.6.2)]{olver2010nist},
\begin{equation}
    \erf(z)e^{z^2}=\frac{2}{\pi}\sum_{n=0}^\infty \frac{2^nz^{2n+1}}{(2n+1)!!}
\end{equation}
and $\erfc(z)=1-\erf(z)$ we obtain the following identity
\begin{align}
\frac{\pi}{4r}\Big(1-e^{2r^2}\erfc(\sqrt{2}r) \Big)    &=
\frac{\pi}{4r}\Big( 1-\sum_{n=0}^\infty \frac{(2r)^n}{n!}+\frac{2}{\sqrt{\pi}}\sum_{n=0}^\infty\frac{(2r)^{2n}\sqrt{2}r}{(2n+1)!!}\Big),
\end{align}
which agrees with the sum in \eqref{sum1}. This yields \eqref{RNC1radial}.
%and 
%\begin{equation}
%\frac{1}{2\pi}\int_{0}^{2\pi}  |\sin \theta| e^{ 4 t r|\sin \theta|  } d\theta = I_1(4rt)+\textbf{L}_{-1}(4rt),
%\end{equation}
%(see e.g. \cite[Eq.(11.5.7), (10.32.4), (11.5.6)]{olver2010nist}), we have
%\begin{align*}
%\frac{1}{N}\int_{ 0 }^{2\pi} \textbf{R}_{N,1,\C}^{(1)}\Big(\frac{r e^{i\theta}}{\sqrt{N}} \Big)\,d\theta &= \frac{4}{\pi} Q(N-1,r^2)\int_{-\infty}^0  \,dt  \int_{ 0 }^{2\pi} \,d\theta \, |y| e^{-2(t-|y|)^2+2y^2  } 
%\\
%&=  \frac{4}{\pi} r Q(N-1,r^2)\int_{-\infty}^0  e^{-2t^2} \,dt  \int_{0}^{2\pi} \,d\theta\, |\sin \theta| e^{ 4 t r|\sin \theta|  }
%\\
%&= 8 r Q(N-1,r^2)\int_{-\infty}^0  e^{-2t^2}\Big(I_1(4rt)+\textbf{L}_{-1}(4rt)\Big)  \,dt .
%\\
%&= 2 \Big( 1-e^{2r^2} \erfc(\sqrt{2}r) \Big)Q(N-1,r^2). 
%\end{align*}
%Then the lemma follows from 
%\begin{equation}
%\int_{-\infty}^0  e^{-2t^2}\Big(I_1(4rt)+\textbf{L}_{-1}(4rt)\Big)  \,dt = \frac{1}{4r} \Big( 1-e^{2r^2} \erfc(\sqrt{2}r) \Big),
%\end{equation}
%which can be checked using the expansions \eqref{I nu definition} and \eqref{L nu definition}.

After having determined the radial density we can compute the mean in \eqref{ENC1mean} as
\begin{align}
\begin{split}
E_{N,\C}^{(1)} (a) 
&=  2 \int_0^{ \sqrt{N}a }  r \Big( 1-e^{2r^2} \erfc(\sqrt{2}r)  \Big) Q(N-1,r^2)\,dr
\\
&= 2 \sum_{k=0}^{N-2} \int_0^{ \sqrt{N}a }   e^{-r^2}\frac{r^{2k+1}}{k!} \,dr
- 2 \int_0^{ \sqrt{N}a }  e^{2r^2} \erfc(\sqrt{2}r)   Q(N-1,r^2)\,dr
\\
%&=\sum_{k=0}^{N-2} P(k+1,Na^2)  -2 \sum_{k=0}^{N-2} \int_0^{ \sqrt{N}a }  e^{r^2} \erfc(\sqrt{2}r)  \frac{r^{2k+1}}{k!} \,dr
%\\
&= \sum_{k=0}^{N-2} P(k+1,Na^2)  - 2 \int_0^{ \sqrt{N}a }  e^{2r^2} \erfc(\sqrt{2}r)   Q(N-1,r^2)\,dr.
%\\
%&=\sum_{k=0}^{N-2} P(k+1,Na^2)  -2 \sum_{k=0}^{N-2} \int_0^{ \sqrt{N}a }  e^{r^2}  \frac{r^{2k+1}}{k!} \,dr +2 \sum_{k=0}^{N-2} \int_0^{ \sqrt{N}a }  e^{r^2} \erf(\sqrt{2}r)  \frac{r^{2k+1}}{k!} \,dr
\end{split}
\end{align}
\end{proof}

\begin{proof}[Proof of Corollary~\ref{Cor_outside the disc number}]
For $\beta=2,4$ these are immediate consequences of Proposition~\ref{Prop_EN GinOE} (i) and (ii).
To be more precise, for $\beta=2$, it immediately follows from the Stirling's formula.
On the other hand, for $\beta=4$, the behaviour \eqref{EN disc GinSE} follows from $\beta=2$ case, when estimating the second term in \eqref{EN symplectic Ginibre} given by an exponential times a truncated sine-hyperbolic to be of order unity.  
%with the asymptotic behaviour of the hypergeometric function ${}_1F_2$. 

It remains to show the case $\beta=1$. For this, using the well-known asymptotic behaviour 
$$\erfc(x) \sim \frac{ e^{-x^2} }{ \sqrt{\pi} x }, \qquad x \to \infty,$$ 
we have
\begin{align*}
&\quad 2 \int_0^{ \sqrt{N} } r\, e^{2r^2} \erfc(\sqrt{2}r)   Q(N-1,r^2)\,dr = 2 N \int_0^{ 1 } t\, e^{2t^2N} \erfc(\sqrt{2N}\,t )   Q(N-1,t^2N)\,dt
\\
& \sim \sqrt{ \frac{2}{\pi}  N} \int_0^{ 1 }    Q(N-1,t^2N)\,dt = \sqrt{ \frac{2}{\pi}  N} \Big(  \frac{ \gamma(N-\frac12, N) }{  \sqrt{N} \, \Gamma(N-1) } +Q(N-1,N) \Big) \sim \sqrt{ \frac{2}{\pi}  N}. 
\end{align*}
In the last line, we have used integration by parts. Using the asymptotic of the incomplete Gamma-functions \cite[Eq.~(8.10.11)]{olver2010nist}, each summand inside the bracket gives $1/2$, which yields the final result.
Combining Proposition~\ref{Prop_EN GinOE}, \eqref{EN GinOE leading order}, and \eqref{EN disc GinUE} for $\beta=2$, we obtain the desired asymptotic behavior \eqref{EN disc GinUE} for $\beta=1.$
\end{proof}

\subsection{Infinite Ginibre ensembles at the origin}

\begin{proof}[Proof of Proposition~\ref{Prop_EN inf}]
Note that for $\beta=2$, it is obvious since $R_1^{(2)}(z)=1$ \cite{ginibre1965statistical}. For $\beta=2,4$ we use  
\begin{equation}
E^{(\beta)}(R) = 2 \int_0^R r \widehat{R}_1^{(\beta)}(r)\,dr.
\end{equation}
For $\beta=4,$ note that \cite{MR1928853,Mehta}
\begin{equation}
R_1^{(4)}(z)= 4y F(2y), \qquad F(z)= e^{-z^2} \int_0^z e^{t^2}\,dt,
\end{equation}
where $F$ is Dawson's integral function. 
Then the desired identity \eqref{E DR S} follows from 
\begin{align*}
E^{(4)}(R)&=\int_{D_R} 4y F(2y) \,dA(z) =  \frac{4}{\pi} \int_{0}^R r^2 \,dr  \int_0^{2\pi} d\theta \,  \sin(\theta) F(2r \sin(\theta)).
%\\
%&= 2 \int_0^R r (1-e^{-4r^2}) \,dr =   R^2-\frac14+\frac14 e^{-4R^2}. 
\end{align*}
Using the series expansion of Dawson's function \cite[Eqs.~(7.5.1) and (7.6.3)]{olver2010nist} 
\begin{equation}
F(z)=\sum_{k=0}^\infty \frac{(-1)^k2^k}{(2k+1)!!}z^{2k+1},    
\end{equation}
we obtain for the angular integral
\begin{align*}
 \int_0^{2\pi} d\theta \,  \sum_{k=0}^\infty\frac{(-1)^k2^k}{(2k+1)!!}(2r)^{k+1}\sin(\theta)^{2k+2} &=
 \sum_{k=0}^\infty\frac{(-1)^k2^{5k+4}\Gamma(k+\frac{3}{2})^2}{(2k+1)!!\,\Gamma(2k+3)}r^{2k+1}
 \\
 &=\pi \sum_{k=0}^\infty \frac{(-1)^k(2r)^{2k+1}}{(k+1)!}, 
\end{align*}
using again \eqref{sin-m-int} and \eqref{Gamma-duplo}. The sum can be easily done and we are lead to 
\begin{align}
    E^{(4)}(R)= \frac{4}{\pi} \int_0^R r^2 \frac{\pi}{2r}(1-e^{-4r^2}) \,dr =   R^2-\frac14+\frac14 e^{-4R^2}. 
\end{align}

%Here we have used 
%\begin{align*}
%\int_0^{2\pi} d\theta \,  \sin(\theta) F(2r \sin(\theta)) &= \int_0^{2\pi} d\theta \,  \sin(\theta) e^{-4r^2 \sin^2(\theta)} \int_0^{2r \sin (\theta)}\!dt\,  e^{t^2}
%\\
%&= 2\int_0^{2\pi} d\theta \,  \sin^2(\theta) e^{-4r^2 \sin^2(\theta)} \int_0^{ r } dt\, e^{4t^2\sin^2 (\theta) } 
%\end{align*}

For $\beta=1$, the radial density in the large-$N$ limit follows from Lemma~\ref{Prop_GinOE expected complex}, with $Q(N-1,r^2)$ replaced by $1$. This gives $E_\C^{(1)}(R)$ from integration by parts: 
\begin{align*}
    E_{\C}^{(1)} (R)  &=  2 \int_0^{R}  r \Big( 1-e^{2r^2}(1 -\erf(\sqrt{2}r))  \Big) \,dr
\\
&=R^2-\frac{1}{2}e^{2R^2}+\frac{1}{2}+\frac{1}{2}e^{2R^2}\erf(\sqrt{2}R)-\int_0^R \frac{1}{2}e^{2r^2}\frac{2\sqrt{2}}{\sqrt{\pi}}e^{-2r^2}dr.
\end{align*}
On the other hand, $E_\R^{(1)}(R)$ follows from $R_{1,\R}^{(1)}(x)=1/\sqrt{2\pi}$ \cite{edelman1997probability}.
%\begin{equation}
%E_\R^{(1)}(R) = \int_{-R}^R R_{1,\R}^{(1)}(x)\,dx = \int_{-R}^R \frac{1}{ \sqrt{2\pi} }\,dx= \sqrt{ \frac{2}{\pi} } R .
%\end{equation}
\end{proof}

It is also possible to obtain direct proofs for $\beta=1$ based on the limiting form of $R_{1,\C}(z)$ and the integral representation of the complementary error function. We omit these for brevity here.

\section{Number variance of the infinite Ginibre ensembles at the origin}\label{sec:proofvar}

\subsection{Number variance of the GinUE at the origin}
\hfill\\
The number variance of the GinUE is written as 
\begin{equation}
\Var \, \NN^{(2)}( D_R) = \int_{D_R} \int_{D_R^c} |K(z_1,z_2)|^2\,dA(z_1)\,dA(z_2) , 
\end{equation}
see e.g. \cite{akemann2022universality}.

\begin{proof}[Proof of Proposition~\ref{Prop_Var inf csGinibre} (i)]
Note that 
\begin{align}
\Var \, \NN^{(2)}( D_R) %&= \int_{D_R} \int_{D_R^c} |K(z_1,z_2)|^2\,dA(z_1)\,dA(z_2) 
%\\
%& = \int_{D_R} \,dA(z_1) \int_{D_R^c}  \,dA(z_2) e^{-|z_1-z_2|^2} 
%\\
&= \int_{D_R} \,dA(z_1) \int_{D_R^c}  \,dA(z_2) e^{-|z_1|^2-|z_2|^2+z_1 \bar{z}_2+ \bar{z}_1 z_2 } .
\end{align}
It follows from the orthogonality with respect to the angular integration that 
\begin{align}
\begin{split}
\Var \, \NN^{(2)}( D_R)%&=  \int_{D_R} \,dA(z_1) \int_{D_R^c}  \,dA(z_2) e^{-|z_1|^2-|z_2|^2+z_1 \bar{z}_2+ \bar{z}_1 z_2 } 
%\\
&=  \int_{D_R} \,dA(z_1) \int_{D_R^c}  \,dA(z_2) e^{-|z_1|^2-|z_2|^2 } \sum_{j=0}^\infty \frac{  (z_1 \bar{z}_2+ \bar{z}_1 z_2)^j  }{ j! } 
\\
&=  \int_{D_R} \,dA(z_1) \int_{D_R^c}  \,dA(z_2) e^{-|z_1|^2-|z_2|^2 } \sum_{j=0}^\infty \frac{  |z_1|^{2j} |z_2|^{2j}  }{ (2j)! } \binom{2j}{j}  \label{Var cGinibre deep bulk v2}.
\end{split}
\end{align}
This gives 
\begin{align}
\Var \, \NN^{(2)}( D_R) & = \sum_{j=0}^\infty \int_{D_R} \frac{ |z_1|^{2j} e^{-|z_1|^2} }{ j! } \,dA(z_1) \int_{D_R^c}  \frac{ |z_2|^{2j} e^{-|z_2|^2} }{ j! } \,dA(z_2) =  \sum_{j=1}^\infty P(j,R^2) Q(j,R^2),
\end{align}
which is the first expression in \eqref{Var inf cGinibre v1}. 

On the other hand, note that by \eqref{Var cGinibre deep bulk v2}, upon inserting the series \eqref{I nu definition} we have 
\begin{align}
\Var \, \NN^{(2)}( D_R) 
%& = \int_{D_R} \,dA(z_1) \int_{D_R^c}  \,dA(z_2) e^{-|z_1|^2-|z_2|^2 } \sum_{j=0}^\infty \frac{  |z_1 z_2|^{2j}  }{ (j!)^2 } 
%\\
&= \int_{D_R} \,dA(z_1) \int_{D_R^c}  \,dA(z_2)  e^{-|z_1|^2-|z_2|^2 }I_0(2|z_1z_2|)
\\
&= 
%4
\int_{0}^R dr_1 f(R,r_1) ,
%\int_R^\infty dr_2  e^{-r_1^2-r_2^2} r_1 r_2\,I_0(2r_1r_2).
\label{Int-f}
\end{align}
where we have defined
\begin{equation}
f(R,r_1):=4 \int_R^\infty   e^{-r_1^2-r_2^2} r_1 r_2\,I_0(2r_1r_2)\,dr_2. 
\end{equation}
Using Leibniz' rule, we obtain
\begin{align}\label{diff-f}
\frac{d}{dR} \int_0^R f(R,r_1)\,dr_1 & = f(R,R) +   \int_0^R \frac{d}{dR}f(R,r_1)\,dr_1 
\\
&= 4\int_R^\infty   e^{-R^2-r_2^2}  Rr_2\,I_0(2Rr_2)\,dr_2  -4 \int_0^R e^{-r_1^2-R^2} r_1 R  I_0(2r_1R)\,dr_1 
\\
&= 4R e^{-R^2} \Big( \int_R^\infty-\int_0^R\Big) e^{-x^2} x I_0(2Rx)\,dx
= 4R e^{-R^2} g(R),
\label{2int-f}
\end{align}
defining 
\begin{equation}\label{gR def}
    g(R):=\Big( \int_R^\infty-\int_0^R\Big) e^{-x^2} x I_0(2Rx)\,dx.
\end{equation}
It is determined by the following two known integrals \cite[Eqs. (6.631.4), (6.631.8)]{gradshteyn2014table}, respectively
\begin{align}
\label{Grad Inu}
    \int_0^\infty xe^{-x^2}I_0(2xR)&=\frac{1}{2}e^{R^2}\\
    \int_0^R x e^{-x^2} I_0(2xR)&= \frac{1}{4}\left( e^{R^2}-e^{-R^2}I_0(2R^2)\right).  \label{Grad Inu 2}
\end{align}
This yields for \eqref{2int-f} 
\begin{equation}
\label{gR result}
   4R e^{-R^2} g(R)= 4R e^{-R^2} \Big( \int_0^\infty-2\int_0^R\Big) e^{-x^2} x I_0(2Rx)\,dx =2Re^{-2R^2}I_0(2R^2).
\end{equation}
On the one hand, the integral of \eqref{diff-f} over $[0,R]$ trivially gives \eqref{Int-f}. On the other hand, using the previous result \eqref{gR result}, we obtain from the same integral 
\begin{equation}
  \Var \, \NN^{(2)}( D_R) =  \int_0^R 2xe^{-2x^2}I_0(2x^2)dx = R^2e^{-2R^2}(I_0(2R^2)+I_1(2R^2)),
    \end{equation}
thanks to the anti-derivative given in Remark \ref{rem:Iprime}. 
The asymptotic behaviours \eqref{linear behaviour Cplx} and \eqref{linear behaviour Cplx 0} follow from the well-known asymptotic behaviours of the modified Bessel-function 
\begin{equation} \label{I asymptotic}
I_\nu(x) \sim  \begin{cases}
    \dfrac{ e^x }{ \sqrt{2\pi x} }, & x \to \infty,
    \smallskip 
    \\
    \dfrac{x^\nu}{ 2^\nu \nu! }, & x \to 0. 
\end{cases} 
\end{equation}
This completes the proof. 
\end{proof}
\begin{rem}\label{rem:gR prep}
%[Alternative proof of of Proposition~\ref{Prop_Var inf csGinibre} (i)]
In order to prepare for the proof in the next subsection, where we will not have all integrals as in \eqref{Grad Inu 2} at hand, we present a different determination of $g(R)$ starting from \eqref{gR def}. It contains a nice expression closely related to the generating function for %the normalised incomplete Gamma function 
$Q(k+1,R^2)$. 

Inserting the series \eqref{I nu definition} for the Bessel function into the definition \eqref{gR def} we obtain
\begin{align*}
g(R)&
%= \Big( \int_R^\infty-\int_0^R\Big) e^{-x^2} x  \sum_{k=0}^\infty \frac{(Rx)^{2k}}{ (k!)^2 } \,dx
= \frac12\sum_{k=0}^\infty \frac{R^{2k}}{k!} \Big( Q(k+1,R^2)-P(k+1,R^2) \Big)
%\\&=  \frac12\sum_{k=0}^\infty \frac{R^{2k}}{k!} \Big( 2Q(k+1,R^2)-1\Big) 
= \sum_{k=0}^\infty \frac{R^{2k}}{k!} Q(k+1,R^2)- \frac{ e^{R^2} }{2}. 
\end{align*}
Furthermore, using the expression
\begin{equation}
Q(k+1,x) = e^{-x} \sum_{l=0}^k \frac{x^l}{l!},
\end{equation}
we have
\begin{align}\label{Qgen}
\sum_{k=0}^\infty \frac{R^{2k}}{k!} Q(k+1,R^2) &= e^{-R^2} \sum_{k=0}^\infty \frac{R^{2k}}{k!} \sum_{l=0}^k \frac{ R^{2l} }{l!} 
%= e^{-R^2 } \sum_{ m=0 }^\infty   \Big(  \sum_{ \substack{ k \ge l \\ k+l=m }  }  \frac{1}{k! l!} \Big)       R^{2m}\\
%&
=  e^{-R^2 } \sum_{ m=0 }^\infty \frac{1}{m!}   
\sum_{k = \lceil \frac{m}{2}\rceil  }^{m} \binom{m}{k}        R^{2m}. 
\end{align}
Here, we have introduced the variable $m=k+l$, with $k\ge l=m-k\ge0$ and $m\ge k$, resulting into the summation boundaries indicated, and $\lceil \frac{m}{2}\rceil$ is the ceiling function.  The following two identities follow from the Binomial Theorem, by relabelling one of the sums in the first identity, and using it together with Pascal's identity to derive the second: 
\begin{equation}
 2\sum_{k=m}^{2m} \binom{2m}{k} = 2^{2m} + \binom{2m}{m}, \qquad  
 \sum_{k=m+1}^{2m+1} \binom{2m+1}{k} = 2^{2m} .
\end{equation}
Splitting the sum in \eqref{Qgen} into even and odd summands, we obtain
\begin{align*}
%&\quad \sum_{ m=0 }^\infty \frac{1}{m!}  \Big(  \sum_{ \substack{ k \ge l \\ k+l=m }  } \binom{m}{k}  \Big)       R^{2m} 
%\\
\sum_{k=0}^\infty \frac{R^{2k}}{k!} Q(k+1,R^2)&= \sum_{ m=0 }^\infty \frac{1}{(2m)!}  \Big(  2^{2m-1} +\frac{1}{2}\binom{2m}{m}\Big)    (R^2)^{2m} + \sum_{ m=0 }^\infty \frac{1}{(2m+1)!}   2^{2m}    (R^{2})^{2m+1} 
\\
&= \frac12 \sum_{m=0}^\infty \frac{(2R^2)^m}{m!} +\frac12 \sum_{m=0}^\infty \frac{ (R^2)^{2m} }{ (m!)^2 } = \frac12 e^{2R^2} + \frac12 I_0(2R^2). 
\end{align*}
This leads again to \eqref{gR result}.
\end{rem}

\subsection{Number variance of the GinSE at the origin} \label{Subsec_variance GinSE inf}
\hfill\\
For the GinSE, we have 
\begin{equation} 
\Var \NN^{(4)}(D_R)= -\int_{ D_R }\,dA(z) \int_{ D_R^c }\,dA(w)\, R_2^{(4),c}(z,w),
\end{equation}
where $R_2^{(4),c}(z,w)=R_2^{(4)}(z,w)-R_1^{(4)}(z)R_1^{(4)}(w)$  is the connected $2$-point function.

\begin{proof}[Proof of Proposition~\ref{Prop_Var inf csGinibre} (ii)]
For completeness we present $R_2^{(4),c}(z,w)$ in terms of the limiting kernel at the origin \cite{MR1928853,Mehta}
\begin{equation}
    \kappa(z,w)=e^{z^2+w^2}\erf (z-w).
\end{equation}
Inserted this into the limit of \eqref{Rk beta4N} defined in \eqref{lim Rkb24}, we obtain
\begin{equation}
R_1^{(4)}(z)=(\bar{z}-z)e^{ -2|z|^2 } \kappa(z,\bar{z})
\end{equation}
and
\begin{equation}
	R_2^{(4)}(z,w)=(\bar{z}-z)(\bar{w}-w)  e^{ -2|z|^2-2|w|^2 } 
\Big( \kappa(z,\bar{z})  \kappa(w,\bar{w})-|\kappa(z,w)|^2+|\kappa(z,\bar{w})|^2  \Big).
\end{equation}
This leads to 
\begin{align}
\begin{split}
R_2^{(4),c}(z,w)&=-(\bar{z}-z)(\bar{w}-w)  e^{ -2|z|^2-2|w|^2 } 
\Big( |\kappa(z,w)|^2-|\kappa(z,\bar{w})|^2  \Big)
\\
&= -\pi (\bar{z}-z)(\bar{w}-w)  e^{ -2|z|^2-2|w|^2 } \Big( \Big|   e^{z^2+w^2} \erf(z-w) \Big|^2- \Big|  e^{z^2+\bar{w}^2} \erf(z- \Bar{w} ) \Big|^2 \Big).  
\end{split}
\end{align}
%Note that 
%\begin{equation}
%\erf(z)= \frac{2}{\sqrt{\pi}} e^{-z^2} \sum_{n=0}^\infty \frac{ 2^n z^{2n+1} }{ (2n+1)!! }.
%\end{equation}
%Thus 
%\begin{align}
%\begin{split}
%\kappa(z,w) & =  \sqrt{\pi} e^{z^2+w^2} \erf(z-w)  =    2 e^{z^2+w^2-(z-w)^2}  \sum_{n=0}^\infty \frac{ 2^n (z-w)^{2n+1} }{ (2n+1)!! }   
%\\
%&=    2  \sum_{n=0}^\infty \frac{ (2zw)^n }{n!}   \sum_{n=0}^\infty \frac{ 2^n (z-w)^{2n+1} }{ (2n+1)!! } 
%\end{split}
%\end{align}
In the following, we will use a different expression, directly following from 
the skew-orthogonal polynomial formalism \cite{MR1928853,Mehta}, where the angular integrals are more easily performed:
\begin{equation}
\label{kappa origin}
\kappa(z,w)= \sum_{ 0 \le l \le k \le \infty } \frac{ 2^{k+l+1} }{ (2k+1)!!\, (2l)!! } ( z^{2k+1} w^{2l} - w^{2k+1 } z^{2l} )=: G(z,w)-G(w,z),
\end{equation}
see e.g. \cite{akemann2021scaling}. 
Following the proof in \cite[Section 2.2]{akemann2022universality}, using the complex conjugation symmetry and the orthogonality, we have
\begin{align*}
\Var \NN^{(4)}(D_R)  &= 2\int_{ D }\,dA(z) \int_{ D^c }\,dA(w)\,(\bar{z}-z)(\bar{w}-w)  e^{ -2|z|^2-2|w|^2 } |\kappa(z,w)|^2
\\
%&= 2\int_{ D }\,dA(z) \int_{ D^c }\,dA(w)\,(zw+\bar{z}\bar{w}-z \bar{w}-\bar{z}w)  e^{ -2|z|^2-2|w|^2 } |\kappa(z,w)|^2 
%\\
&= -4\int_{ D }\,dA(z) \int_{ D^c }\,dA(w)\,(zw+\bar{z}\bar{w}-z \bar{w}-\bar{z}w)  e^{ -2|z|^2-2|w|^2 } G(z,w) G( \bar{w} , \bar{z} ).
\\
&= 4\int_{ D }\,dA(z) \int_{ D^c }\,dA(w)\, \bar{z}w  e^{ -2|z|^2-2|w|^2 } G(z,w) G( \bar{w} , \bar{z} ),
%\\
%&= 2\pi \int_{ D }\,dA(z) \int_{ D^c }\,dA(w)\,(\bar{z}-z)(\bar{w}-w)  e^{ -2|z|^2-2|w|^2 }  e^{z^2+ \bar{z}^2+w^2+\bar{w}^2 } |\erf(z-w)|^2
%\\
%&= -8\pi \int_{ D }\,dA(z) \int_{ D^c }\,dA(w)\,(\im z)(\im w)  e^{ -4 (\im z)^2-4 (\im w)^2 } |\erf(z-w)|^2.
\end{align*}
where only the third term in the second line can be shown to contribute \cite{akemann2022universality}.
The angular integrals project onto a single sum,
\begin{align}\label{Var4R int}
\Var \NN^{(4)}(D_R)  &=  \sum_{k=0}^\infty  \frac{2^{4k+4}}{(2k+1)!^2} \int_{ D }\,dA(z) \int_{ D^c }\,dA(w)\,  e^{ -2|z|^2-2|w|^2 } |z|^{4k+2} |w|^{4k+2}
%\\
%&= 2\int_{ D }\,dA(z) \int_{ D^c }\,dA(w)\,  e^{ -2|z|^2-2|w|^2 } \Big( I_0(4|zw|)- J_0(4|zw|) \Big) 
\\
&= \sum_{k=0}^\infty P(2k+2,2R^2) Q(2k+2,2R^2), 
\end{align}
which leads to \eqref{Var inf sGinibre v1}.

It remains to show \eqref{Var inf sGinibre v2}. 
For this, note that the series expansion for the Bessel function of first kind, 
\begin{equation}
\label{J0 def}
    J_0(z)=\sum_{k=0}^\infty\frac{(-1)^k(z/2)^{2k}}{(k!)^2},
\end{equation}
together with that for the modified Bessel function $I_0(z)$ in \eqref{I nu definition}, can be used to rewrite the sum in \eqref{Var4R int} as the following difference:
\begin{align}
\begin{split}
\Var \NN^{(4)}(D_R) % &=  \sum_{k=0}^\infty  \frac{2^{4k+4}}{(2k+1)!^2} \int_{ D }\,dA(z) \int_{ D^c }\,dA(w)\,  e^{ -2|z|^2-2|w|^2 } |z|^{4k+2} |w|^{4k+2}
%\\
&= 2\int_{ D }\,dA(z) \int_{ D^c }\,dA(w)\,  e^{ -2|z|^2-2|w|^2 } \Big( I_0(4|zw|)- J_0(4|zw|) \Big) \\
&=  8 \int_0^R \,dr_1 \int_R^\infty \,dr_2\,  e^{ -2r_1^2-2r_2^2 } r_1r_2\Big( I_0(4r_1r_2)- J_0(4r_1r_2) \Big).
\end{split}
\end{align}
The first integral follows from Proposition~\ref{Prop_Var inf csGinibre} (i), by substituting $R^2\to2R^2$:
%\begin{align*}
%4\int_{0}^R dr_1\int_R^\infty dr_2  e^{-r_1^2-r_2^2} r_1 r_2\,I_0(2r_1r_2)= R^2 e^{-2 R^2} \Big ( I_0(2 R^2) + I_1(2 R^2) \Big ) .
%\end{align*}
%Therefore, we have 
\begin{align*}
  8 \int_0^R \,dr_1 \int_R^\infty \,dr_2\,  e^{ -2r_1^2-2r_2^2 } r_1r_2 I_0(4r_1r_2)
%&=  2 \int_0^{ \sqrt{2}R } \,dr_1 \int_{ \sqrt{2}R }^\infty \,dr_2\,  e^{ -r_1^2-r_2^2 } r_1r_2 I_0(2r_1r_2)
%\\
%&
= R^2 e^{ -4R^2 } \Big ( I_0(4 R^2) + I_1(4 R^2) \Big ).
\end{align*}

Next, we evaluate the second integral 
\begin{equation}
 8 \int_0^R \,dr_1 \int_R^\infty \,dr_2\,  e^{ -2r_1^2-2r_2^2 } r_1r_2  J_0(4r_1r_2)=
 \int_0^R \,dr_1 f(R,r_1), 
\end{equation}
where we defined
\begin{equation}
f(R,r_1):=8 \int_R^\infty   e^{-2r_1^2-2r_2^2} r_1 r_2\,J_0(4r_1r_2)\,dr_2. 
\end{equation}
As in the previous subsection we obtain
\begin{align*}
\frac{d}{dR} \int_0^R f(R,r_1)\,dr_1 & = f(R,R) +   \int_0^R \frac{d}{dR}f(R,r_1)\,dr_1 
\\
&= 8R \int_R^\infty   e^{-2R^2-2r_2^2}  r_2\,J_0(4Rr_2)\,dr_2  -8R \int_0^R e^{-2r_1^2-2R^2} r_1  J_0(4Rr_1)\,dr_1 
\\
&= 8R e^{-2R^2} g(R)
%\Big( \int_R^\infty-\int_0^R\Big) e^{-2x^2} x J_0(4Rx)\,dx
\end{align*}
with  
\begin{equation}
g(R):=  \Big( \int_R^\infty-\int_0^R\Big) e^{-2x^2} x J_0(4Rx)\,dx,
\end{equation}
now containing the function $J_0$ instead of $I_0$. Although the first integral in \eqref{Grad Inu} can be analytically continued, we did not find the second integral with $J_0$ in the literature. Thus, we follow the same strategy as in Remark \ref{rem:gR prep}, with \eqref{J0 def} instead:
%Using integration by parts,
%\begin{equation}
%g(R)= \frac12 e^{-2R^2} J_0(4R^2)- \frac14 - R \Big( \int_R^\infty-\int_0^R\Big)  e^{-2x^2}  J_1(4Rx)\,dx.
%\end{equation}
%Then 
%\begin{align*}
%R \Big( \int_R^\infty-\int_0^R\Big)  e^{-2x^2}  J_1(4Rx)\,dx &= R \Big( \int_R^\infty-\int_0^R\Big)  e^{-2x^2} \sum_{k=0}^\infty \frac{(-1)^{k} (2Rx)^{2k+1} }{ k!(k+1)! }  \,dx
%\\
%&=\sum_{k=0}^\infty R^{2k+2} 2^{k-1}   \frac{(-1)^k}{(k+1)!} \Big( Q(k+1,2R^2)- P(k+1,2R^2) \Big)
%%\\
%&=\frac14(e^{-2R^2}-1)+ \sum_{k=0}^\infty R^{2k+2} (-2)^{k}   \frac{1}{(k+1)!} Q(k+1,2R^2).
%\end{align*}
%%Therefore 
%\begin{equation}
%g(R)= \frac12 e^{-2R^2} J_0(4R^2) -\frac14 e^{-2R^2}- \sum_{k=0}^\infty R^{2k+2} (-2)^{k}   \frac{1}{(k+1)!} Q(k+1,2R^2).
%\end{equation}
\begin{align}
\begin{split}
g(R) &=  \Big( \int_R^\infty-\int_0^R\Big) e^{-2x^2} x  \sum_{k=0}^\infty \frac{ (-1)^k (2Rx)^{2k} }{ k!^2 } \,dx
 \\
&=\frac{1}{4}\sum_{k=0}^\infty \frac{(-1)^k(2R^2)^k}{k!} \Big( Q(k+1,2R^2)- P(k+1,2R^2) \Big)
\\
&= \frac{1}{2}\sum_{k=0}^\infty    \frac{(-1)^k(2R^2)^k}{k!} Q(k+1,2R^2)-\frac14 e^{-2R^2} .
\label{gR beta4}
\end{split}
\end{align}
Similarly to \eqref{Qgen} we can thus write 
\begin{align*}
\frac{1}{2}\sum_{k=0}^\infty \frac{(-1)^k(2R^2)^k}{k!} Q(k+1,2R^2) 
%&= e^{-2R^2} \sum_{k=0}^\infty R^{2k} 2^{k-1}   \frac{(-1)^k}{k!} \sum_{l=0}^k \frac{ (2R^2)^l }{l!} 
% \\
% &=  \frac{e^{-2R^2}}{2} \sum_{k=0}^\infty \sum_{l=0}^k    \frac{(-2R^2)^k}{k!}\frac{ (2R^2)^l }{l!}  = \frac{e^{-2R^2}}{2}  \sum_{m=0}^\infty  \Big( \sum_{ \substack{ k \ge l \\ k+l=m } }  \frac{ (-1)^k }{k!l!}      \Big)   R^{2m}
% \\
 &=\frac{1}{2} e^{-2R^2} +  \frac{1}{2} e^{-2R^2} \sum_{m=1}^\infty  \frac{1}{m!}
 \sum_{k = \lceil \frac{m}{2}\rceil  }^{m}
 (-1)^k \binom{m}{k}   (2R^2)^{m}. 
\end{align*} 
Similarly to the Remark \ref{rem:gR prep} we can use two combinatorial identities, that follow from the Binomial theorem, this time with alternating coefficients. For that reason the terms $2^m$ are absent, and we obtain
\begin{align}
& 2\sum_{ k=m }^{2m} (-1)^k \binom{2m}{k}  = (-1)^m \binom{2m}{m}, \qquad (m \ge 1),
\\
&\sum_{ k=m+1 }^{2m+1} (-1)^k \binom{2m+1}{k}  = - (-1)^m \binom{2m}{m}, \qquad (m \ge 0). 
\end{align}
Using this, we have
\begin{align}
\begin{split}
    &\quad \frac{1}{2}\sum_{k=0}^\infty \frac{(-1)^k(2R^2)^k}{k!} Q(k+1,2R^2) 
    \\
    &= \frac{e^{-2R^2}}{2}\bigg(1+ \sum_{ m=1 }^\infty \frac{1}{(2m)!}  (-1)^m\frac{1}{2}\binom{2m}{m} (2R^2)^{2m} + \sum_{ m=0 }^\infty \frac{1}{(2m+1)!}   (-1)^{m+1}\binom{2m}{m} (2R^{2})^{2m+1} \bigg)
    \\
    &= \frac{e^{-2R^2}}{4} \Big(1+ J_0(4R^2) \Big) - e^{-2R^2} \,R^2\,{}_1 F_2  \Big( \frac{1}{2}; 1,\frac{3}{2}; -4 R^4 \Big) .
  \label{sum Q 1F2}  
\end{split}
\end{align}
The last step follows from \eqref{J0 def} and from spelling out the hypergeometric function defined in \eqref{def hypergeometric}, in this case as
\begin{equation} \label{J0 1F2 expansion}
%J_0(4x^2)= \sum_{k=0}^\infty \frac{(-1)^k(2x^2)^{2k} }{(k!)^2}, \qquad 
{}_1 F_2  \Big( \frac{1}{2}; 1,\frac{3}{2}; -4 x^4 \Big) = \sum_{k=0}^\infty  \frac{ (-1)^k (2x^2)^{2k} }{ (k!)^2(2k+1) }.
\end{equation}
Inserted into \eqref{gR beta4} we thus obtain 
\begin{equation}
g(R)= \frac14 e^{-2R^2} J_0(4R^2) - e^{-2R^2} \,R^2\,{}_1 F_2  \Big( \frac{1}{2}; 1,\frac{3}{2}; -4 R^4 \Big). 
\end{equation}
Therefore, we have shown that 
\begin{align}
\begin{split}
\Var \NN^{(4)}(D_R) &=    R^2 e^{ -4R^2 } \Big ( I_0(4 R^2) + I_1(4 R^2) \Big ) %-2  \int_0^R  x e^{-4x^2} dx
 \\
 &\quad -2  \int_0^R  x e^{-4x^2} \left( J_0(4x^2) -4  x^2\,{}_1 F_2  \Big( \frac{1}{2}; 1,\frac{3}{2}; -4 x^4 \Big) \right)\,dx.  \label{int F J0}
\end{split}
\end{align}
It follows from 
\begin{equation}\label{diff 1F2}
    \frac{\partial}{\partial x} \Big(x^2 {}_1 F_2  \Big( \frac{1}{2}; 1,\frac{3}{2}; -4 x^4 \Big)\Big) = 2x J_0(4x^2)
\end{equation}
and $ \frac{\partial}{\partial x}(e^{-4x^2})=-8xe^{-4x^2}$, that the integrand in the integral \eqref{int F J0} is a total derivative. This leads to the final answer %\eqref{Var inf sGinibre v2}.
\begin{align*}
\begin{split}
\Var \NN^{(4)}(D_R) &=    R^2 e^{ -4R^2 } \Big ( I_0(4 R^2) + I_1(4 R^2) \Big )
 %\\
 %&\quad 
 -R^2e^{-4R^2}\,{}_1 F_2  \Big( \frac{1}{2}; 1,\frac{3}{2}; -4 R^4 \Big). 
\end{split}
\end{align*}

\begin{comment} 
Finally, note that by \eqref{J0 1F2 expansion}, 
\begin{align*}
\int_0^R x e^{-4x^2} J_0(4x^2)\,dx&= \frac18\sum_{k=0}^\infty \frac{(-1)^{k}}{2^{2k}} \binom{2k}{k} P(2k+1,4R^2) 
%\\
%&= \frac{1}{8\sqrt{2}} - \frac18\sum_{k=0}^\infty \frac{(-1)^{k}}{2^{2k}} \binom{2k}{k} Q(2k+1,4R^2) 
\end{align*}
and 
\begin{align*}
\int_0^R 4x^3 e^{-4x^2} {}_1 F_2  ( 1/2; 1,3/2; -4 x^4 ) \,dx &=  \frac18 \sum_{k=0}^\infty \frac{(-1)^k}{ 2^{2k} } \binom{2k}{k} P(2k+2,4R^2). 
\end{align*}
%Note that 
%\begin{align*}
%\sum_{k=0}^\infty \frac{(-1)^{k}}{2^{2k}} \binom{2k}{k} Q(2k+1,4R^2)  &=  \sum_{k=0}^\infty \frac{(-1)^{k}}{2^{2k}} \binom{2k}{k} e^{-4R^2} \sum_{l=0}^{2k} \frac{(4R^2)^l} {l!}
%\\
%&=e^{-4R^2} \sum_{k=0}^\infty \sum_{l=0}^{2k} \frac{(-1)^{k}}{2^{2k-2l}} \binom{2k}{k}   \frac{1}{l!}  R^{2l} 
%\\
%&= e^{-4R^2} \sum_{m=0}^\infty  \frac{2^{2m}}{m!} \Big( \sum_{k=m}^\infty  \frac{(-1)^{k}}{2^{2k}} \binom{2k}{k}      \Big)       R^{2m}
%\end{align*}
Furthermore, using the recurrence relation 
\begin{align*}
P(2k+1,4R^2)-P(2k+2,4R^2) =  e^{-4R^2} \frac{ (4R^2)^{2k+1} }{ (2k+1)! },
\end{align*}
we obtain 
\begin{align*}
&\quad 2  \int_0^R  x e^{-4x^2} \Big( J_0(4x^2) -4  x^2\,{}_1 F_2  ( 1/2; 1,3/2; -4 x^4 ) \Big)\,dx.  
\\
&= \frac14  e^{-4R^2}  \sum_{k=0}^\infty  \frac{(-1)^k}{ 2^{2k} } \binom{2k}{k} \frac{ (4R^2)^{2k+1} }{ (2k+1)! } = \frac14  e^{-4R^2}  \sum_{k=0}^\infty  \frac{(-1)^k}{ 2^{2k} } \frac{ (4R^2)^{2k+1} }{ (k!)^2(2k+1) }
\\
&= e^{-4R^2} R^2 {}_1 F_2  ( 1/2; 1,3/2; -4 R^4 ).
\end{align*}
\end{comment}

The asymptotic behaviours \eqref{linear behaviour Sym} and \eqref{linear behaviour Sym 0} follow from \eqref{I asymptotic} and that of the hypergeometric function \cite[Section 16.11(ii)]{olver2010nist}
and
\begin{equation}
    {}_1 F_2  \Big( \frac{1}{2}; 1,\frac{3}{2}; -x \Big) =\frac{1}{2\sqrt{x}}+O(x^{-\frac{3}{2}}), \qquad x \to \infty,
\end{equation}
and 
\begin{equation}
      {}_1 F_2  \Big( \frac{1}{2}; 1,\frac{3}{2}; -x \Big) = 1 +O(x)
      %-\frac{x}{3}     +O(x^2)
      , \qquad x \to 0.  
\end{equation}
\end{proof}

\begin{comment}

\begin{rem}[An alternative proof of Proposition~\ref{Prop_Var inf csGinibre} (ii)]
During the proof, we showed the identity 
%\begin{equation} \label{sum Q 1F2}
% \sum_{k=0}^\infty x^{k}    \frac{(-1)^k}{k!} Q(k+1,x) = \frac12 e^{-x} + \frac12 e^{-x} J_0(2x) - e^{-x} \,x\,{}_1 F_2  ( 1/2; 1,3/2; -x^2 ).
%\end{equation}
Using the recurrence relation, we have  
\begin{align*}
 \sum_{k=0}^\infty P(2k+2,x) Q(2k+2,x) & =  \sum_{k=0}^\infty \Big( P(2k+1,x)-\frac{x^{2k+1} e^{-x} }{ (2k+1)! } \Big) \Big( Q(2k+1,x) + \frac{x^{2k+1} e^{-x} }{ (2k+1)! } \Big)  
\\
&=  \sum_{k=0}^\infty  P(2k+1,x) Q(2k+1,x)
\\
&\quad +e^{-x} \sum_{k=0}^\infty (2Q(2k+1,x)-1) \frac{ x^{2k+1} }{ (2k+1)! } + e^{-2x} \sum_{k=0}^\infty \frac{ x^{4k+2}  }{(2k+1)!^2} . 
\end{align*} 
Again, using the recurrence relation of the incomplete gamma function and the functional identity \eqref{sum Q 1F2}, one can show that 
\begin{equation*}
 \sum_{k=1}^\infty P(2k,2R^2) Q(2k,2R^2) =  \sum_{k=0}^\infty P(2k+1,2R^2) Q(2k+1,2R^2) -  2  R^2 e^{ -4R^2 } {}_1 F_2  ( 1/2; 1,3/2; -4 R^4 ) .
\end{equation*}
Together with this identity, Proposition~\ref{Prop_Var inf csGinibre} (ii) follows from Proposition~\ref{Prop_Var inf csGinibre} (i). 
\end{rem}

\end{comment}

To derive \eqref{Var inf sGinibre v3}, observe from \eqref{diff 1F2} that 
\begin{align} 
\begin{split}
R^2 {}_1 F_2  ( 1/2; 1,3/2; -4 R^4 ) &= \int_0^R 2t J_0(4t^2)\,dt  
=\frac{1}{4}\int_0^{4R^2}  J_0(s)\,ds
\\
&= R^2 J_0(4R^2)+\frac{\pi}{2} R^2 \Big( J_1(4R^2) \textbf{H}_{0}(4R^2)-J_{0}(4R^2) \textbf{H}_1(4R^2) \Big),
\end{split}
\end{align}
where we used \cite[6.511.6]{gradshteyn2014table}.
Then the expression \eqref{Var inf sGinibre v3} follows from the recurrence relation \cite[Eq.~(11.4.23)]{olver2010nist}
\begin{equation}
\textbf{H}_{\nu-1}(z)+\textbf{H}_{\nu+1}(z) = \frac{2\nu}{z} \textbf{H}_\nu(z) + \frac{(z/2)^\nu}{ \sqrt{\pi} \, \Gamma(\nu+\frac32) },
\end{equation}
with $\nu=0$. 

\begin{comment}
To derive \eqref{Var inf sGinibre v3}, observe from \eqref{diff 1F2} that 
\begin{equation}
x^2 {}_1 F_2  ( 1/2; 1,3/2; -4 x^4 ) &= 2\int_0^x t J_0(4t^2)\,dt.  %=  \frac{\pi}{2} \,x^2 \Big( J_0(4x^2) \textbf{H}_{-1}(4x^2)-J_{-1}(4x^2) \textbf{H}_0(4x^2) \Big).
\end{equation}
On the other hand, using
\begin{equation}
\int_0^x  J_0(t)\,dt= \frac{\pi}{2} \,x \Big( J_0(x) \textbf{H}_{-1}(x)-J_{-1}(x) \textbf{H}_0(x) \Big)
\end{equation}
 (see \cite[Eq (10.22.2)]{olver2010nist}), we have
 \begin{align}
{}_1 F_2  ( 1/2; 1,3/2; -4 x^4 ) & =   \frac{\pi}{2}  \Big( J_0(4x^2) \textbf{H}_{-1}(4x^2) \pm J_{\pm}(4x^2) \textbf{H}_0(4x^2) \Big).
%\\
%&=  \frac{\pi}{2}  \Big( J_0(4x^2) \textbf{H}_{-1}(4x^2)+J_{1}(4x^2) \textbf{H}_0(4x^2) \Big). 
 \end{align}
 \end{comment}

\begin{rem}
Combining Propositions~\ref{Prop_Var inf csGinibre} (i) and (ii), we also obtain
\begin{align}
\sum_{k=0}^\infty P(2k+1,2R^2) Q(2k+1,2R^2) =    R^2 e^{ -4R^2 } \Big ( I_0(4 R^2) + I_1(4 R^2) + {}_1 F_2  ( 1/2; 1,3/2; -4 R^4 ) \Big).
\end{align}
The left-hand side of this identity appears in the number variance of the induced GinSE, see \cite[Proposition 1.4]{akemann2022universality}. 
\end{rem}

\subsection{Number variance of the GinOE at the origin}
\hfill\\
In this subsection, we show Theorem~\ref{Thm_GinOE inf}.

\begin{prop}[\textbf{Expression of the number variance of the infinite GinOE}] \label{Prop_Expression var inf}
For any radius $R \in [0, \infty)$ we have the following.
\begin{itemize}
    \item[\textup{(i)}] 
We have for the real eigenvalues
\begin{equation} \label{Var NR DR inf closed}
\Var \NN_\R^{(1)}(D_R)= 2 \sqrt{ \frac{2}{\pi} } R - \frac{2}{\sqrt{\pi}} R \erf(2R) -\frac12 \erf( \sqrt{2}R ) + \frac14 \erf( \sqrt{2}R )^2 + \frac{1}{\pi} \Big( 1 -e^{-4R^2} \Big) .
\end{equation}
\item[\textup{(ii)}] We have for the mixing between real and complex eigenvalues 
\begin{align}\label{Cov prop4.3}
&\quad \Cov \Big( \NN_\C^{(1)}(D_R),   \NN_\R^{(1)}(D_R) \Big)
\\
&= \frac{2}{\pi} \int_{0}^R  \erfc(\sqrt{2}v)  v \,e^{v^2}\Big( e^{-(\hat{a} - R)^2}-e^{-(\hat{a} + R)^2}+ \sqrt{\pi}(\hat{a} - R) \erf(\hat{a} - R) -\sqrt{\pi} (\hat{a} + R) \erf(\hat{a} + R) \Big)\,dv, 
\nonumber
\end{align}
where $\hat{a} \equiv \hat{a}(R)=\sqrt{R^2-v^2}.$
\item[\textup{(iii)}] We have for the complex eigenvalues
\begin{equation}
\Var \NN_\C^{(1)} (D_R)  = 2 E_\C^{(1)}(R)  + 4 \Big( I_-(R) - I_+(R) \Big)   
\end{equation}
where 
\begin{equation}
I_\pm(R):= I_{\pm,1}(R) + I_{\pm,2}(R) + I_{\pm,3}(R) + I_{\pm,4}(R). 
\end{equation}
Here, 
\begin{align*}
I_{\pm,1}(R) & = \frac{1}{2\pi} \int_0^R \int_0^R \erfc(\sqrt{2}y_1) \erfc(\sqrt{2}y_2) e^{(y_1 \pm y_2)^2} e^{-(a+b)^2} (1+(y_1 \pm y_2)^2) \,dy_2 \,dy_1, 
\\
I_{\pm,2}(R) & = -\frac{1}{2\pi} \int_0^R \int_0^R \erfc(\sqrt{2}y_1) \erfc(\sqrt{2}y_2) e^{(y_1 \pm y_2)^2} e^{-(a-b)^2} (1+(y_1 \pm y_2)^2) \,dy_2 \,dy_1, 
\\
I_{\pm,3}(R) & = \frac{1}{4\sqrt{\pi}} \int_0^R \int_0^R \erfc(\sqrt{2}y_1) \erfc(\sqrt{2}y_2) e^{(y_1 \pm y_2)^2} (a+b)\erf(a+b) (1+2(y_1 \pm y_2)^2) \,dy_2 \,dy_1, 
\\
I_{\pm,4}(R) & = -\frac{1}{4\sqrt{\pi}} \int_0^R \int_0^R \erfc(\sqrt{2}y_1) \erfc(\sqrt{2}y_2) e^{(y_1 \pm y_2)^2} (a-b)\erf(a-b) (1+2(y_1\pm y_2)^2) \,dy_2 \,dy_1, 
\end{align*}
where $a \equiv a(R)= \sqrt{R^2-y_1^2}$ and $b \equiv b(R)= \sqrt{R^2-y_2^2}$. 
\end{itemize}
\end{prop}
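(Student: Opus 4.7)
The plan is to derive each of the three expressions by substituting the origin-scaling limits of the GinOE correlation functions (as collected in Appendix~\ref{App:A}) into the definitions of the real-real variance, the complex-complex variance, and the mixed covariance given in Subsection~\ref{subsec:var-inf}, and then carrying out the angular or Cartesian integrations against the indicator of the disc $D_R$. Throughout, the upper-half plane reduction of the GinOE Pfaffian point process (see Appendix~\ref{App:A}) is the technical backbone; whenever a non-real eigenvalue is integrated over $\C$, we restrict to $\C_+$ and gain a compensating factor of $2$ from the complex-conjugation symmetry of the process, which explains the prefactor $2\,E_\C^{(1)}(R)$ appearing in part~(iii).

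For part~(i), since $R_{1,\R}^{(1)}(x)=1/\sqrt{2\pi}$, the diagonal term in $\Var\NN_\R^{(1)}(D_R)$ immediately contributes $2R/\sqrt{2\pi}$. For the cluster piece, I would insert the origin-scaling formula for the real-real two-point cluster function from Appendix~\ref{App:A}, which is a function of $x-y$ built from a Gaussian in $x-y$, $\erf((x-y)/\sqrt{2})$, and $\erfc(\cdot)$ factors. The change of variables $u=x-y$, $v=x+y$ reduces the double integral over $(-R,R)^2$ to a one-dimensional integral $\int_{-2R}^{2R}(2R-|u|)\,F(u)\,du$. The required antiderivatives are standard:
\begin{equation*}
\int \erf(\alpha u)^2\,du,\qquad \int u\,\erf(\alpha u)\,du,\qquad \int e^{-\alpha^2 u^2}\,du,
\end{equation*}
each of which re-expresses in terms of $\erf$, $\erfc$, and Gaussians. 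Collecting the boundary contributions at $u=\pm 2R$ and $u=0$ yields the closed form \eqref{Var NR DR inf closed}.

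For part~(iii), I would expand the Pfaffian for the connected complex-complex two-point function at the origin, using the structure reviewed for the GinSE in Subsection~\ref{Subsec_variance GinSE inf} but now with the GinOE pre-kernel, which carries the two extra $\erfc(\sqrt{2}y_i)$ factors from the non-real density. Writing $z=x_1+iy_1$, $w=x_2+iy_2$ and restricting to $\C_+\times\C_+$, the connected part decomposes into a contribution proportional to $|\kappa(z,w)|^2$ and one proportional to $-|\kappa(z,\bar w)|^2$; these give rise respectively to the $(y_1-y_2)$-dependent and $(y_1+y_2)$-dependent pieces, producing the overall split $I_-(R)-I_+(R)$. In each piece the dependence on the real parts $x_1,x_2$ is, after expanding $\kappa$, a sum of Gaussian-polynomial factors in $x_1\pm x_2$, so that the $x_i$-integrations over the chords $|x_i|<a(y_1),\,b(y_2)$ reduce to elementary primitives and produce exactly the four building blocks $e^{-(a\pm b)^2}(1+(y_1\pm y_2)^2)$ and $(a\pm b)\erf(a\pm b)(1+2(y_1\pm y_2)^2)$ visible in $I_{\pm,j}$. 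Part~(ii) is handled in the same spirit: substitute the origin-scaling mixed correlation function $R_{1,\R,1,\C}^{(1)}(x_1,z_1)$ from Appendix~\ref{App:A}, restrict $z_1\in\C_+$, and first integrate $x_1$ over $(-R,R)$ and $\re z_1$ over $(-\hat a(v),\hat a(v))$ with $v=\im z_1$. The $x_1$-integration is a Gaussian-polynomial integral whose evaluation produces precisely the combination $e^{-(\hat a-R)^2}-e^{-(\hat a+R)^2}+\sqrt\pi(\hat a-R)\erf(\hat a-R)-\sqrt\pi(\hat a+R)\erf(\hat a+R)$, and the remaining $v$-dependent prefactor $\erfc(\sqrt 2 v)\,v\,e^{v^2}$ is inherited from the complex-side factor $R_{1,\C}^{(1)}$.

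The main obstacle will be the combinatorics of the Pfaffian expansion in part~(iii): the GinOE pre-kernel is a sum of several $\erf$- and exponential-type terms, and squaring it introduces many cross-terms whose signs must be tracked carefully to recover precisely the decomposition into $I_{\pm,1},\ldots,I_{\pm,4}$, with cancellations between the $|\kappa(z,w)|^2$ and $|\kappa(z,\bar w)|^2$ contributions that must line up with the $\pm$ parities in $(y_1\pm y_2)^2$. The reason no further simplification is available beyond a two-dimensional integral in $(y_1,y_2)$ is the persistent pair of $\erfc(\sqrt{2}y_i)$ factors, which unlike the single-variable integrand encountered in part~(i) do not collapse under the remaining integrations; once the expressions of Proposition~\ref{Prop_Expression var inf} are established, Theorem~\ref{Thm_GinOE inf} then follows by a separate Laplace-type asymptotic analysis of $I_\pm(R)$, of the covariance integral \eqref{Cov prop4.3}, and of \eqref{Var NR DR inf closed} as $R\to\infty$.
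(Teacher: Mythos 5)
Your plan matches the paper's own proof in its essential structure: insert the origin-scaling GinOE correlation functions from Appendix~\ref{App:A}, restrict to the upper half plane, expand the Pfaffians, and evaluate the remaining Gaussian integrals. Parts~(ii) and~(iii) are handled exactly as the paper does: for the covariance, the connected mixed correlation collapses to $-\erfc(\sqrt{2}v)\,v\,e^{-(x-u)^2+v^2}$ and the $x,u$-integrals over $(-R,R)\times(-\hat a,\hat a)$ give the bracketed combination in \eqref{Cov prop4.3}; for the complex--complex variance, the connected two-point function splits into $e^{(y_1-y_2)^2}$- and $e^{(y_1+y_2)^2}$-weighted pieces, and the chord integrals $\int_{-a}^a\int_{-b}^b e^{-(x_1-x_2)^2}((x_1-x_2)^2+c)\,dx_1dx_2$ yield the eight building blocks $I_{\pm,j}$.

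The one place you deviate is part~(i). The paper exploits $D_\R=-\partial_y S_\R$ and performs two integrations by parts in $y$ on $\int_{(-R,R)^2}\tilde I_\R D_\R$, handling the $\sgn$ and $\int S_\R$ pieces of $\tilde I_\R$ separately, which produces boundary terms $S_\R(x,\pm R)$ that assemble into the $\erf$'s. You instead change to $u=x-y$, $v=x+y$ and reduce to $\int_{-2R}^{2R}(2R-|u|)F(u)\,du$, which is valid since the cluster function is translation-invariant; one must then carefully split at $u=0$ because of the $\sgn(u)$ in $\tilde I_\R$, but the primitives you list (up to some $|u|e^{-u^2/2}$ pieces you omit) do close the integral. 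Both routes are elementary; yours is perhaps slightly more direct but needs more bookkeeping around the kink at $u=0$. One small point to tighten: the prefactor $2E_\C^{(1)}(R)$ in part~(iii) is not simply a ``compensating factor of $2$ from the restriction to $\C_+$''. The precise mechanism is that $\NN_\C^{(1)}(D_R)=2\NN_\C^{(1)}(H_R)$ almost surely, so $\Var\NN_\C^{(1)}(D_R)=4\,\Var\NN_\C^{(1)}(H_R)$, and the diagonal term of $\Var\NN_\C^{(1)}(H_R)$ is $\int_{H_R}R^{(1)}_{1,\C}=E_\C^{(1)}(R)/2$, giving $4\cdot E_\C^{(1)}(R)/2=2E_\C^{(1)}(R)$. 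Similarly, the $v$-prefactor $\erfc(\sqrt2 v)\,v\,e^{v^2}$ in \eqref{Cov prop4.3} is not ``inherited from $R^{(1)}_{1,\C}$'' (which carries $e^{2y^2}$ and a $\sqrt{2/\pi}$ normalisation) but rather emerges from the cross-block $\mathcal K^{(1)}_{\R,\C}$ entries in the mixed Pfaffian, each contributing a $(\erfc(\sqrt2 v))^{1/2}$ factor whose product gives the $\erfc(\sqrt2 v)$; this is worth being explicit about to avoid sign and normalisation errors.
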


\begin{rem}[\textbf{Number variance of the finite GinOE}] \label{Rem_GinOE var finiteN}
The counterpart of Proposition~\ref{Prop_Expression var inf} for the finite GinOE can be computed analogously. This involves leveraging the $1$- and $2$-point correlation functions of the real and complex eigenvalues, along with the mixed real and complex $(1,1)$-point function available in the references \cite{forrester2007eigenvalue, sommers2007symplectic, MR2530159}.
The resulting formula requires straightforward yet lengthy computations. 
Nonetheless, we do not include it here because it exceeds the scope of this paper.
\end{rem}

\begin{proof}[Proof of Proposition~\ref{Prop_Expression var inf}]
We first show (i). Note that
\begin{equation}
\Var \NN_\R^{(1)}(D_R)= \int_{-R}^R R_{1,\R}^{(1)}(x)\,dx 
+\int_{(-R,R)^2} \Big(R_{2,\R}^{(1)}(x,y)-R_{1,\R}^{(1)}(x) R_{1,\R}^{(1)}(y)\Big)\, dx\,dy, 
\end{equation}
see e.g. \cite{forrester2007eigenvalue} or \cite[Chapter 16]{Mehta}. Since 
\begin{equation}
R_{2,\R}^{(1)}(x,y)=S_\R(x,x)S_\R(y,y)-D_\R(x,y)\tilde{I}_\R(x,y)-S_\R(x,y)S_\R(y,x),
\end{equation}
as given in Appendix \ref{App:A}, we have 
\begin{equation}
\begin{split}
	\Var \NN_\R^{(1)}(D_R)&=E_\R^{(1)}(R)-\int_{(-R,R)^2}\Big( S_\R(x,y)^2 + \tilde{I}_\R(x,y) D_\R(x,y)\Big)\, dx\,dy.
 %\\
 %&= \sqrt{ \frac{2}{\pi} }R -\int_{(-R,R)^2}\Big( S_\R(x,y)^2 + \tilde{I}_\R(x,y) D_\R(x,y)\Big)\, dx\,dy.
\end{split}
\end{equation}
With the limiting matrix elements given in \eqref{SR def}, \eqref{DR def} and \eqref{IR def}, note that 
\begin{comment}
Notice that 
\begin{align}
\tilde{I}_\R(x,y)= -\tilde{I}_\R(x,y).
\end{align}
Recall that 
\begin{equation*} 
S_\R(x,y)= \frac{1}{ \sqrt{2\pi} } e^{-(x-y)^2/2}, \qquad D_\R(x,y):= -\frac{\pa}{\pa y} S_\R(x,y) = - \frac{x-y}{\sqrt{2\pi}} e^{-(x-y)^2/2}, 
\end{equation*}
and
\begin{equation*}
\tilde{I}_\R(x,y):= \int_{x}^{y} S_\R(t,y)\,dt +\frac12 \operatorname{sgn}(x-y)=  -\frac12 \erf \Big( \frac{x-y}{\sqrt{2}} \Big) +\frac12 \operatorname{sgn}(x-y)
\end{equation*}
%Since 
%$$
%\int_{-R}^R \sgn(x-y) \frac{\pa}{\pa y} S_\R (x,y)\,dy = \int_{-R}^R \frac{\pa}{\pa y} S_\R(x,y)\,dy -2 \int_{x}^R \frac{\pa}{\pa y }S_\R(x,y)\,dy, 
%$$
Then we have
\end{comment}
\begin{align}
\begin{split}
-\int_{(-R,R)^2} \sgn(x-y) D_\R(x,y) \,dx\,dy &= \int_{(-R,R)^2} \sgn(x-y) \frac{\pa}{\pa y} S_\R(x,y) \,dx\,dy
\\
&= \int_{-R}^R \Big[ \int_{-R}^R \frac{\pa}{\pa y} S_\R(x,y) \,dy -2 \int_{x}^R \frac{\pa}{\pa y} S_\R(x,y) \,dy \Big] \,dx 
\\
&= \int_{-R}^R 2S_\R(x,x)-S_\R(x,R)-S_\R(x,-R) \,dx. 
\end{split}
\end{align}
Furthermore, since 
\begin{align*}
&\quad \int_{-R}^R \int_x^y S_\R(t,x)\,dt \frac{\pa}{\pa y} S_\R(x,y)\,dy
\\
&= S_\R(x,y) \int_x^y S_\R(t,x)\,dt \Big|_{-R}^R- \int_{-R}^R S_\R(x,y)^2 \,dy
\\
&= S_\R(x,R) \int_x^R S_\R(t,x)\,dt- S_\R(x,-R) \int_{x}^{-R} S_\R(t,x)\,dt - \int_{-R}^R S_\R(x,y)^2 \,dy, 
\end{align*}
we have 
\begin{align}
\begin{split}
\label{Var prop4.3i}
\Var \NN_\R^{(1)}(D_R) &= 2 E_\R^{(1)}(R) -2 \int_{(-R,R)^2} S(x,y)^2\,dx\,dy
%\\
%& \quad 
-\frac12 \int_{-R}^R \,dx\, \Big( S(x,R)+S(x,-R)\Big) 
\\
& \quad + \int_{-R}^R\,dx \Big[ S(x,R) \int_x^R S(t,x)\,dt + S(x,-R) \int_{-R}^{x} S(t,x)\,dt \Big].
\end{split}
\end{align}
Since 
\begin{align*}
\int_{ (-R,R)^2 } S(x,y)^2 \,dx\,dy &= \frac{1}{2\pi}  \int_{ (-R,R)^2 } e^{-(x-y)^2}\,dx\,dy = \frac{1}{2\pi} \Big( 2\sqrt{\pi} R \erf(2R) -1 +e^{-4R^2} \Big),
\end{align*}
we have together with $E_\R^{(1)}(R)$ from Proposition~\ref{Prop_EN inf}
\begin{equation}
\begin{split}
 2 E_\R^{(1)}(R) -2 \int_{(-R,R)^2} S(x,y)^2\,dx\,dy %& = 2 \sqrt{ \frac{2}{\pi} } R - \frac{1}{\pi} \Big( 2\sqrt{\pi} R \erf(2R) -1 +e^{-4R^2} \Big)
% \\
 &= 2 \sqrt{ \frac{2}{\pi} } R - \frac{2}{\sqrt{\pi}} R \erf(2R) + \frac{1}{\pi} \Big( 1 -e^{-4R^2} \Big).
\end{split}
\end{equation}
Similarly, by direct computation, we have
\begin{equation}
 -\frac12 \int_{-R}^R \,dx\, \Big( S(x,R)+S(x,-R)\Big) = -\frac12 \erf( \sqrt{2}R )
\end{equation}
and
\begin{equation}
 \int_{-R}^R\,dx \Big[ S_\R(x,R) \int_x^R S_\R(t,x)\,dt + S_\R(x,-R) \int_{-R}^{x} S(t,x)\,dt \Big]=  \frac14 \erf( \sqrt{2}R )^2.  
\end{equation}
By inserting the last three equations into \eqref{Var prop4.3i}, we obtain Proposition~\ref{Prop_Expression var inf} (i).
\bigskip 

Next, we show (ii). By definition, we have 
\begin{align}
\begin{split}
&\quad \Cov \Big( \NN_\C^{(1)}(D_R),   \NN_\R^{(1)}(D_R) \Big)  = 2 \, \Cov \Big( \NN_\C^{(1)}(H_R),   \NN_\R^{(1)}(D_R) \Big) 
\\
&= 2\int_{H_R} \frac{du\,dv}{\pi} \int_{-R}^R \,dx \,   \Big( R_{1,\R,1,\C}^{(1)}(x,u+iv)- R_{1,\R}^{(1)}(x) R_{1,\C}^{(1)}(u+iv) \Big) , 
\end{split}
\end{align}
where $H_R = \C_+ \cap D_R$ is the semi-disc of radius $R$ in the upper-half plane. 
On the one hand, it follows from \eqref{R k1k2 rc} that  
\begin{equation}
R_{1,\R,1,\C}^{(1)}(x,u+iv)- R_{1,\R}^{(1)}(x) R_{1,\C}^{(1)}(u+iv) =  -\erfc(\sqrt{2}v) v\,  e^{ -(x-u)^2+v^2 } . 
\end{equation}

Note that 
\begin{align*}
&\quad \int_{-\hat{a}}^{\hat{a}} \,du \int_{-R}^R \,dx\,  e^{ -(x-u)^2 } 
\\
&= \sqrt{\pi}\Big(\frac{ e^{-(\hat{a}+ R)^2}-e^{-( \hat{a}-R)^2} }{\sqrt{\pi}}+ (\hat{a} + R) \erf(\hat{a} + R) - (\hat{a} - R) \erf(\hat{a} - R) \Big),
\end{align*}
where we use $\hat{a} \equiv \hat{a}(R)=\sqrt{R^2-v^2}$ for the bounds of the $u$ integral.
Therefore we obtain
\begin{align*}
&\quad \Cov \Big( \NN_\C^{(1)}(H_R),   \NN_\R^{(1)}(D_R) \Big) %= \int_{H_R} \frac{du\,dv}{\pi} \int_{-R}^R \,dx \,   \Big( R_{1,\R,1,\C}^{(1)}(x,u+iv)- R_{1,\R}^{(1)}(x) R_{1,\C}^{(1)}(u+iv) \Big) 
%\\
%&= -\sqrt{\pi} \int_{0}^R  \erfc(\sqrt{2}v)  v \Big(\frac{ e^{-(a + R)^2}-e^{-(a - R)^2} }{\sqrt{\pi}}+ (a + R) \erf(a + R) - (a - R) \erf(a - R) \Big)\,dv
\\
&= -\frac{1}{\pi} \int_{0}^R  \erfc(\sqrt{2}v)  v \,e^{v^2}\Big( e^{-(\hat{a} + R)^2}-e^{-(\hat{a} - R)^2}+ \sqrt{\pi}(\hat{a} + R) \erf(\hat{a} + R) -\sqrt{\pi} (\hat{a} - R) \erf(\hat{a} - R) \Big)\,dv.
\end{align*}

\bigskip 

Finally, we show (iii). Note that 
\begin{align}
\Var \NN_\C^{(1)} (D_R) & = \int_{ D_R } R_{1,\C}^{(1)}(z)\,dA(z)  +\int_{D_R^2} \Big( R_{2,\C}^{(1)} (z,w)-R_{1,\C}^{(1)}(z) R_{1,\C}^{(1)}(w) \Big) \,dA(z)\,dA(w)
\\
&= E_\C^{(1)}(R)  +\int_{D_R^2} \Big( R_{2,\C}^{(1)} (z,w)-R_{1,\C}^{(1)}(z) R_{1,\C}^{(1)}(w) \Big) \,dA(z)\,dA(w). 
\end{align}
Since 
\begin{equation}
 \NN_\C^{(1)} (D_R) = 2  \NN_\C^{(1)} (H_R),
\end{equation}
we have 
\begin{equation} \label{Var multi 4}
E_\C^{(1)}(R) = 2 
%E_\C^{(1)}(H_R)
\int_{ H_R } R_{1,\C}^{(1)}(z)\,dA(z)
, \quad \Var \NN_\C^{(1)} (D_R) = 4 \Var \NN_\C^{(1)} (H_R).
\end{equation}
Here, 
\begin{align} \label{Var N HR decom}
\Var \NN_\C^{(1)} (H_R) 
&= %E_\C^{(1)}(H_R)  
\int_{ H_R } R_{1,\C}^{(1)}(z)\,dA(z)+ \int_{H_R^2 } \Big( R_{2,\C}^{(1)} (z,w)-R_{1,\C}^{(1)}(z) R_{1,\C}^{(1)}(w) \Big) \,dA(z)\,dA(w). 
\end{align}

By \eqref{R k GinOE c}, we have
\begin{align}
\begin{split}
&\quad R_{2,\C} (z_1,z_2)-R_{1,\C}(z_1) R_{1,\C}(z_2) %= 4  \erfc(\sqrt{2}y_1)  \erfc(\sqrt{2}y_2)  \Big( |S(z_1,z_2)|^2-|S(z_1,\bar{z}_2)|^2  \Big)
%\\
%&= \frac{1}{2\pi}  \erfc(\sqrt{2}y_1)  \erfc(\sqrt{2}y_2)   \Big(    \Big| (z_1-z_2) e^{ -(z_1-z_2)^2/2 } \Big|^2-\Big| (z_1-\bar{z}_2) e^{ -(z_1-\bar{z}_2)^2/2 } \Big|^2  \Big)
%\\
%&= \frac{1}{2\pi}  \erfc(\sqrt{2}y_1)  \erfc(\sqrt{2}y_2) e^{-(z_1^2+\bar{z}_1^2+z_2^2+\bar{z}_2^2 )/2 }  \Big(    \Big| (z_1-z_2) e^{ z_1z_2 } \Big|^2-\Big| (z_1-\bar{z}_2) e^{ z_1\bar{z}_2 } \Big|^2  \Big)
%\\
%&= \frac{1}{2\pi}  \erfc(\sqrt{2}y_1)  \erfc(\sqrt{2}y_2) e^{ y_1^2+y_2^2-x_1^2-x_2^2 }  \Big(    \Big| (z_1-z_2) e^{ z_1z_2 } \Big|^2-\Big| (z_1-\bar{z}_2) e^{ z_1\bar{z}_2 } \Big|^2  \Big)
\\
&= \frac{\pi}{2}  \erfc(\sqrt{2}y_1)  \erfc(\sqrt{2}y_2)  e^{-(x_1-x_2)^2}
\\
&\quad \times \Big[   - \Big( (x_1-x_2)^2 +(y_1+y_2)^2 \Big)  e^{ (y_1+y_2)^2  }+ \Big( (x_1-x_2)^2 +(y_1-y_2)^2 \Big)  e^{ (y_1-y_2)^2  }  \Big], 
\end{split}
\end{align}
where $z_k=x_k+iy_k$ ($k=1,2$) with $y_k>0.$
Using the Gaussian integrals
\begin{align}
\begin{split}
&\quad
\int_{-a}^{a} \int_{-b}^{b} e^{-(x_1-x_2)^2} \Big( (x_1-x_2)^2 + c \Big) \,dx_2 \,dx_1
\\
&= \Big( e^{-(a+b)^2} - e^{-(a-b)^2} \Big) (1+c) + \frac{\sqrt{\pi}}{2} \Big( (a+b)\erf(a+b) - (a-b)\erf(a-b) \Big) (1+2c), 
\end{split}
\end{align}
we obtain 
\begin{align}
\int_{H_R ^2 } \Big( R_{2,\C}^{(1)} (z,w)-R_{1,\C}^{(1)}(z) R_{1,\C}^{(1)}(w) \Big) \,dA(z)\,dA(w)
= I_-(R) - I_+(R).
\end{align}
Now the proposition follows.
\end{proof}

To prove Theorem~\ref{Thm_GinOE inf} (iii), we first show the following lemma.

\begin{lem}\label{Lem_integrals GinOE c}
As $R \to \infty$, we have
\begin{align}  \label{I -12 asymp}
I_{-,1}(R) & = O(R^4 e^{-2R^2}), &
I_{-,2}(R) & = O(1), % =-0.03718...   
\\  \label{I -34 asymp}
I_{-,3}(R) & = \frac{\pi (2 \sqrt{2} - 1) - 4}{4 \pi^{3/2}} R + O\Big( \frac{1}{R} \Big), &
I_{-,4}(R) & = O\Big( \frac{1}{R^2} \Big), % Note I_{1,4}(R) < 0
\end{align}
and
\begin{align}  \label{I +12 asymp}
I_{+,1}(R) & = O\Big( \frac{1}{R^2} \Big), &
I_{+,2}(R) & = -\frac{R}{4 \sqrt{\pi}} + o(R), % o(R) = O(1) = 0.076
\\     \label{I +34 asymp}
I_{+,3}(R) & = \frac{1}{2} R^2 - \frac{4 + \pi (2 \sqrt{2} - 1)}{4 \pi^{3/2}} R + o(R), % o(R) = -0.167 \sqrt{R} + O(1)
&
I_{+,4}(R) & = -\frac{R}{4 \sqrt{\pi}} + o(R). % o(R) = 0.156 \sqrt{R} + O(1)
\end{align}
\end{lem}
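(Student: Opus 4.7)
The plan is to rewrite every integrand via the identity
\[
\erfc(\sqrt{2}y_1)\,\erfc(\sqrt{2}y_2)\,e^{(y_1\pm y_2)^2}\;=\;f(y_1)\,f(y_2)\,e^{-(y_1\mp y_2)^2},
\]
where $f(y):=\erfc(\sqrt{2}y)\,e^{2y^2}$ satisfies $f(0)=1$ and $f(y)=\frac{1}{\sqrt{2\pi}\,y}(1+O(y^{-2}))$ as $y\to\infty$. This splits the eight integrals into two families: the ``$+$'' integrals carry the localizer $e^{-(y_1-y_2)^2}$ and concentrate along the diagonal $y_1\approx y_2$, while the ``$-$'' integrals carry $e^{-(y_1+y_2)^2}$ and concentrate at the corner $y_1,y_2\approx 0$.

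For the ``$-$'' family, localization at the origin lets me expand $a=R-y_1^2/(2R)+O(R^{-3})$, $b=R-y_2^2/(2R)+O(R^{-3})$, so $a+b\to 2R$, $\erf(a+b)\to 1$, and $a-b=O(R^{-1})$. This immediately produces the stated asymptotics: $I_{-,2}$ has a finite limit by dominated convergence; $I_{-,4}$ is $O(R^{-2})$ since $(a-b)\erf(a-b)=O(R^{-4})$; $I_{-,3}$ inherits $(a+b)\erf(a+b)\sim 2R$ and grows linearly, with coefficient given by an explicit Gaussian double integral (evaluable through Fubini using the representation $\erfc(\sqrt{2}y)=\tfrac{2\sqrt{2}}{\sqrt{\pi}}\int_y^\infty e^{-2s^2}\,ds$); and $I_{-,1}$ acquires an extra factor $e^{-(a+b)^2}$, and since the vectors $(y_i,\sqrt{R^2-y_i^2})$ lie in the first quadrant with magnitude $R$, a short geometric computation gives $(y_1+y_2)^2+(a+b)^2\ge 2R^2$ on $[0,R]^2$, which yields $I_{-,1}=O(R^4 e^{-2R^2})$.

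For the ``$+$'' family I change variables $(s,t)=(y_1+y_2,y_1-y_2)$ with Jacobian $1/2$ and rescale $s=2Rx$, $x\in(0,1)$, with $t$ bounded by $e^{-t^2}$. Elementary expansions give $a+b=2R\sqrt{1-x^2}+O(R^{-1})$ and $(a-b)^2=x^2 t^2/(1-x^2)+O(R^{-2})$; combined with $f(y_1)f(y_2)\sim 1/(2\pi R^2 x^2)$ and the prefactors $(1+s^2)\sim 4R^2x^2$, the rescaled integrands are uniformly $O(1)$ on $(0,1)\times\mathbb{R}$. Dominated convergence then reduces $I_{+,3}$ to $\frac{2R^2}{\pi}\int_0^1\sqrt{1-x^2}\,dx=R^2/2$ at leading order, $I_{+,2}$ to $-\frac{R}{\pi^{3/2}}\int_0^1\sqrt{1-x^2}\,dx=-R/(4\sqrt{\pi})$ after Gaussian integration in $t$ with variance $1-x^2$, and $I_{+,4}$ to $-\frac{R}{\pi^{3/2}}\int_0^1\frac{x^2}{\sqrt{1-x^2}}\,dx=-R/(4\sqrt{\pi})$ after the identity $\int_{\mathbb{R}}e^{-t^2}\alpha t\,\erf(\alpha t)\,dt=\alpha^2/\sqrt{1+\alpha^2}$ (proved by integration by parts) with $\alpha=x/\sqrt{1-x^2}$. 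The extra factor $e^{-(a+b)^2}$ in $I_{+,1}$ localizes $x$ near $1$ and gives $O(R^{-2})$.

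The main obstacle will be the subleading $O(R)$ correction to $I_{+,3}$ and the exact linear coefficient of $I_{-,3}$: both require tracking one additional order in the expansions of $f$, $a+b$, and the polynomial factors, while controlling the boundary layers $x\to 0^+$ and $x\to 1^-$ where the bulk rescaling breaks down. My plan is to split $[0,R]^2$ into a bulk region and thin boundary strips, apply uniform bounds of the form $|f(y)-1/(\sqrt{2\pi}\,y)|\le C/(1+y^3)$ on $[0,\infty)$, and evaluate the residual constant-coefficient Gaussian integrals in closed form by repeated Fubini and integration by parts.
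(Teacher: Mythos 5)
Your overall strategy mirrors the paper's: use the elementary bound $f(y):=\erfc(\sqrt{2}y)e^{2y^2}\le 1$ (equivalently $\erfc(\sqrt{2}y)\le e^{-2y^2}$), localize the ``$-$'' integrals near the origin and expand $a\pm b$, pass to sum/difference (light-cone) coordinates for the ``$+$'' integrals, rescale $y_1+y_2$ by $R$, and take Gaussian/dominated-convergence limits. Your factorization identity $\erfc(\sqrt{2}y_1)\erfc(\sqrt{2}y_2)e^{(y_1\pm y_2)^2}=f(y_1)f(y_2)e^{-(y_1\mp y_2)^2}$ and the geometric observation $(y_1+y_2)^2+(a+b)^2=|\mathbf v_1+\mathbf v_2|^2\ge 2R^2$ for $I_{-,1}$ are clean and correct repackagings of the paper's bounds, and your leading-order Gaussian computations for $I_{+,2}$, $I_{+,3}$, $I_{+,4}$ (including the integration-by-parts identity $\int_{\mathbb R}e^{-t^2}\alpha t\,\erf(\alpha t)\,dt=\alpha^2/\sqrt{1+\alpha^2}$) reproduce the paper's leading coefficients. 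So the route is essentially the same.

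That said, two points need fixing. First, the proposed uniform bound $|f(y)-1/(\sqrt{2\pi}\,y)|\le C/(1+y^3)$ on $[0,\infty)$ is false: as $y\to 0^+$ the left-hand side diverges (since $f(0)=1$ while $1/(\sqrt{2\pi}\,y)\to\infty$) whereas the right-hand side stays bounded. You would have to work with a bound valid only away from $y=0$, e.g.\ $|f(y)-1/(\sqrt{2\pi}\,y)|\le C/(y(1+y^2))$ for $y\ge 1$, plus a separate estimate on a fixed neighbourhood of the origin. Second, and more importantly, you explicitly flag the linear coefficient of $I_{-,3}$ and the $O(R)$ term in $I_{+,3}$ as ``the main obstacle'' and only sketch a plan. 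In the paper those two constants absorb almost all of the labour: $I_{+,3}$ is split as $R^2/2-I_{+,3}^A-I_{+,3}^B+O(1)$, where $I_{+,3}^A$ captures the boundary layer and $I_{+,3}^B$ captures the bulk correction, and each is reduced by several interlocking integrations by parts with nontrivial cancellations of boundary terms; the constant $\frac{\pi(2\sqrt{2}-1)-4}{4\pi^{3/2}}$ in $I_{-,3}$ comes from a similar two-step integration by parts. A Fubini-based route using the tail representation of $\erfc$ is plausible, but until you actually carry it out and verify that it yields precisely these constants (and that the error from the boundary strips is $o(R)$), the quantitatively hardest part of the lemma remains unproved. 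The soft $O$-estimates ($I_{-,1}$, $I_{-,2}$, $I_{-,4}$, $I_{+,1}$) are convincingly handled, though $I_{+,1}$ also needs a genuine boundary-layer computation near $x=1$ rather than the blanket ``uniformly $O(1)$'' claim, since there the rescaled integrand concentrates on a shrinking window.
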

A comparison with numerics shows that the $o(R)$ error terms for $I_{+,2}(R)$, $I_{+,3}(R)$ and $I_{+,4}(R)$ are in fact of order $O(1)$ for $I_{+,2}(R)$, and $O(\sqrt{R})$ for $I_{+,3}(R)$ and $I_{+,4}(R)$.
While we do not need a more precise asymptotic analysis for Theorem~\ref{Thm_GinOE inf}, we note that these subleading behaviours can be derived analytically in a similar way as our analysis of $I_{+,3}(R)$.

\begin{proof}
To analyse $I_{-,1}(R)$ and $I_{-,2}(R)$, we first note that 
$$
(y_1 - y_2)^2 - (a \pm b)^2 = 2 (y_1^2 + y_2^2) - 2 y_1 y_2 - 2 R^2 \Big( 1 \pm \sqrt{1 - \frac{y_1^2 + y_2^2}{R^2} + \frac{y_1^2 y_2^2}{R^4}} \Big)
$$
and that for any $y_1,y_2 \in [0,R],$ 
$$
0 \leq \sqrt{1 - \frac{y_1^2 + y_2^2}{R^2} + \frac{y_1^2 y_2^2}{R^4}} \leq 1 - \frac{y_1^2 + y_2^2}{2 R^2}. 
$$
Combining these with elementary estimates
$$
\erfc(\sqrt{2} y) \leq e^{-2 y^2}, \qquad 1 + (y_1 - y_2)^2 \leq 1 + R^2, \qquad e^{-2 y_1 y_2} \leq 1, 
$$
we obtain
\begin{align}
|I_{-,1}(R)| \leq \frac{1}{2 \pi} \int_0^R \int_0^R e^{-2 (y_1^2 + y_2^2)} \, e^{2 (y_1^2 + y_2^2) - 2 y_1 y_2 - 2 R^2} (1 + R^2) \,dy_1 \,dy_2
\leq \frac{1}{2 \pi} R^4 e^{-2 R^2}.
\end{align}
The integrand of $I_{-,2}(R)$ is bounded by the integrable function $e^{-(y_1+y_2)^2} (1 + (y_1 - y_2)^2)$ and thus
\begin{align}
\abs{I_{-,2}(R)} \leq -\frac{1}{2\pi} \int_0^\infty \int_0^\infty e^{-(y_1+y_2)^2} (1 + (y_1 - y_2)^2) \,dy_2 \,dy_1
= \frac{1}{3 \pi}.
\end{align}
%In fact, by the dominated convergence theorem, we obtain that as $R \to \infty,$
%\begin{align}
%I_{-,2}(R) \to -\frac{1}{2\pi} \int_0^\infty \int_0^\infty \erfc(\sqrt{2}y_1) \erfc(\sqrt{2}y_2) e^{(y_1 - y_2)^2} (1+(y_1 - y_2)^2) \,dy_2 \,dy_1
%\approx -0.03718.
%\end{align}
%Here, we computed the integral numerically.
These give \eqref{I -12 asymp}. 

For $I_{-,3}(R)$ and $I_{-,4}(R)$, we observe that for $y_1,y_2 \geq 0,$
$$
\erfc(\sqrt{2}y_1) \erfc(\sqrt{2}y_2) e^{(y_1-y_2)^2} \leq e^{-(y_1^2+y_2^2)}.
$$ 
After expanding the other factors in the integrand using a Taylor series, it can be observed that the error of the approximation is polynomial in $y_1$ and $y_2$. As a result, the integral over this error will remain bounded as $R \to \infty.$
Note that 
$$
a + b = \sqrt{R^2-y_1^2} + \sqrt{R^2-y_2^2} = 2 R + O\Big(\frac{1}{R}\Big), \qquad \erf(a+b) = 1 + O(e^{-4R^2}). 
$$
Using these, we have 
\begin{align}
%\begin{split}
I_{-,3}(R)
& = \frac{1}{4\sqrt{\pi}} \int_0^\infty \int_0^\infty \erfc(\sqrt{2}y_1) \erfc(\sqrt{2}y_2) e^{(y_1 - y_2)^2} \Big( 2 R + O\Big(\frac{1}{R}\Big) \Big) (1+2(y_1 - y_2)^2) \,dy_2 \,dy_1
\nonumber\\
& = \frac{R}{2\sqrt{\pi}} \frac{\pi (2 \sqrt{2} - 1) - 4}{2 \pi} + O\Big(\frac{1}{R}\Big).
%\end{split}
\label{I-3final}
\end{align}
Here, the double integral has been evaluated as follows. For the inner integral over $y_2$, we obtain from an integration by parts
\begin{align*}
\begin{split}
&\int_0^\infty \erfc(\sqrt{2}y_2) e^{(y_1 - y_2)^2}(1+2(y_1 - y_2)^2) \,dy_2\\
&=\left[ \erfc(\sqrt{2}y_2)(y_2-y_1)e^{(y_2-y_1)^2}\right]_0^\infty+
%=y_1e^{y_1^2}+
\frac{2\sqrt{2}}{\sqrt{\pi}}\int_0^\infty e^{-2y_2^2}(y_2-y_1)e^{(y_2-y_1)^2} \,dy_2
\\
&=e^{y_1^2}\Big(y_1+\frac{2\sqrt{2
}}{\sqrt{\pi}}e^{y_1^2}\Big( \frac{1}{2}e^{-y_1^2}-y_1\sqrt{\pi}\erfc(y_1)\Big)\Big) \;.
\end{split}
\end{align*}
The remaining integral over $y_1$ contains three parts:
\begin{align}\label{3terms}
    \int_0^\infty\erfc(\sqrt{2}y_1) e^{y_1^2}\Big(y_1+\sqrt{\frac{2}{\pi}}-2\sqrt{2}y_1\erfc(y_1)e^{y_1^2}\Big)\,dy_1.
\end{align}
For the first term we obtain
\cite[6.238.1]{gradshteyn2014table}
\begin{equation}\label{firstterm}
    \int_0^\infty\erfc(\sqrt{2}y_1) e^{y_1^2}y_1\,dy_1=-\frac{1}{2}+\frac{\sqrt{2}}{2}.
\end{equation}
%\begin{align}\label{firstterm}
%\int_0^\infty\erfc(\sqrt{2}y_1) e^{y_1^2}y_1\,dy_1=\left[   \erfc(\sqrt{2}y_1) \frac{1}{2}e^{y_1^2}\right]_0^\infty +\int_0^\infty \frac{2\sqrt{2}}{\sqrt{\pi}}e^{-2y_1^2}\frac{1}{2}e^{y_1^2}
%=-\frac{1}{2}+\frac{\sqrt{2}}{2}.
%\end{align}
For the third term we have 
\begin{align*}
&-2\sqrt{2}\int_0^\infty\erfc(\sqrt{2}y_1)\erfc(y_1) e^{2y_1^2}y_1\,dy_1\\
&=
-2\sqrt{2}\left[ \erfc(\sqrt{2}y_1)\erfc(y_1)\frac{1}{4}e^{2y_1^2}\right]_0^\infty+2\sqrt{2}\int_0^\infty \Big( 
\erfc(\sqrt{2}y_1)\frac{-2}{\sqrt{\pi}}e^{-y_1^2}-\frac{2\sqrt{2}}{\sqrt{\pi}}e^{2y_1^2}\erfc(y_1)
\Big)\frac{1}{4}e^{2y_1^2}\,dy_1
\\
&=\frac{\sqrt{2}}{2}-\frac{\sqrt{2}}{\sqrt{\pi}}\int_0^\infty 
\erfc(\sqrt{2}y_1)e^{y_1^2}\,dy_1-\frac{2}{\sqrt{\pi}}\int_0^\infty 
\erfc(y_1)\,dy_1.
\end{align*}
Here, the second integral cancels the second term in \eqref{3terms}, and the last integral can be evaluated using \cite[6.281.1]{gradshteyn2014table}, to give $1/\sqrt{\pi}$. Putting all together yields the value given in \eqref{I-3final}.

Similarly, it follows for $I_{-,4}(R)$ from 
$$
a - b = O\Big(\frac{1}{R}\Big), \qquad \erf(a-b) = O\Big(\frac{1}{R}\Big),
$$
that we can estimate $I_{-,4}(R) = O(1/R^2)$ without further computation. We have shown \eqref{I -34 asymp}. 

It remains to show \eqref{I +12 asymp} and \eqref{I +34 asymp}. 
For $I_{+,1}(R)$, using
$$
\exp((y_1+y_2)^2 - (a+b)^2) \leq \exp(3(y_1^2+y_2^2) - 2R^2), \qquad 1+(y_1+y_2)^2 \leq 1+4R^2, 
$$
one can decouple the integrals
\begin{align}
\begin{split}\label{I+1R}
|I_{+,1}(R)| &\leq \frac{1}{2\pi} \int_0^R \int_0^R \erfc(\sqrt{2}y_1) \erfc(\sqrt{2}y_2) e^{3(y_1^2+y_2^2)-2R^2} (1+4R^2) \,dy_2 \,dy_1 \\
& = \frac{1+4R^2}{2\pi} \Big( e^{-R^2} \int_0^R \erfc(\sqrt{2} y) e^{3y^2} \,dy \Big)^2.
\end{split}
\end{align}
When $R$ is large, the function $y \mapsto \erfc(\sqrt{2} y) e^{3y^2}$ reaches its maximum value within the interval $[0,R]$ at $y=R$. 
A saddle-point approximation for $f(y)=\log[\erfc(\sqrt{2}y)]+3y^2$, with $f^{\prime}(y)\sim 2y$ for large $y$ thus gives
\begin{align*}
\int_0^R \erfc(\sqrt{2} y) e^{3y^2} \,dy 
&\sim \frac{e^{R^2}}{\sqrt{2\pi}R}\int_0^R e^{(y-R)2R}\,dy
%\\
%&
=\frac{e^{R^2}}{\sqrt{2\pi}R}\frac{(1-e^{-2R^2})}{2R}.
%&\sim \frac{1}{2\sqrt{2\pi}} e^{R^2} \frac{1-e^{1-R^2}}{R^2-1}.
\end{align*}
Inserted into \eqref{I+1R} this demonstrates that $I_{+,1}(R)=O(1/R^2)$.
%\qquad \Rightarrow I_{+,1}(R) \lesssim \frac{1}{(2\pi)^2 R^2}, \quad R \to \infty.

For $I_{+,2}(R)$, $I_{+,3}(R)$, and $I_{+,4}(R)$, it is convenient to use light-cone coordinates 
$$
\begin{cases}
 u  = y_1 - y_2,
\\
 v  = y_1 + y_2, 
\end{cases} \qquad  \textup{i.e.} \qquad 
\begin{cases}
y_1= \frac{1}{2} (v + u) ,
\\
y_2  = \frac{1}{2} (v - u). 
\end{cases}
$$
From these transformations, we have $dy_2 \,dy_1 = \frac{R}{2} du \,ds$, where the integration domains are given by  
\begin{equation}
\label{lightcone set}
    s \in [0, 2], \qquad u \in \Big[R \abs{s - 1} - R, R - R \abs{s - 1}\Big]. 
\end{equation}
Note that as $R \to \infty$, we have
\begin{align*}
\erfc\Big( \frac{Rs+u}{\sqrt{2}} \Big) \erfc\Big( \frac{Rs-u}{\sqrt{2}} \Big) e^{R^2 s^2} (1 + c R^2 s^2) = \frac{2 c }{\pi} e^{-u^2} +O\Big(\frac{1}{R^2}\Big)
\end{align*}
and 
\begin{align*}
 a + b &= \sqrt{R^2-\Big( \frac{Rs+u}{2} \Big)^2} + \sqrt{R^2-\Big( \frac{Rs-u}{2} \Big)^2} = \sqrt{4-s^2} \, R+O\Big(\frac{1}{R}\Big),
 \\
a - b %= \sqrt{R^2-\Big( \frac{Rs+u}{2} \Big)^2} - \sqrt{R^2-\Big( \frac{Rs-u}{2} \Big)^2}
&= -\frac{us}{\sqrt{4-s^2}}+O\Big(\frac{1}{R^2}\Big).
\end{align*}
%\begin{align}
%& \erfc\Big( \frac{Rs+u}{\sqrt{2}} \Big) \erfc\Big( \frac{Rs-u}{\sqrt{2}} \Big) e^{R^2 s^2} (1 + c R^2 s^2) \sim \frac{2 c}{\pi} e^{-u^2}, \qquad \text{where } c=1 \text{ or } c=2,
%\\
%& a + b = \sqrt{R^2-\Big( \frac{Rs+u}{2} \Big)^2} + \sqrt{R^2-\Big( \frac{Rs-u}{2} \Big)^2} \sim \sqrt{4-s^2} \, R,
%\qquad %\\ a - b %= \sqrt{R^2-\Big( \frac{Rs+u}{2} \Big)^2} - \sqrt{R^2-\Big( \frac{Rs-u}{2} \Big)^2}
%\sim -\frac{us}{\sqrt{4-s^2}}.
%\end{align}
Using this, the $s$-integral in $I_{+,2}(R)$ is dominated by contributions from $0 < s < 2$. Therefore, $|s-1|<1$ and thus we can set the bounds of the $u$-integral in \eqref{lightcone set} to $\pm\infty$ and arrive at the following integral which is solvable:
\begin{align}
I_{+,2}(R)  = -\frac{R}{2 \pi^2} \int_0^2 \int_{-\infty}^{\infty} e^{-\frac{4}{4-s^2} u^2} \,du \,ds + o(R)
= -\frac{R}{4 \sqrt{\pi}} + o(R).
\end{align}

%we have $a + b = \sqrt{4-s^2} \, R + O(\frac{1}{R})$ and $\erf(a + b) = 1 + O(e^{-R^2})$ as $R \to \infty$.
For $I_{+,3}(R)$, following the same strategy as for $I_{+,2}(R)$, and together with $\erf(a + b) = 1 + O(e^{-R^2})$, we obtain the leading order asymptotic behaviour 
\begin{equation}
    I_{+,3}(R) \sim \frac{R^2}{2\pi^{3/2}} \int_0^2 \int_{-\infty}^{\infty} \sqrt{4-s^2} \, e^{-u^2} \,du \,ds = \frac{R^2}{2}.
\end{equation}
To derive the subleading terms we write 
$$
I_{+,3}(R) = \frac{R^2}{2} - I_{+,3}^{A}(R) - I_{+,3}^{B}(R) + O(1),
$$ 
where we add and subtract terms, such that the limits of the following two integrals can be obtained more easily:
\begin{align*}
I_{+,3}^{A}(R) & = \frac{R^2}{2} - \frac{R}{8\sqrt{\pi}} \int_0^{2R} \int_{\abs{v-R}-R}^{R-\abs{v-R}} \frac{4}{\pi} e^{-u^2} \sqrt{4-\frac{v^2}{R^2}} \,du \,dv,
\\
I_{+,3}^{B}(R) & = \frac{R}{8\sqrt{\pi}} \int_0^{2R} \int_{\abs{v-R}-R}^{R-\abs{v-R}} \Big[ \frac{4}{\pi} e^{-u^2} - \erfc\Big( \frac{v+u}{\sqrt{2}} \Big) \erfc\Big( \frac{v-u}{\sqrt{2}} \Big) e^{v^2} (1+2 v^2) \Big] \sqrt{4-\frac{v^2}{R^2}} \,du \,dv.
\end{align*}
In the last integral we have switched back to the unscaled $v$-coordinate.
Note that the $v$-integral in $I_{+,3}^{A}(R)$ is dominated by contributions from small $v$, and we obtain from the $u$-integral 
\begin{align}
\begin{split}
I_{+,3}^{A}(R) & = 
\frac{R^2}{2} - \frac{R}{2\pi} \int_0^{2R} \erf(R-|v-R|)
\sqrt{4-\frac{v^2}{R^2}} \,dv\\
&=
\frac{R}{2\pi} \left(\int_0^{R}+\int_R^{2R}\right) \erfc(R-\abs{v-R}) \sqrt{4-\frac{v^2}{R^2}} \,dv
\\
& = \frac{R}{2\pi} \int_0^{\infty} \erfc(v) \Big( 2 + O\Big(\frac{1}{R^2}\Big) \Big) \,dv + o(R)  = \frac{1}{\pi^{3/2}} R + o(R).
\end{split}
\end{align}
In the second step we can neglect the second integral on $[R,2R]$, as the error function then leads to an exponential suppression. This yields the last line, giving a contribution to the linear order in $R$.
Similarly, the $v$-integral in $I_{+,3}^{B}(R)$ is dominated by small $v$.
The limiting integral is easier to compute in the original $(y_1,y_2)$-coordinates where we use $\sqrt{4-(y_1+y_2)^2/R^2} = 2 + O(\frac{1}{R^2})$ to derive
$$
I_{+,3}^{B}(R)  = \frac{R}{2\sqrt{\pi}} \int_0^\infty \int_0^\infty
\Big[ \frac{4}{\pi} e^{-(y_1-y_2)^2} - \erfc(\sqrt{2}y_1) \erfc(\sqrt{2}y_2) e^{(y_1 + y_2)^2} (1+2(y_1 + y_2)^2) \Big] \,dy_2 \,dy_1 + o(R) .
$$
We evaluate the double integral analogously to $I_{-,3}(R)$ above.
The integral over the first term gives
\begin{equation*}
\frac{4}{\pi} \int_0^\infty e^{-(y_1-y_2)^2} \,dy_2 = \frac{2}{\sqrt{\pi}} \erfc(-y_1).
\end{equation*}
Via integration by parts we obtain for the second term
\begin{align*}
\int_0^\infty \erfc(\sqrt{2}y_2) e^{(y_1 + y_2)^2} (1+2(y_1 + y_2)^2) \,dy_2 
%\\
&
=-y_1e^{y_1^2}
%\left[ \erfc(\sqrt{2}y_2)(y_1+y_2)e^{(y_1 + y_2)^2} \right]_0^\infty
+\int_0^\infty\frac{2\sqrt{2}}{\sqrt{\pi}}e^{-2y_2^2}(y_1+y_2)e^{(y_1 + y_2)^2} \,dy_2\\
%&=-y_1e^{y_1^2} +\frac{2\sqrt{2}}{\pi}e^{y_1^2}\Big( 
%e^{y_1^2}2y_1\int_{-y_1}^\infty e^{-t^2}\,dt+ 
%\int_0^\infty(y_2-y_1)e^{2y_1y_2-y_2^2}\,dy_2
%\Big)\\
&=-y_1e^{y_1^2} +2\sqrt{2}y_1e^{2y_1^2}\erfc(-y_1)+\sqrt{\frac{2}{\pi}}e^{y_1^2}.
\end{align*}
The integral over $y_1$ now contains four parts
\begin{equation}\label{I+3-4parts}
\begin{split}
I_{+,3}^{B}(R) = \frac{R}{2\sqrt{\pi}} \int_0^\infty &\Big[
 \frac{2}{\sqrt{\pi}}\erfc(-y_1)+   
    \erfc(\sqrt{2}y_1) y_1e^{y_1^2}
- \erfc(\sqrt{2}y_1)2\sqrt{2}y_1e^{2y_1^2}\erfc(-y_1)\\
&\quad
-\erfc(\sqrt{2}y_1)\sqrt{\frac{2}{\pi}}e^{y_1^2}\Big]\,dy_1.
\end{split}
\end{equation}
For the third part of the integral in the first line we have, again after integration by parts
\begin{align*}
&-\int_0^\infty
2\sqrt{2}y_1e^{2y_1^2}\erfc(\sqrt{2}y_1)\erfc(-y_1)\,dy_1
\\
&=\frac{\sqrt{2}}{2} 
%\left[ \right]_0^\infty
+\int_0^\infty \frac{\sqrt{2}}{2}e^{2y_1^2}\Big(
-\frac{2\sqrt{2}}{\sqrt{\pi}}e^{-2y_1^2}\erfc(-y_1)+\erfc(\sqrt{2}y_1)\frac{2}{\sqrt{\pi}}e^{-y_1^2}
\Big)\, dy_1.
\end{align*}
The first integral cancels the first term in line one of \eqref{I+3-4parts} and the second integral cancels the fourth term in line two of \eqref{I+3-4parts}. We are left with the second term in line one of \eqref{I+3-4parts}, which follows from \eqref{firstterm}.
We thus arrive at
\begin{equation}
I_{+,3}^{B}(R) = \frac{R}{2\sqrt{\pi}} \Big( -\frac{1}{2}+\sqrt{2}\Big) 
+ o(R).
\end{equation}
Combining the leading and subleading terms, we obtain the asymptotic formula for $I_{+,3}(R)$ in \eqref{I +34 asymp}.

Finally, for $I_{+,4}(R)$ we use that as $R \to \infty,$
\begin{equation*}
a - b = -\frac{us}{\sqrt{4-s^2}} +O\Big(\frac{1}{R^2}\Big), \qquad
\erf(a - b) = \erf\Big( -\frac{us}{\sqrt{4-s^2}} \Big) +O\Big(\frac{1}{R^2}\Big).
\end{equation*}
Since the $I_{+,4}(R)$ integral is dominated by $0 < s < 2$, as before we set the bounds of the $u$-integral to $\pm\infty$, do an integration by parts and conclude
\begin{align}\begin{split}
I_{+,4}(R) & = -\frac{R}{2 \pi^{3/2}} \int_0^2 \frac{s}{\sqrt{4-s^2}} \int_{-\infty}^{\infty} u e^{-u^2} \erf\Big( \frac{us}{\sqrt{4-s^2}} \Big) \,du \,ds + o(R)
\\
& = -\frac{R}{\pi^2} \int_0^2 \frac{s^2}{4-s^2} \int_{-\infty}^{\infty}  e^{-\frac{u^2}{4-s^2}}\,du\,ds + o(R)
\\
&=-\frac{R}{4\pi^{\frac{3}{2}}} \int_0^2 \frac{s^2}{\sqrt{4-s^2}} \,ds + o(R)
= -\frac{R}{4 \sqrt{\pi}} + o(R).
\end{split}
\end{align}
This completes the proof. 
\end{proof}

We are now ready to show Theorem~\ref{Thm_GinOE inf}.

\begin{proof}[Proof of Theorem~\ref{Thm_GinOE inf}]
Eq.~\eqref{Var NR DR inf} follows from Proposition~\ref{Prop_Expression var inf} (i) by series expansion. 

For \eqref{Cov prop4.3}, recalling $\hat{a}=\sqrt{R^2-v^2}$, 
%For eq.~\eqref{GinOE inf Cov asy}, 
we have as $R \to \infty,$
\begin{align*}
&\quad \int_{0}^R  \erfc(\sqrt{2}v)   v \,e^{v^2} \Big( e^{-(\hat{a} - R)^2}-e^{-(\hat{a} + R)^2}+ \sqrt{\pi}(\hat{a} - R) \erf(\hat{a} - R) -\sqrt{\pi} (\hat{a} + R) \erf(\hat{a} + R) \Big)\,dv
\\
& \sim -\sqrt{\pi} \int_{0}^R  \erfc(\sqrt{2}v)   v \,e^{v^2}  (\hat{a} + R) \erf(\hat{a} + R) \,dv  \sim  -2\sqrt{\pi} R \int_{0}^R  \erfc(\sqrt{2}v)   v \,e^{v^2} \,dv.
%\\
%& = \sqrt{\pi} R^3 \int_{0}^1  \erfc(\sqrt{2}Rt)   t \,e^{R^2t^2}  %(\sqrt{1-t^2} + 1) \erf(R\sqrt{1-t^2} + R) \,dt 
%\\
%& \sim  \sqrt{\pi} R^3 \int_{0}^1  \erfc(\sqrt{2}Rt)   t \,e^{R^2t^2}  (\sqrt{1-t^2} + 1)  \,dt. 
\end{align*}
Then by using Proposition~\ref{Prop_Expression var inf} (ii) and 
\eqref{firstterm} 
we obtain Eq.~\eqref{GinOE inf Cov asy}, 
\begin{align*}
 \Cov \Big( \NN_\C^{(1)}(D_R),   \NN_\R^{(1)}(D_R) \Big) \sim -2 \frac{ \sqrt{2}-1 }{\sqrt{\pi}}  R. 
\end{align*}

Finally, we show Eq.~\eqref{Var NC DR inf}.
It follows from Lemma~\ref{Lem_integrals GinOE c} that 
\begin{equation}
4 \Big( I_-(R) - I_+(R) \Big) = -2 R^2 + 4\sqrt{\frac{2}{\pi}} R + o(R).
\end{equation}
On the one hand, by Proposition~\ref{Prop_EN inf}, 
\begin{equation}
2 E_\C^{(1)}(R) = 2 R^2-2 \sqrt{ \frac{2}{\pi} } R +O(1) , \qquad R \to \infty.
\end{equation}
Combining this with Proposition~\ref{Prop_Expression var inf} (iii), we obtain 
\begin{equation}
\Var \NN_\C^{(1)} (D_R)  \sim 2 \sqrt{ \frac{2}{\pi} } R. 
\end{equation}
Eq.~\eqref{linear behaviour real} follows simply from \eqref{VarDef1}. 
\end{proof}

\begin{comment}
    
Note that 
\begin{align*}
&\quad \int_D \int_{D^c} R_{2,\C}^c(z_1,z_2) \,dA(z_1)\,dA(z_2) = 4 \int_{D_+} \int_{D^c_+} R_{2,\C}^c(z_1,z_2) \,dA(z_1)\,dA(z_2)
\\
&=  -  \int_{D_+} \int_{D^c_+} \erfc(\sqrt{2}y_1)  \erfc(\sqrt{2}y_2) 
\Big( |S(z_1,z_2)|^2-|S(z_1,\bar{z}_2)|^2  \Big) \,dA(z_1)\,dA(z_2) 
\\
&=    -  \int_{D_+} \int_{D^c_+} \erfc(\sqrt{2}y_1)  \erfc(\sqrt{2}y_2) 
|S(z_1,z_2)|^2 \,dA(z_1)\,dA(z_2) 
\\
&\quad + \int_{D_+} \int_{D^c_-} \erfc(\sqrt{2}y_1)  \Big(2-\erfc(\sqrt{2}y_2)\Big) 
|S(z_1,z_2)|^2 \,dA(z_1)\,dA(z_2) ,
\end{align*}
which gives 
\begin{align*}
&\quad \int_D \int_{D^c} R_{2,\C}^c(z_1,z_2) \,dA(z_1)\,dA(z_2)
\\
&=  -  \int_{D_+} \int_{D^c} \erfc(\sqrt{2}y_1)  \erfc(\sqrt{2}y_2) 
|S(z_1,z_2)|^2 \,dA(z_1)\,dA(z_2) 
\\
&\quad +2 \int_{D_+} \int_{D^c_-} \erfc(\sqrt{2}y_1)  
|S(z_1,z_2)|^2 \,dA(z_1)\,dA(z_2) 
\end{align*}

\end{comment}

\section{Full counting statistics of planar symplectic ensembles}\label{sec:proofFCS}

%\subsection{Finite-\texorpdfstring{$N$}{N} analysis}

Recall that $W \equiv W_N$ is a (possibly $N$-dependent) rotationally invariant potential that satisfies the conditions in Definition~\ref{Qsuit} and that $g: [0, \infty) \to \R$ with $W(z) = g(\abs{z})$.

\begin{proof}[Proof of Proposition~\ref{Prop_cumulant symplectic} (ii)]
We start by writing out the Laplace transform of $\NN_a $ as
\begin{align*}
\mathbb{E}_{N,W}^{(\beta=4)}\Big[ e^{u \NN_a} \Big]
&= \int_{\C^N} e^{u \sum_{k = 1}^{N} \chi_a(z_k)} \,d\mathcal{P}_{N,W}^{(4)} (\bfs{z}) 
\\
&=  \frac{1}{Z_{N,W}^{(\beta=4)}} \int_{\C^N} 
\det\begin{bmatrix} z_l^{k - 1} \\ \overline{z}_l^{k - 1} \end{bmatrix}_{k \leq 2 N, l \leq N} \, \prod_{j=1}^N (z_j-\overline{z}_j) e^{u \chi_a(z_j) -N W(z_j)} \,dA(z_j),
\end{align*}
where $\chi_a(z) = 1$ if $|z| \le a$ and $\chi_a(z) = 0$ otherwise.
Next, we simplify the $N$-fold integration via de Bruijn's formula, 
\begin{equation} \label{cumulant Pf}
\mathbb{E}_{N,W}^{(\beta=4)}\Big[ e^{ u \NN_a } \Big]
=  \frac{ N! }{Z_{N,W}^{(\beta=4)}}  \, \Pf\Big[ \int_{\C} (z^{k - 1} \overline{z}^{l - 1} - z^{l - 1} \overline{z}^{k - 1}) (z-\overline{z}) e^{u \chi_a(z) -N W(z)} \,dA(z) \Big]_{k, l =1}^{2N},
\end{equation}
see e.g. \cite[Remark 2.5]{akemann2021skew}. 
Due to the radial symmetry of $W$, we can further simplify the integrals in the Pfaffian, see also \cite{Mehta,rider2004order}.
Namely, by multiplying out $(z^{k - 1} \overline{z}^{l - 1} - z^{l - 1} \overline{z}^{k - 1}) (z-\overline{z})$ and using that $e^{u \chi_a(z) -N W(z)}$ only depends on the radius $|z|$, we obtain for the entries
\begin{align}
\begin{split}
&\quad \int_{\C} (z^{k - 1} \overline{z}^{l - 1} - z^{l - 1} \overline{z}^{k - 1}) (z-\overline{z}) e^{u \chi_a(z) -N W(z)} \,dA(z) 
\\
&= 2 \int_0^\infty (r^k r^{l - 1} \delta_{k, l - 1} - r^{k - 1} r^l \delta_{k - 1, l} - r^l r^{k - 1} \delta_{k - 1, l} + r^{l - 1} r^k \delta_{k, l - 1}) e^{u \chi_a(r) -N g(r)} r \,dr 
\\
&= 4 (\delta_{k, l - 1} - \delta_{k, l + 1}) \int_0^\infty r^{k + l} e^{u \chi_a(r) -N g(r)} \,dr. \label{entries in Pf}
\end{split}
\end{align}
Since $e^{u \chi_a(r)} = 1 - (1 - e^{u}) \chi_a(r)$, we obtain for the integral
\begin{align}
\begin{split} \label{int in Pf}
\int_0^\infty r^{k + l} e^{u \chi_a(r) -N g(r)} \,dr
& = \int_0^\infty r^{k + l} e^{-N g(r)} \,dr - (1 - e^{u}) \int_0^a r^{k + l} e^{-N g(r)} \,dr \\
& = \frac{1}{2} h_{\frac{k + l - 1}{2}} - (1 - e^{u}) \frac{1}{2} h_{\frac{k + l - 1}{2},1}(a).
\end{split}
\end{align}
Note that the indices are integer-valued for $k = l - 1$ and $k = l + 1$.
Combining \eqref{cumulant Pf}, \eqref{entries in Pf} and \eqref{int in Pf}, we arrive at 
\begin{equation}
\mathbb{E}_{N,W}^{(\beta=4)}\Big[ e^{ u \NN_a } \Big]
= \frac{N!}{Z_{N,W}^{(\beta=4)}}  \Pf\Big[ 2 (\delta_{k, l - 1} - \delta_{k, l + 1}) \big( h_{\frac{k + l - 1}{2}} - (1 - e^{u}) h_{\frac{k + l - 1}{2},1}(a) \big) \Big]_{k, l =1}^{2N}.
\end{equation}
The Kronecker deltas inside the Pfaffian lead to an antisymmetric matrix whose only non-zero entries appear above and below the main diagonal.
The Pfaffian of such a matrix is the product of every second entry above the main diagonal (i.e.\ $l = k - 1$ and then only take the even $k$):
\begin{align*}
&\quad \Pf\Big[ 2 (\delta_{k, l - 1} - \delta_{k, l + 1}) \big( h_{\frac{k + l - 1}{2}} - (1 - e^{u}) h_{\frac{k + l - 1}{2},1}(a) \big) \Big]_{k, l =1}^{2N}
\\
&= \prod_{ \substack{k=2 \\ k:\textup{ even}} }^{2 N} 2 \big( h_{k-1} - (1 - e^{u}) h_{k-1,1}(a) \big) = 2^N \prod_{j = 0}^{N-1} \big( h_{2j+1} - (1 - e^{u}) h_{2j+1,1}(a) \big).
\end{align*}
On the other hand, it follows from \cite[Remark 2.5 and Corollary 3.3]{akemann2021skew} that 
\begin{equation}
Z_{N,W}^{(\beta=4)} = N! \, 2^N \prod_{j = 0}^{N-2} h_{2 j+ 1}. 
\end{equation}
This leads to 
\begin{equation}
\mathbb{E}_{N,W}^{(\beta=4)}\Big[ e^{ u \NN_a } \Big]= \prod_{j = 0}^{N-1} \Big( 1 - (1 - e^{u}) \frac{h_{2j+1,1}(a)}{h_{2j+1}} \Big).
\end{equation}
Using $h_{j} - h_{j,1}(a) = h_{j,2}(a)$, we obtain the desired identity \eqref{eq:CGF symplectic}.

The expression \eqref{cumulants symplectic} follows along the lines of the supplementary material in \cite{lacroix2019rotating} using the series expansion 
\begin{equation}
\log \mathbb{E}_{N,W}^{(\beta=4)}\Big[ e^{ u \NN_a } \Big] =u \, E_{N,W}^{(\beta=4)}(a)+ \sum_{p = 2}^{\infty} \kappa_p^{(\beta=4)}(a) \,  \frac{u^p}{p!}. 
\end{equation}
Note that the expression \eqref{mean value symplectic} was obtained in \cite[Proposition 1.1]{akemann2022universality}.
\end{proof}

%\subsection{Asymptotic analysis for large-\texorpdfstring{$N$}{N}}

\begin{proof}[Proof of Theorem~\ref{Thm_higher cumulants}]
In \cite[Subsection 4.1]{akemann2022universality}, it was shown that if $g$ satisfies the conditions in Definition~\ref{Qsuit}, then in the bulk $a\in (0,1)$ the function $\mathcal{L}_k(a)$ has the large-$N$ asymptotic
\begin{equation}
\mathcal{L}_k(a) \sim \frac{1}{2} \erfc\Big( \frac{s}{\sqrt{2 \Delta W(a)} \, a} \Big), \qquad k = \frac{1}{2}(N a g'(a) - 1) + \sqrt{N} s,
\end{equation}
and is exponentially suppressed for $|k-\frac{1}{2}(N a g'(a) - 1)|>M\sqrt{N}$ for large $M$.
Hence for $k = 2j+1$ we get
\begin{equation}
\mathcal{L}_{2j+1}(a) \sim \frac{1}{2} \erfc\Big( \frac{2 j - \frac{1}{2} N g'(a) a + \frac{3}{2}}{\sqrt{2 \Delta W(a) N} \, a} \Big).
\end{equation}
Now we insert the asymptotic behavior of $\mathcal{L}_{2j+1}(a)$ into the finite-$N$ cumulant formula \eqref{cumulants symplectic}. 
Note that the argument of the polylogarithm is then of the form 
\begin{equation}
1 - \frac{1}{\frac{1}{2} \erfc(x)} = \frac{\erfc(x) - 2}{\erfc(x)} = -\frac{\erfc(-x)}{\erfc(x)}, \quad
\text{ with } x = \frac{2 j - \frac{1}{2} N g'(a) a + \frac{3}{2}}{\sqrt{2 \Delta W(a) N} \, a}.
\end{equation}
Next we replace the sum over $j$ in the range 
$|2j+1-\frac{1}{2}(N a g'(a) - 1)|\leq\sqrt{N} s$ that contributes to the sum 
by a Riemann integral over $x$, and we obtain
\begin{equation}
    \kappa_p^{(\beta=4)}(a) \sim (-1)^{p + 1} \frac{\sqrt{2 \Delta W(a) N} \, a}{2} \int_{-\infty}^{\infty} \Li_{1 - p}\Big( -\frac{\erfc(-x)}{\erfc(x)} \Big) \, dx
\end{equation}
as $N \to \infty$.
Note that the prefactor comes from the change of integration variable $j \to x$. 
The parity in \eqref{kappa_p} comes from that of the integrand under $x\to-x$, using the inversion formula \eqref{Li-inversion}.

Similarly, for the edge case when $a$ is given by \eqref{a edge scaling}, it follows that
\begin{equation}
    \kappa_p^{(\beta=4)}(a) \sim (-1)^{p + 1} \frac{\sqrt{2 \Delta W(a) N} \, a}{2} \int_{-\infty}^{\SS} \Li_{1 - p}\Big( -\frac{\erfc(-x)}{\erfc(x)} \Big) \, dx
\end{equation}
as $N \to \infty$.
Since we sum up to $j = N - 1$, the integration will stop at some value $x < \infty$ now, which can be derived as in \cite[Subsection 4.1]{akemann2022universality}.
Note that we use $4 + g''(1) = 4 \Delta W(1)$ after we expand the $g'(a)$-term.
\end{proof}

%%%%%%%%%%%%%%%%%%%%%%%%%%%%%%%%%%%%%%%%%%%%%%%%%%%%%%%%%%
\section{Conclusions and open questions}\label{sec:conc}

In this paper we have analysed the counting statistics of the number of eigenvalues in a centred disc of radius $R$ in the real and symplectic Ginibre ensembles. For the symplectic ensemble we determined the full counting statistics by computing the generating function for all cumulants at finite matrix dimension $N$, for a large class of rotationally invariant non-Gaussian potentials. In the large-$N$ limit we could prove that the symplectic ensemble is in the same universality class as its complex counterpart. This holds both in the bulk regime, when the radius is smaller but of the same order as the edge of the limiting support of the eigenvalues, and in the edge regime when choosing the radius in the vicinity of the edge.  

For the real Ginibre ensemble, the counting statistics is much more involved, as the correlations of its real and complex eigenvalues have to be considered separately, including their covariance. Therefore, we restricted ourselves to the Gaussian potential and computed the mean number of points at finite and large-$N$, as well as the variance in the large-$N$ limit, in both cases in the vicinity of the origin. In the large radius limit, leading into the bulk regime, we found a matching for both quantities with the universality class in the two other ensembles. In a small radius expansion, however, all three ensembles differ, reflecting their different repulsion of eigenvalues from the real axis. 
For the bulk and edge regime, based on numerical simulations we conjectured the same universality found for the symplectic ensemble to hold for the real ensemble. It remains an open problem to prove these conjectures, and to extend our computations to the full counting statistics for all cumulants within the real ensemble. {This would in principle give access to the full distribution of the number of eigenvalues in a disk of radius $R$. The large deviation regime of this distribution was studied using Coulomb gas techniques in Ref. \cite{allez2014index}. However, one expects that this large deviation regime does not carry any information about the cumulants, which as in the GinUE case \cite{lacroix2019intermediate}, are probably determined by some ``intermediate deviation regime'' of the distribution. Because of the intricate Pfaffian structure of the GinOE with two kinds of eigenvalues (complex and real) this remains a challenging task.} 

Our analysis was driven by the question of universality of the counting statistics, within the three different ensembles and for non-Gaussian deformations of the weight function. While all three Ginibre ensembles enjoy an interpretation as a static Coulomb gas in two dimensions, only for the complex Ginibre ensemble we know of a map to a quantum Hamiltonian for fermions in a rotating harmonic trap, or equivalently in the presence of a magnetic field. It remains an open question if such a map exists also for the real and symplectic Ginibre ensembles studied here, which both lead to Pfaffian point processes.

\subsection*{Acknowledgements}
The work of Gernot Akemann was partly funded by the Deutsche Forschungsgemeinschaft (DFG) grant SFB 1283/2 2021 – 317210226.
Sung-Soo Byun was partially supported by Samsung Science and Technology Foundation (SSTF-BA1401-51) and by the POSCO TJ Park Foundation (POSCO Science Fellowship).
This work was initiated during the Fourth ZiF Summer School in August 2022, and we wish to express our gratitude for the hospitality of the Centre for Interdisciplinary Research ZiF.

%%%%%%%%%%%%%%%%%%%%%%%%%%%%%%%%%%%%%%%%%%%%%%%%%%%%%%%%%%
\appendix

\section{Integrable structure of the GinOE: jpdf and origin limit}\label{App:A}

In this appendix we recall some aspects of the integrable structure of the GinOE in terms of Pfaffian determinants. 
First, we give the joint density of real and complex eigenvalues of the GinOE. This allows us to define the correlation functions of real, complex or both types of eigenvalues as in \eqref{RkDef}. Second, we provide the origin scaling of all three types of correlation functions as they will be needed in the computation of the number variance of the GinOE in this limit. 
For more details we refer to 
 \cite{byun2023progress} as well as to the original literature.

A real $N\times N$ matrix from the GinOE can have $k$ real eigenvalues $x_1,\dots,x_k\in\R$ and $2l$ complex (non-real) eigenvalues that come in complex conjugate pairs, $z_1,\dots,z_l,\bar{z}_1,\dots,\bar{z}_l\in\C\setminus \R$. Because of the sum $N=k+2l$,  $k$ has to be of the same parity as $N$, i.e. both are either even or odd, from here on. Consequently, $N-k$ is always even. 
For the probability $p_{N,k}$ that exactly $k$ out of $N$ eigenvalues are real, closed form expressions were given for instance in \cite{forrester2007eigenvalue,MR3563192}, see also \cite[Section 2.3]{byun2023progress}. 
Furthermore, since the eigenvalues in the lower half  plane are completely fixed by complex conjugation of those in the upper half plane, and in particular are not independent, we can restrict ourselves to $l$ complex eigenvalues in the upper half plane $\C_+$, with $\Im(z_1),\dots,\Im(z_l)>0$.
The partial joint density for such eigenvalues with given $N$ with fixed $k$ real and $l$ complex 
 eigenvalues was derived independently by Lehmann and Sommers \cite{lehmann1991eigenvalue} and Edelman \cite{edelman1997probability}:
\begin{equation} \label{jpdf-kl-1}
\begin{split}
\mathbf{P}_{N,k,l}^{(1)}(x_1,\dots,x_k,z_1,\dots,z_l)
&= C_{N,k,l}\prod_{i>j}^k|x_i-x_j| \prod_{i=1}^k\prod_{j=1}^l(x_i-z_j)(x_i-\bar{z}_j)\prod_{i>j}^l|z_i-z_j|^2|z_i-\bar{z}_j|^2 \\
&\quad \times \prod_{j=1}^li^{-1}(z_j-\bar{z}_j)\prod_{j=1}^k e^{-x_j^2/2}\prod_{j=1}^l\erfc\left(\frac{z_j-\bar{z}_j}{i\sqrt{2}}\right)e^{-\frac12 (z_j^2+\bar{z}_j^2)},
\end{split}
\end{equation}
where the normalisation constant reads
\begin{equation}
\label{CNkDef}
C_{N,k,l}:= \frac{2^{l-N(N+1)/4}}{k!\,l!\prod_{j=1}^{N}\Gamma(j/2)}.
\end{equation}
Note that in the second line of \eqref{jpdf-kl-1} we do not scale the exponent with $N$ (to be consistent with the references mentioned above), hence the limiting support is the unit disk with radius $\sqrt{N}$.
However for \eqref{RN1complex1} and \eqref{RN1real1} in the main part we use a scaling with $N$.
Note also that the first line in \eqref{jpdf-kl-1} together with the first factor in the second line is proportional to $|\Delta_{k+2l}(x_1,\dots,x_k,z_1,\dots,z_l,\bar{z}_1,\dots,\bar{z}_l)|$. That part of the joint density that only depends on the $z_1,\dots,z_l$ is reminiscent of the jpdf of the GinSE \eqref{jpdf4}. A Coulomb gas interpretation of the jpdf \eqref{jpdf-kl-1} at inverse temperature $2=1/(k_B T)$ was given in \cite{forrester2016analogies}.

The full joint density of $N$ eigenvalues $w_1,\dots,w_N\in\C_+\cup\R$, where it is not specified if they are real or complex,  is then given by summing over all sectors,
\begin{equation}
\mathcal{P}_{N}^{(1)}(w_1,\dots,w_N)=
%\sum_{k=0}^N\frac{1}{2}(1+(-1)^{N-k})
\sum_{\substack{k=0\\N-k \text{ even}}}^N
\mathbf{P}_{N,k,(N-k)/2}^{(1)}(x_1,\dots,x_k,z_1,\dots,z_{(N-k)/2}).
\label{jpdf1}
\end{equation}
%We have included a factor in the sum over $k$ that ensures that only $k$-values of the same parity as $N$ contribute. 
There are two equivalent ways of defining correlation functions of real, complex or mixed eigenvalues. The first introduced in \cite{sommers2007symplectic} and \cite{forrester2007eigenvalue} is to add extra source terms to the joint densities \eqref{jpdf-kl-1} to define generalised partition functions
\begin{equation}
\mathbf{Z}_{N,k,l}[u,v]:= \prod_{i=1}^k \int _{\R}dx_i u(x_i) \prod_{j=1}^l \int _{\C_+}dA(z_j) v(z_j)  \mathbf{P}_{N,k,l}^{(1)}(x_1,\dots,x_k,z_1,\dots,z_l),
\end{equation}
and then to take functional derivatives of the full generalised partition function 
\begin{equation}
\mathcal{Z}_N[u,v]:=
%\sum_{k=0}^N\frac{1}{2}(1+(-1)^{N-k})
\sum_{\substack{k=0\\N-k \text{ even}}}^N
\mathbf{Z}_{N,k,(N-k)/2}[u,v]
\end{equation}
as follows:
\begin{align}
\mathbf{R}_{N,k,\R}^{(1)}(x_1,\dots,x_k)
&:= \frac{1}{\mathcal{Z}_N[1,1]}\frac{\delta^k}{\delta u(x_1) \cdots\delta u(x_k)} \left. \mathcal{Z}_N[u,1]\right|_{u=1}, \label{RkR1def}\\
\mathbf{R}_{N,l,\C}^{(1)}(z_1,\dots,z_l)
&:= \frac{1}{\mathcal{Z}_N[1,1]}\frac{\delta^k}{\delta v(z_1) \cdots\delta v(z_l)} \left. \mathcal{Z}_N[1,v]\right|_{v=1}, \label{RlC1def}\\
\mathbf{R}_{N,k,\R,l,\C}^{(1)}(x_1,\dots,x_{k},z_1,\dots,z_l)
&:= \frac{1}{\mathcal{Z}_N[1,1]}\frac{\delta^{k+l}}{\delta u(x_1) \cdots\delta u(x_k)\delta v(z_1) \cdots\delta v(z_l)} \left. \mathcal{Z}_N[u,v]\right|_{u=1,v=1}, \label{RkRlC1Def}
\end{align}
for the real-real, complex-complex, and real-complex correlation functions, respectively. Obviously, the last line contains the first, respectively second when setting $k=0$ or $l=0$, respectively.  
Alternatively, the following sum over $L$ complex eigenvalues with $l\leq L\leq (N-K)/2$ and $k\leq K=N-2L$ real eigenvalues, for a given $N$ with fixed $k$  and $l$, was used in \cite{MR2530159}
\begin{equation}
\begin{split}
\mathbf{R}_{N,k,\R,l,\C}^{(1)}(x_1,\dots,x_{k},z_1,\dots,z_l)
&=\frac{2^l}{\mathbf{Z}_{N,k,l}[1,1]}
%\sum_{K=0}^N\frac{1}{2}(1+(-1)^{N-K})
\sum_{\substack{K=k\\N-K \text{ even}}}^N
\sum_{L=l}^{(N-K)/2}\frac{1}{(K-k)!\,(L-l)!}\\
&\times \prod_{i=1}^{K-k} \int _{\R}dx_i\prod_{j=1}^{L-l} \int _{\C_+}dA(z_j)  \mathbf{P}_{N,K,L}^{(1)}(x_1,\dots,x_K,z_1,\dots,z_L),
\end{split}
\end{equation}
where we only give the most general correlation function. 

All three types of correlation functions can be expressed as a Pfaffian determinant of a matrix valued skew-kernel. The latter is given in terms of polynomials that are skew-orthogonal with respect to a combination of an antisymmetric product on $\R^2$ and a similar product on $\C_+^2$, and we refer to \cite{forrester2007eigenvalue} for details. 
We will not reproduce them here as for finite-$N$ we only need the real and complex 1-point functions given in \eqref{RN1complex1} and \eqref{RN1real1}, in order to compute the mean number of eigenvalues in a (one- respectively two-dimensional) disc of radius $R$, see 
Proposition~\ref{Prop_EN GinOE} and Lemma~\ref{Prop_GinOE expected real} for $N$ even and odd.

Instead, we will directly present the resulting correlation functions  \cite{forrester2007eigenvalue,sommers2007symplectic,MR2530159,MR2439268} in the large-$N$ limit and in the origin scaling limit defined in \eqref{originRkC1}, \eqref{originRkR1} and \eqref{originR11}, that will be used in the computation of the limiting number variance: 

\begin{itemize}

\item \textbf{GinOE} complex $k$-point function in the origin limit
%Due to the complex conjugation symmetry, it suffices to consider the upper-half plane. In the bulk scaling limit of the GinOE  \cite{forrester2007eigenvalue,sommers2007symplectic,MR2530159,MR2439268}, the correlation function $R_{k,\C}^{(1)}$ of the complex eigenvalues is given by 
\begin{equation} \label{R k GinOE c}
R_{k,\C}^{(1)}(z_1,\cdots, z_k) =\prod_{j=1}^{k} (2\pi i \erfc(\sqrt{2}y_j)) \, \Pf \Big[
\begin{pmatrix} 
S_\C(\bar{z}_j,\bar{z}_l)   &  S_\C(\bar{z}_j,z_l) 
\\
S_\C(z_j,\bar{z}_l) & S_\C(z_j,z_l)
\end{pmatrix}  \Big]_{ j,l=1,\cdots k }, 
\end{equation}
where 
    \begin{equation} \label{S bulk}
   S_\C(z,w) := \frac{z-w}{2\sqrt{2\pi}} e^{-(z-w)^2/2},
    \end{equation}
    see e.g. \cite[Eq.~(6.21)]{MR2430570}. 
   \smallskip  
  \item \textbf{GinOE}  real $k$-point function  in the origin limit
%On the other hand, the correlation function $R_{k,\R}^{(1)}$ of the real eigenvalues is given by 
\begin{equation} \label{R k GinOE r}
R_{k,\R}^{(1)}(x_1,\cdots, x_k) =\Pf \Big[
\begin{pmatrix} 
D_\R(x_j,x_l)   &  S_\R(x_j,x_l) 
\\
-S_\R(x_l,x_j) & \tilde{I}_\R(x_j,x_l)
\end{pmatrix}  \Big]_{ j,l=1,\cdots k },
\end{equation}
where 
\begin{equation}\label{SR def} 
S_\R(x,y):= \frac{1}{ \sqrt{2\pi} } e^{-(x-y)^2/2}
\end{equation}
and 
\begin{align}\label{DR def}
D_\R(x,y)&:= -\frac{\pa}{\pa y} S_\R(x,y) = - \frac{x-y}{\sqrt{2\pi}} e^{-(x-y)^2/2}, 
\\
\begin{split}
\label{IR def}
\tilde{I}_\R(x,y)&:= \int_{x}^{y} S_\R(t,y)\,dt +\frac12 \operatorname{sgn}(x-y)
%\\
%&\quad 
= -\frac12 \erf \Big( \frac{x-y}{\sqrt{2}} \Big) +\frac12 \operatorname{sgn}(x-y), %= \frac12 \operatorname{sgn}(x-y) \erfc\Big( \frac{|x-y|}{\sqrt{2}} \Big),  
\end{split}
\end{align}
see e.g. \cite[Eq.~(2.36)]{byun2023progress}.
  \smallskip  
\item \textbf{GinOE} mixed real and complex $(k,l)$-point function in the origin limit \cite{MR2530159}
%In general, $(k,l)$-correlation functions $R_{k,\R,l,\C}^{(1)}$ for $k$ real and $l$ complex eigenvalues are given by 
\begin{equation}
 \label{R k1k2 rc}
 R_{k,\R,l,\C}^{(1)}(x_1,\dots,x_{k},z_1,\dots,z_{l}) 
= \Pf \begin{bmatrix}
[\mathcal{K}_\R^{(1)}(x_j,x_l)]_{j,l=1}^{k} &
[\mathcal{K}_{\R,\C}^{(1)}(x_j,z_l)]_{\substack{j=1,\dots,k\\ l=1,\dots,l}} \\
\Big ( - [\mathcal{K}_{\R,\C}^{(1)}(x_j,z_l)]_{\substack{j=1,\dots,k\\ l=1,\dots,l}} \Big )^T &
[\mathcal{K}_{\C}^{(1)}(z_j,z_l)]_{j,l=1}^{l}
\end{bmatrix}
\end{equation}
where 
\begin{align}
&\mathcal{K}_{\C}^{(1)}(z,w):= 2\pi i (\erfc(\sqrt{2}\im z) \erfc(\sqrt{2} \im w) )^{1/2}  \begin{pmatrix} 
S_\C(\bar{z},\bar{w})   &  S_\C(\bar{z},w) 
\\
S_\C(z,\bar{w}) & S_\C(z,w)
\end{pmatrix},
\\
&\mathcal{K}_\R^{(1)}(x,y):= \begin{pmatrix} 
D_\R(x,y)   &  S_\R(x,y) 
\\
-S_\R(x,y) & \tilde{I}_\R(x,y)
\end{pmatrix}, 
\\
&\mathcal{K}_{\R,\C}^{(1)}(x,z) := \frac{1}{\sqrt{2}} (\erfc(\sqrt{2} \im z) )^{1/2}
\begin{pmatrix} (z-x) e^{-(x-z)^2/2} & i (\bar{z} - x) e^{-(x-\bar{z})^2/2} \\
- e^{-(x-z)^2/2} &
-i e^{-(x-\bar{z})^2/2} \end{pmatrix}.
\end{align}
\end{itemize}

Notice that in the large-$N$ limit in the bulk of the spectrum the GinOE becomes a DPP that agrees with the GinUE \cite{MR2530159}. The same holds true for the GinSE \cite{akemann2019universal}. However, as the integration domain for the computation of the number variance in the bulk limit includes the real axis, this does not automatically imply that the variances of all three ensembles agree as well.

%%%%%%%%%%bibliography%%%%%%%%%%%%%%%%%%%%%%%%%%%%%%%%%%%
\bibliographystyle{abbrv}
\bibliography{RMTbib}
%%%%%%%%%%%%%%%%%%%%%%%%%%%%%%%%%%%%%%%%%%%%%%%%%%%%%%%%%%%%%%	
\end{document}